\documentclass[11pt]{article}
\usepackage{color}
\usepackage[colorlinks=true,citecolor=blue]{hyperref}
\usepackage{cite}
\usepackage{amssymb,amsmath,amsfonts,amsthm,amsthm}
\usepackage{mathtools}
\usepackage[english]{babel}
\usepackage[utf8]{inputenc}
\usepackage[margin=0.90in]{geometry}
\usepackage[margin=0.2in]{caption}
\usepackage{graphicx}
\usepackage{times,fourier}
\usepackage[caption = false]{subfig}
\usepackage{extarrows}
\usepackage[toc, title]{appendix}
\usepackage{tikz}

\usepackage{enumitem}
\usepackage{pifont}
\usepackage{pgfplots}
\usepackage{extarrows}
\usepackage{overpic}
\usepackage{caption}      
\usepackage{subfig}  

\makeatletter
\renewcommand\section{\@startsection {section}{1}{\z@}%
	{-2ex \@plus -1ex \@minus -.2ex}%
	{1ex \@plus.1ex}%
	{\normalfont\bf\sffamily}}
\renewcommand\subsection{\@startsection{subsection}{2}{\z@}%
	{-1.75ex\@plus -0.4ex \@minus -.2ex}%
	{0.6ex \@plus .1ex}%
	{\normalfont\small\bf\sffamily}}
\renewcommand\subsubsection{\@startsection{subsubsection}{3}{\z@}%
	{-0.6ex\@plus -0.2ex \@minus -.2ex}%
	{0.4ex \@plus .1ex}%
	{\normalfont\normalsize\it}}
\renewcommand\paragraph{\@startsection{paragraph}{4}{\z@}%
	{0.2ex \@plus0.2ex \@minus0.1ex}{-0.5em}%
	{\normalfont\normalsize\bfseries}}
\def\ps@headings{%
	\let\@oddfoot\@empty
	\let\@evenfoot\@empty
	\def\@evenhead{\small\sffamily\thepage\hfil\slshape\leftmark}%
	\def\@oddhead{\small\sffamily{\slshape\rightmark}\hfil\thepage}%
	\let\@mkboth\markboth
	\def\chaptermark##1{\markboth{{\ifnum \c@secnumdepth >\m@ne
				\if@mainmatter \@chapapp\ \thechapter. \ \fi \fi ##1}}{}}%
	\def\sectionmark##1{\markright {{\ifnum \c@secnumdepth >\z@
				\thesection. \ \fi ##1}}}}
\def\fbf#1{\setbox0=\hbox{$#1$}\kern-0.10\wd0
	\lower0.02em\copy0\kern-\wd0 \lower0.02em\hbox{\kern+0.04em\copy0}\kern-\wd0
	\raise0.00em\copy0\kern-\wd0 \raise0.00em\hbox{\kern-0.04em\box0}}
\makeatother

\advance\textwidth -8mm
\advance\textheight -10mm
\advance\hoffset 4mm
\advance\voffset 4mm
\advance\textheight 0mm
\advance\parskip 4pt
\numberwithin{equation}{section}
1

\newtheorem{theorem}{Theorem}[section]
\newtheorem{problem}{RH problem}[section]

\newtheorem{lemma}[theorem]{Lemma}
\newtheorem{remark}[theorem]{Remark}
\newtheorem{proposition}[theorem]{Proposition}
\newtheorem{assumption}[theorem]{Assumption}

\def\maketitle{\par\noindent{\LARGE\bf\sffamily\thetitle}\\[1.4ex]
	{\large\theauthor}\\[0.6ex]
	\textit{\thetextinfo}\\[0.2ex]
	{\small\today}\par\vglue1.4\bigskipamount}
\def\title#1{\def\thetitle{#1}}
\def\author#1{\def\theauthor{#1}}
\def\textinfo#1{\def\thetextinfo{#1}}
\def\be{\begin{equation}}
	\def\ee{\end{equation}}
\def\bse{\begin{subequations}}
	\def\ese{\end{subequations}}

\definecolor{deeppurple}{rgb}{0.5, 0, 0.7}

\def\gl{\mathrel{\mathpalette\overl@ss>}}

\def\@#1{{\mathbf{#1}}}
\def\_#1{{\mathsf{#1}}}
\let\~=\tilde
\let\==\bar
\let\^=\hat
\let\<=\langle
\let\>=\rangle

\def\max{\mathop{\rm max}\nolimits}

\def\note[#1]{\marginpar{\color{red}[#1]}}

\def\XXint#1#2#3{{\setbox0=\hbox{$#1{#2#3}{\int}$}
		\vcenter{\hbox{$#2#3$}}\kern-.5\wd0}}

\def\1{{\bf 1}}

\makeatother

\let\trueparagraph=\paragraph
\def\paragraph#1{\par\smallskip\trueparagraph{\rm\textbf{#1}}}
\advance\textheight 12mm
\advance\voffset -6mm
\allowdisplaybreaks

\begin{document}
\pagestyle{plain}
\title{\bf   Painlevé Asymptotics of the Focusing Nonlinear Schr\"odinger Equation with a Finite-Genus Algebro-Geometric Background}
\author{\large
	Ruihong Ma and Engui Fan}
\textinfo
{\normalsize\it
1:School of Mathematical Sciences, Peking University,
Beijing 100871, P.R. China\\\normalsize\it
2: School of Mathematical Sciences, Fudan University, Shanghai 200433, P.R. China \\\normalsize\it
	Authors' Email: faneg@fudan.edu.cn}
\maketitle
\kern-4ex
\begin{abstract}
We investigate the Cauchy problem for the focusing nonlinear Schr\"odinger (NLS) equation
\begin{equation}
	iq_t(x,t)+q_{xx}(x,t)+2|q(x,t)|^2q(x,t)=0,\quad x\in\mathbb{R},\quad t\ge0,\nonumber
\end{equation}
subject to initial data $  q(x,0)$ satisfying the asymptotic boundary conditions
\begin{equation}\label{eq:boundary}
q(x,0)  \sim q^{alg}(x,0) \quad \text{as} \quad x \to \pm\infty,\nonumber
\end{equation}
where $q^{alg}(x,t)$ denote finite-genus algebro-geometric quasi-periodic solutions of the focusing NLS equation.
Employing the Riemann--Hilbert (RH) approach combined with the Deift--Zhou nonlinear steepest descent method, we analyze the long-time asymptotic behavior of solutions to this Cauchy problem. Our analysis distinguishes between two cases based on the genus $n$ of the underlying hyperelliptic Riemann surface:
\begin{itemize}
	\item[(i)] \textbf{Odd genus backgrounds:} When the background solutions $q^{alg}(x,0)$ correspond to hyperelliptic curves of odd genus $n = 2s+1$ $(s \in \mathbb{N}_0)$, we identify distinct asymptotic regions in the $(x,t)$-plane characterized by the variable $\xi = x/t$, within which the leading-order asymptotics is expressed in terms of the second Painlev\'e transcendent.
	\item[(ii)] \textbf{Even genus backgrounds:} When the background solutions $q^{alg}(x,0)$ correspond to hyperelliptic curves of even genus $n = 2s$ $(s \in \mathbb{N})$, the asymptotic behavior in regions selected by $\xi$ is described in terms of parabolic cylinder functions.
\end{itemize}
Specifically, we derive the leading-order asymptotics and establish explicit error bounds for the solution $q(x,t)$ as $t \to +\infty$, uniformly for $x \in \mathbb{R}$.
\noindent
	\\
	{\bf Keywords:} Focusing nonlinear Schr\"odinger equation; Riemann-Hilbert problem; Deift-Zhou steepest descent method;  Painlevé asymptotic;\\
	{\bf   Mathematics Subject Classification:} 35Q51; 35Q15; 35C20; 37K15; 37K40.
\end{abstract}
\baselineskip=16pt
\tableofcontents%
\section{Introduction}\label{sec1}
We consider the Cauchy problem for the focusing nonlinear Schr\"odinger(NLS)  equation
\begin{subequations}
	\begin{align}\label{eq:fNLS}
		&iq_t(x,t)+q_{xx}(x,t)+2|q(x,t)|^2q(x,t)=0,\quad x\in\mathbb{R},\quad t\ge0,\\\label{eq:q_0}
		&q(x,0)=q_0(x),
\end{align}
\end{subequations}
where the initial data $q_0(x)$ behaves asymptotically like a finite-genus algebro-geometric solution \cite{KS17}
\begin{equation}\label{eq:a-alg-initi}
	q(x,0)\sim q^{alg}(x,0),\quad x\to\pm\infty.
\end{equation}
Here, $q^{alg}(x, t)$ represents an $n$-genus algebro-geometric solution to the focusing NLS equation \eqref{eq:fNLS} with spectral cuts $\left[\bar{E}_k, E_k\right], k\in\mathcal{K}=\{0,\dots, n\},$  explicitly given by
\begin{equation}\label{eq:q-alg}
	q^{alg}(x,t)=-\frac{i}{2}\operatorname{Im}\left(\sum_{k=0}^nE_k\right)e^{2i(f_0x+g_0t)}\frac{\Theta(\varphi(\infty)+d)\,\Theta(\varphi(\infty)-c(x,t;\phi)-d)}{\Theta(\varphi(\infty)-d)\,\Theta(\varphi(\infty)+c(x,t;\phi)+d)},
\end{equation}
where
\begin{equation}
	d=\varphi(D)+K,\quad c(x,t;\phi)=\{ c_1,\ldots,c_n\},\quad c(x,t;\phi)\,|_{1\leq k\leq n}=-\frac{xC_k^f+tC_k^g+\phi_k}{2\pi}.
\end{equation}
Here, $\Theta(z)$ is the Riemann theta function defined on the genus-$n$ Riemann surface $\mathcal{X}$ ($n\in\mathbb{N}_0$) associated with the hyperelliptic curve \eqref{genus-surface}. The Abel map $\varphi: \mathcal{X} \to \mathbb{C}^n$ is given in \eqref{eq:abel}; the vector-valued Riemann constant $K \in \mathbb{C}^n$ and the pole divisor $D$ are defined on $\mathcal{X}$. The constant vectors $C_k^f, C_k^g, \phi_k \in \mathbb{C}^n$ and the constants $f_0, g_0$ are determined by the initial data.

In order for the formulation of the Cauchy problem \eqref{eq:fNLS}-\eqref{eq:q_0} to be complete, it has to be supplemented with boundary conditions for $t>0$. These boundary conditions are the natural extensions of \eqref{eq:a-alg-initi} to $t>0$ and are given by
\begin{equation}\label{eq;com-con}
	\int_{\mathbb{R}}\left|q(x, t)-q^{alg}(x, t)\right| dx<\infty \quad \text { for all } t \geq 0.
\end{equation}

The focusing NLS equation \eqref{eq:fNLS} is an integrable nonlinear partial differential equation  amenable to the inverse scattering transform   method \cite{MJH81}. The Riemann--Hilbert (RH)   formulation of this method has proven highly effective for analyzing diverse properties of integrable systems, particularly the long-time asymptotics of Cauchy problems \cite{PS94,PD93} and the small-dispersion limit \cite{PS97,SK03}.
The nonlinear steepest descent method, developed rigorously by Deift and Zhou in \cite{PD93}, was subsequently refined in \cite{PS94,PS97} through the introduction of the so-called \emph{$g$-function mechanism}. This mechanism facilitates the core objective of the method: to construct a sequence of transformations converting the original RH problem into an explicitly solvable model RH problem. Furthermore, this approach yields not only the leading-order asymptotic term but also provides a systematic
framework for deriving rigorous error estimates via appropriate local (parametrix) RH problems.

The nonlinear steepest descent method has been successfully applied to investigate long-time asymptotics for the focusing NLS equation \cite{DB21,GGD16,SLD21,BLM,MR18} and various other integrable systems \cite{AM08,KXY24}. Specific studies include: the long-time dynamics of step-like initial data $q(x,0)=0$ for $x\leq 0$ and $q(x,0)=A\mathrm{e}^{-2iBx}$ for $x>0$ with $A>0$, $B\in\mathbb{R}$ \cite{DM11}; the focusing NLS equation with step-like oscillating background exhibiting genus-3 structure \cite{MAB22}.
Even for problems with \emph{zero boundary conditions}, where solutions decay to zero as $|x|\to\infty$, the $(x,t)$-plane contains various narrow transition regions exhibiting qualitatively distinct asymptotic behaviors \cite{PS94}. Consequently, one anticipates analogous transition regions in the more intricate setting of non-zero boundary conditions \cite{GG14,GMM17,F14,F15,F16}. Notably, a transition mechanism involving so-called \emph{asymptotic solitons} associated with spectral arc endpoints has been recently reported \cite{BM19,KV19}. Additionally, the long-time asymptotics of periodic (and, slightly more generally, algebro-geometric finite-gap) solutions of the doubly infinite Toda lattice under a short-range perturbation was studied in \cite{KSS12}; long-time asymptotics of perturbed finite-gap Korteweg-de Vries solutions was studied in \cite{MLT2012,McN2021}; Painlev\'e-XXXIV asymptotics for the defocusing NLS equation with finite-genus algebro-geometric background have been investigated in \cite{FL26}.

 The aim of the present  work is to establish long-time asymptotics for   the Cauchy problem \eqref{eq:fNLS}-\eqref{eq:q_0} and (\ref{eq;com-con}) for the
 focusing NLS equation by implementing the nonlinear steepest descent method. Our  main result  is stated as follows.
	
\begin{theorem}\label{theorem-1}
	Given a finite-genus algebro-geometric solution $q^{alg}(x,t)$ of the focusing NLS equation \eqref{eq:fNLS}, as given in \eqref{eq:q-alg}, let $q(x,t)$ be the solution of the Cauchy problem \eqref{eq:fNLS}--\eqref{eq:q_0} subject to the condition  \eqref{eq;com-con}. Then, as $t\to\infty$, we have the following asymptotics for $q(x,t)$ in the regions described below:
	\begin{enumerate}
		\item
For the odd genus $n = 2s+1$ $(s \in \mathbb{N}_0)$ background solution, when the distribution of stationary phase points is given by equation \eqref{eq:zero-j} and \eqref{h-pp}, consider the regime
$|\xi-\xi_{j}|t^{2/3} \leq C$ with fixed $C>0$,
where the critical line $\xi_{j}$ and coalescence point $z_{j}$ are given in \eqref{eq:xireal}. Then
\begin{align}
	q(x,t) = -\delta^2(\infty)e^{2i(f_0x+g_0t)}\tilde{Q}(x,t) + 2e^{2i(f_0x+g_0t)}\delta^2(\infty)\frac{t^{-1/3}u(\varpi)}{\sqrt[3]{\theta'''(z_{j})}(z-z_{j})} + \mathcal{O}(t^{-1/2}),
\end{align}
where $\delta(\infty)$ and $\tilde{Q}(x,t)$ are given in \eqref{delt-ingg} and \eqref{tile-Q}, respectively. Here $u(\varpi)$ denotes the solution of the Painlev\'{e} II equation \eqref{PE-II} with asymptotic behavior
\begin{equation}
	u(\varpi) = \mathrm{Im}\,\rho\,\mathrm{Ai}(\varpi) + \mathcal{O}\left(\varpi^{-1/4}e^{-\frac{4}{3}\varpi^{3/2}}\right), \quad \varpi\to +\infty,
\end{equation}
where $\rho$ is given in \eqref{eq:rrrho}. The Painlev\'{e} II parameter $\varpi$ is defined by
\begin{align}
	\varpi = i\frac{\overline{r(\bar{z}^{\mathrm{R}}_{j})}}{1+|r(z_{j})|^2}\delta^2(z_{j}),
\end{align}
and the local variable $\lambda$ is as in \eqref{eq:trans-z-k}.
		\item For the even genus $n = 2s$ $(s \in \mathbb{N})$ background solution, when the distribution of stationary phase points $\kappa^{\mathrm{R}}_{j}$ ($j=1,2$) is given by equations \eqref{eq:zero-j} and \eqref{eq:theta_prime}, consider the regime $|\xi|<d$ with fixed $d>0$. Then
		\begin{align}
			q(x,t) = 2e^{2i(f_0x+g_0t)}e^{2(\ln\delta(\infty)+ig(\infty))\sigma_3}\left(\sum_{j=1,2}\frac{\beta_{12}(\kappa^{\mathrm{R}}_j)}{2\sqrt{t}(z-\kappa^{\mathrm{R}}_{j})\psi_{\kappa^{\mathrm{R}}_{j}}(z)} + \hat{Q}(x,t)\right) + \mathcal{O}\left(\frac{\ln t}{t}\right),
		\end{align}
		where $\delta(\infty)$ is defined in \eqref{delt-ing}, $g(\infty)$ in \eqref{ginfty}, $\hat{Q}(x,t)$ in \eqref{eq:Nglo-solution1}, $\beta_{12}(\kappa^{\mathrm{R}}_j)$ in \eqref{beta-12-21}, and $\psi_{\kappa^{\mathrm{R}}_{j}}(z)$ in \eqref{eq:dsdss} and \eqref{eq:dsds}.
	\end{enumerate}
\end{theorem}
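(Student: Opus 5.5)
We would prove Theorem \ref{theorem-1} by the Deift--Zhou nonlinear steepest descent method applied to the Riemann--Hilbert problem for the perturbed algebro-geometric background. The setup has three stages.

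\textbf{Setup.} First, we formulate the RH problem on the genus-$n$ spectral curve. The Jost solutions normalized by the background Baker--Akhiezer function give a sectionally meromorphic matrix $M(z;x,t)$. Its jumps lie on $\mathbb{R}$, with the reflection coefficient $r(z)$, and on the bands $[\bar E_k,E_k]$. The potential is recovered from the residue at infinity, $q=2\lim_{z\to\infty} z M_{12}$.

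\textbf{Conjugation and deformation.} Second, we introduce the $g$-function (equivalently the phase $\theta(z;\xi)$ built from abelian differentials of the curve). Its real critical points, the stationary phase points, are located as in \eqref{eq:zero-j}. We then conjugate by $\delta(z)^{\sigma_3}$, where $\delta$ is the scalar RH solution on the interval where $|e^{it\theta}|$ has the wrong sign. This removes the jump there and produces the factor $\delta^2(\infty)$. Next we open lenses along the steepest-descent contours. Since $r$ is not analytic, we use $\bar\partial$-extensions of $r$. The problem then splits into three pieces:
\begin{itemize}
\item an outer model problem on the bands, solved explicitly with Riemann theta functions, whose solution contributes $\tilde Q$ (resp.\ $\hat Q$);
\item local problems at the stationary points;
\item a pure $\bar\partial$ problem.
\end{itemize}

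\textbf{Odd genus.} In case (i), the parity of the genus forces two stationary points to coalesce at $z_j$ as $\xi\to\xi_j$, where $\theta'(z_j)=\theta''(z_j)=0$. In the window $|\xi-\xi_j|t^{2/3}\le C$ we rescale $\lambda=(\tfrac{t\theta'''(z_j)}{2})^{1/3}(z-z_j)$, as in \eqref{eq:trans-z-k}. With this scaling the phase becomes $\tfrac{4}{3}\lambda^3+\varpi\lambda$ up to $\mathcal{O}(t^{-1/3}\lambda^4)$ corrections. The local parametrix is then the Painlev\'e II RH problem, with jump parameter $\rho$ built from $r(z_j)$ and $\delta^2(z_j)$. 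Matching it to the outer solution leaves a small-norm error problem. Its first moment produces the $t^{-1/3}u(\varpi)$ term. The residual estimates come from two sources:
\begin{itemize}
\item bounding $r(z)-r(z_j)=\mathcal{O}(|z-z_j|)$ on the scaled contour;
\item estimating the $\bar\partial$ part, using $\iint |\bar\partial R|e^{-t|\mathrm{Im}\,\theta|}\lesssim t^{-1/2}$.
\end{itemize}
Together these give $\mathcal{O}(t^{-1/2})$.

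\textbf{Even genus.} In case (ii) the two stationary points $\kappa_j^{\mathrm R}$ stay separated for $|\xi|<d$. Each point has a nondegenerate quadratic phase, so we use parabolic cylinder parametrices with the local scaling $\psi_{\kappa_j}(z)$. Their $t^{-1/2}$ coefficients $\beta_{12}$ enter through the residue of the error function. The cross terms and the $\bar\partial$ remainder give $\mathcal{O}(t^{-1}\ln t)$.

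\textbf{Main obstacle.} We expect the hardest step to be the uniform matching in the Painlev\'e regime. Two things must hold there. First, the coalescing points must remain inside a disk of radius $\sim t^{-1/3}$. Second, the Painlev\'e II solution must be pole-free for real $\varpi$, which follows from $|{\rm Im}\rho|<1$, the Ablowitz--Segur class. In addition, the theta-function outer parametrix must be bounded uniformly in $x,t$; we would establish this from the reality and symmetry properties of the divisor.
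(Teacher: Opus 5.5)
There is a genuine gap in part (ii). You treat the even-genus regime as two separated, nondegenerate stationary points of the original phase $\theta$, with a theta-function outer model on the original bands. But for $|\xi|<d$ the infinite branch of $\{\operatorname{Im}\theta=0\}$ runs into the cut $\Gamma_{n/2}$. With the original phase, the cuts can then no longer be kept on the zero level set of $\operatorname{Im}\theta$ with a sign chart that lets all lens factors decay, which is why the paper abandons the original phase.

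The paper's proof uses a genus-raising $g$-function mechanism instead:
\begin{itemize}
\item It adds the arcs $[\bar E_{00},E_{00}]$ and $[\bar E_{01},E_{01}]$ and introduces the regularized phase \eqref{eq:theta_prime}. The points $\kappa^{\mathrm R}_{1,2}$ in the statement are stationary points of this phase, which is why \eqref{eq:theta_prime} appears in the hypothesis.
\item It deforms the new cuts onto level curves of $\operatorname{Im}\theta$. It then builds $g(z)$ with real constants $\alpha_1,\dots,\alpha_4$, together with the real constants $\beta^\pm$ of \eqref{con-a-b}, so that the jumps on the new arcs become constant (Lemma~\ref{lemma-g}).
\item The outer problem is solved on the genus-$(n+2)$ surface. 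This is where $\hat Q$ in \eqref{eq:Nglo-solution1} comes from (it contains $E_{00}$, $E_{01}$ and the $t\beta^-$-dependent vector $\hat c$), and also $g(\infty)$.
\item Airy parametrices are needed at $E_{00},E_{01},\bar E_{00},\bar E_{01}$, in addition to the parabolic cylinder parametrices at $\kappa^{\mathrm R}_{1,2}$.
\end{itemize}
Without this construction your argument cannot reach the leading term stated in (ii). At best it yields a theta quotient on the original curve, which is not $\hat Q$.

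Part (i) has a milder version of the same omission. In the paper, $\delta$ in \eqref{eq:ddeltaa} also contains Cauchy integrals over $\Gamma\setminus\tilde\Gamma$ of the logarithms of the ratios of $\overline{a_{\pm}(\bar s)}$. These are chosen so that, after conjugation and lens opening, the jumps on $\Gamma\setminus\tilde\Gamma$ reduce to the identity by \eqref{eq:tilder--1}--\eqref{eq:tilder--2}. The outer model therefore lives only on $\tilde\Gamma=\bigcup_{k>[n/2]}\Gamma_k$. A separate Cauchy integral $g$ with a real constant $\alpha$ is then needed to make the $\delta$-conjugated jumps on $\tilde\Gamma$ constant. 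That is why $\tilde Q$ involves only the $E_k$ with $k>[n/2]$ and the shifted vector $\tilde c$. A generic ``outer model on the bands'' does not produce it, and identifying this $g$ with the phase $\theta$ conflates two different objects.

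There are also smaller problems in part (i):
\begin{itemize}
\item You leave out the third real stationary point $\kappa^{\mathrm R}_3$. Its local contribution is of order $t^{-1/2}$, and it is this, not only the Taylor and $\bar\partial$ remainders, that sets the $\mathcal{O}(t^{-1/2})$ error.
\item Your scaling is off by a constant. Matching $\frac{t}{6}\theta'''(z_j)(z-z_j)^3=\frac43\lambda^3$ requires $\lambda=(t\theta'''(z_j)/8)^{1/3}(z-z_j)$, as in \eqref{eq:trans-z-k}. With your normalization the cubic coefficient is $\frac13$, so $\varpi$ and the prefactor of $t^{-1/3}u(\varpi)$ change.
\item Your $\bar\partial$ extensions are a legitimate alternative but unnecessary here. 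The paper's lens factors use $r$ continued off $\mathbb{R}$, which is available because under Assumption~\ref{assumption-1-1} the perturbation is compactly supported.
\end{itemize}
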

\begin{assumption}\label{assumption-1-1}
Our results rely on several assumptions, summarized below for the reader's convenience.
	\begin{enumerate}
		\item   Throughout the paper, we assume that the initial data contains no solitons.
		\item For some $C>0$, the initial data $q_0(x)$ is assumed to be smooth and satisfies
		\begin{equation}\label{eq:finite_perturbation}
			q_0(x) = q^{alg}(x,0) \quad \text{for} \quad \pm x > C.
		\end{equation}
	\end{enumerate}
\end{assumption}
\subsection{Notation}
We introduce some notations that will be used in this paper:
\begin{enumerate}
	\item As usual, the three Pauli matrices are defined by
	\begin{equation*}
		\sigma_1=\begin{pmatrix}
			0&1\\
			1&0
		\end{pmatrix},\quad \sigma_2=\begin{pmatrix}
			0&-i\\
			i&0
		\end{pmatrix},\quad \sigma_3=\begin{pmatrix}
			1&0\\
			0&-1
		\end{pmatrix}.
	\end{equation*}
	\item $\hat{\sigma}_3$ acts on a matrix $A$ by $e^{\hat{\sigma}_3} A=e^{\sigma_3} A e^{-\sigma_3}$.

	\item The Cauchy operator is defined by
	\begin{align}\label{C_V_E}
		C _\Sigma f(z) = \frac{1}{2\pi i} \int_{\Sigma} \frac{f(s)}{s-z} \, ds, \quad z \in \mathbb{C}\setminus\Sigma.
	\end{align}
\item The notation $D_\rho\left(z_0\right):=\left\{z:\left|z-z_0\right|<\rho\right\}$ represents the open disk of radius $\rho$ centered at $z_0$ in the complex plane.
	\item For a real parameter $a$, we define the branch logarithm $\ln _a z$ with branch cut placed along the ray $\arg z=a$. Explicitly,
	$$
	\ln _a z=\ln |z|+i \arg z, \quad \arg z \in(a, a+2 \pi] .
	$$
	In particular, the standard principal branch $\ln z$ corresponds to the choice $a=-\pi$, i.e.,
	$$
	\ln _{-\pi} z=\ln z .
	$$
\end{enumerate}
\subsection{Plan of Proof for Theorem \ref{theorem-1}}
The remainder of this paper is organized as follows. In Section~\ref{sec-2}, we provide a rigorous definition of the algebro-geometric solution $q^{alg}(x,t)$ and formulate the RH problem~\ref{RH2-2} characterizing the Cauchy problem \eqref{eq:fNLS}--\eqref{eq:q_0} via the spectral analysis of its Lax pair. In Section~\ref{sec-3}, we focus on the odd genus case. Utilizing the properties of the phase function $\theta(z)$ and the distribution of stationary phase points, we employ the nonlinear steepest descent method to investigate the Painlev\'{e} asymptotics of $q(x,t)$ in the transition regime where two real stationary phase points coalesce. In Section~\ref{sec-4}, we address the even genus case. By analyzing the admissible configurations of branch cuts $\Gamma$ and their influence on the long-time asymptotics, we determine the asymptotic behavior of the solution in the vicinity of $\xi=0$ by employing the same methodological framework.
\section{The Riemann-Hilbert Formulism}\label{sec-2}

In view of the central role of the RH problem in the inverse scattering method,  we
 try to characterize the algebro-geometric background solution  in terms of the solutions of appropriate RH problems.

\subsection{The algebro-geometric background solution}
The focusing NLS equation \eqref{eq:fNLS} admits the following Lax pair \cite{BL21}:
\begin{subequations}
\begin{align}\label{eq:Lax-x}
	&	\Phi_x+i z \sigma_3 \Phi  =Q\Phi, \\
	&\Phi_t+2 i z^2 \sigma_3 \Phi  =\tilde{Q}\Phi,
\end{align}
\end{subequations}
where $\Phi=\Phi(x, t, k)$ is a $2 \times 2$ matrix-valued function, $z \in \mathbb{C}$ is a spectral parameter and
\begin{align}
	Q =\begin{pmatrix}
		0 & q(x, t) \\
		-\bar{q}(x, t) & 0
	\end{pmatrix}, \quad \tilde{Q}=\begin{pmatrix}
	    i|q(x,t)|^2&2zq(x,t)+iq_x(x,t)\\
        -2z\bar q(x,t)+i\bar q(x,t)&-i|q(x,t)|^2
	\end{pmatrix}.
\end{align}

Let a set of points $\left\{E_k, \bar E_k\right\}_{k=0}^n$ in the complex plane be given, and let $\mathcal{X}$ be the Riemann surface of genus $n$ defined by the equation
$$
w^2=P(z)=\prod_{k=0}^n\left(z-E_k\right)\left(z-\bar{E}_k\right) \equiv z^{2(n+1)}+P_{2 n+1} z^{2 n+1}+P_{2 n} z^{2 n}+\ldots . .+P_0,
$$
with cuts along the arcs
$\Gamma=\cup_k\Gamma_k, \ \Gamma_k=\left(\bar E _k, E_k\right)$ connecting $\bar E_k$ with $ E_k$,
where $E_k=B_k+iA_k, k\in\mathcal{K}$ with oriented upwards. The correspond Riemann surface   is introduce in Appendix \ref{APP-A}

Let us introduce the background eigenfunction $\Phi^{alg}(z)$, which is a solution of the form   \cite{KS17}
$$
\Phi^{alg}(z)=e^{\left(i f_0 x+i g_0 t\right) \sigma_3} M^{alg}(z) e^{-(i f(z) x+ig(z) t) \sigma_3},
$$
where $ f_0$, $ g_0$, $f(z), g(z)$  and $M^{alg}(z)$   can  be determined as follows.

\begin{problem}$f(z)$ and $g(z)$ satisfy the following properties.

\begin{enumerate}
\item
$f(z)$ and $g(z)$ satisfy the jump conditions
\begin{align}\label{f+g+add}
	f_{+}(z)+f_{-}(z)=C_k^f, \quad
	 g_{+}(z)+g_{-}(z)=C_k^g,\ \ z\in\Gamma_k.
\end{align}
with some real constants $C_k^f,C_k^g$.
\item $f(z)$ and $g(z)$ are analytic in $\mathbb{C}\setminus\Gamma$ and such that
\begin{align}\label{properties--11}
	f(z)=z+f_0+O(1 / z),\quad g(z)=2 z^2+g_0+O(1 / z).
\end{align}

\item $f(z)$ and $g(z)$ have to satisfy the symmetry condition
\begin{equation}\label{eq:symmetry-fg}
    \overline{f(\bar z)}=f(z),\quad \overline{g(\bar z)}=g(z).\end{equation}
\end{enumerate}
\end{problem}
This  RH problem  admits a solution   of  the form
\begin{align}\label{eq:f(z)}
	f(z)&=\int_{\bar{E}_0}^z\frac{\hat{f}(s)}{w(s)}\,ds=\int_{\bar{E}_0}^z \frac{s^{n+1}+\hat{f}_n s^n+\hat{f}_{n-1} s^{n-1}+\cdots+\hat{f}_0}{w(s)} \,d s,
	\\
	g(z)&=\int_{\bar{E}_0}^z \frac{\hat{g}(s)}{w(s)}\,ds=\int_{\bar{E}_0}^z \frac{4 s^{n+2}+\hat{g}_{n+1} s^{n+1}+\hat{g}_n s^n+\cdots+\hat{g}_0}{w(s)} \,d s,
\end{align}
 where  $\hat{f}_0, \hat{f}_1, \ldots, \hat{f}_{n-1}$ and $\hat{g}_0, \hat{g}_1, \ldots, \hat{g}_{n-1}$ are   determined by the normalization conditions on $f(z)$ and $g(z)$:
	\begin{equation}\label{eq:f-g-contour}
\int_{E_k}^{\bar{E}_k} d f=0 ,\quad\int_{E_k}^{\bar{E}_k}dg=0,\quad k=0,1, \ldots,n  .
\end{equation}
which   yields
	\begin{equation}\label{hat-ff}
	\hat{f}_n=\frac{P_{2 n+1}}{2},\qquad
	\hat{g}_{n+1}=6 P_{2 n+1}, \qquad \hat{g}_n=2 P_{2 n}-\frac{3}{2} P_{2 n+1}^2.
	\end{equation}
	In the standard construction of finite gap solutions, it is typically required that $\hat{f}(z)$ and $\hat{g}(z)$ have real coefficients to ensure that the resulting solutions are real valued.
 Also note that ${\rm Im}\,f(E_k)={\rm Im}\, f(\bar E_k)=0$, one has $\operatorname{Im} f_{+}(z)=-\operatorname{Im} f_{-}(z)>0$ for $z\in\Gamma$.

The background solution $M^{alg}(z)$ satisfying following RH problem.
\begin{problem}
Construct a meromorphic function $M^{alg}(z):\mathbb{C} \setminus\Gamma$ such that:
	\begin{enumerate}
		\item $M^{alg}(z)=I+\mathcal{O}(z^{-1}),$\quad $|z|\to\infty$.
		\item 	For each $z\in\Gamma$, the boundary values $M_\pm^{alg}(z)$ satisfy the jump relation
		$$M_+^{alg}(z)=M_-^{alg}(z)J_k^{alg}(z),\quad z\in\Gamma_k,$$
		where
	$$
J^{alg}_k(z)=\begin{pmatrix}
	0 & i e^{-2i\pi c_k} \\
	i e^{2i\pi c_k} & 0
\end{pmatrix}=e^{-i t\theta(z)}
\begin{pmatrix}
	0 & i e^{-i\phi_k} \\
	i e^{i\phi_k} & 0
\end{pmatrix}e^{i t\theta(z)},
$$
and
	\begin{equation}\label{eq:theta(z)}
		\theta(z)=\theta(z ; \xi):=\left(f(z)-f_0\right) \xi+\left(g(z)-g_0\right), \quad \xi=\frac{x}{t} ,
		\end{equation}
where $ c_0=0$ and $c=\left\{c_k \mid c_k= \frac{x\left(C_k^f-2 f_0\right)+t\left(C_k^g-2 g_0\right)+\phi_k}{2 \pi},\,1\leq k\leq n\right\}$
and $\phi_k$ is real constant.
		\item The function $M^{alg}(z)$ has singularities at the endpoints of\,\,\,$\Gamma_k$ of order at most $|z-E_k|^{-1/4}$ or $|z-\bar{E}_k|^{-1/4}$.
	\end{enumerate}
\end{problem}
Solving this RH problem yields  $M^{alg}(z)$ as follows
$$
\begin{aligned}
	M^{alg}(z)= & \frac{1}{2}\left(\begin{array}{cc}
		\frac{1}{\Lambda_{11}(\infty)} & 0 \\
		0 & \frac{1}{\Lambda_{22}(\infty)}
	\end{array}\right)  \times\left(\begin{array}{cc}
		\left(\nu(z)+\nu^{-1}(z)\right) \Lambda_{11}(z) & \left(\nu(z)-\nu^{-1}(z)\right) \Lambda_{12}(z) \\
		\left(\nu(z)-\nu^{-1}(z)\right) \Lambda_{21}(z) & \left(\nu(z)+\nu^{-1}(z)\right) \Lambda_{22}(z)
	\end{array}\right) .
\end{aligned}
$$
where
\begin{equation}\label{eq:kappa}
\nu(z)=\sqrt[4]{\prod_{k=0}^n \frac{z-E_k}{z-\bar{E}_k}}, \quad z \in \mathbb{C} \backslash \Gamma,
\end{equation}
and for  $s=1,2$, we have
$$
\Lambda_{s1}(z)=\frac{\Theta\left(\varphi(z)+c+d_s\right)}{\Theta\left(\varphi(z)+d_s\right)}, \quad \Lambda_{s2}(z)=\frac{\Theta\left(-\varphi(z)+c+d_s\right)}{\Theta\left(-\varphi(z)+d_s\right)}, \quad z \in \mathbb{C} \backslash \Gamma.
$$
Here, $\Theta(z)$ is the Riemann theta function defined on the genus-$n$ Riemann surface $\mathcal{X}$  associated with the hyperelliptic curve \eqref{genus-surface}. The Abel map $\varphi: \mathcal{X} \to \mathbb{C}^n$ is given in \eqref{eq:abel}, where $d_1=\varphi(\mathcal{D})+K,d_2=-\varphi(\mathcal{D})-K$, the vector-valued Riemann constant $K \in \mathbb{C}^n$ and the pole divisor $\mathcal{D}$ are defined on $\mathcal{X}$.
\begin{remark}
In this paper, without loss of generality, we assume that all endpoints $B_k$ are equal for the cuts $\Gamma_k$.
	\end{remark}
\subsection{The basic RH problem}
We next consider the solutions of the Lax pair \eqref{eq:Lax-x} with $q(x,t)$ being the solution of Cauchy problem \eqref{eq:fNLS}--\eqref{eq:q_0}   with the finite-genus algebro geometric background $q^{alg}(x,t)$. Let
\begin{align}\nonumber
M^{\pm}(x,z)&=e^{i(f_0x+g_0t)\hat\sigma_3}M^{alg}(x,z)\\\label{eq:M-pm}
&+\int_{\pm\infty}^x \Phi^{alg}(x,z) \Phi^{alg}( s, z)^{-1} Q\left(q-q^{alg}\right)(s, t) M^{\pm}(s,z)e^{-i\left(f(z)-f_0\right)(s-x) \sigma_3} ds.
\end{align}
In particular, we construct the two Jost solutions given by
\[
\Phi^{\pm}(x,z)=M^{\pm}(x,z) e^{-i\left[\big(f(z)-f_0\big) x+\big(g(z)-g_0\big) t\right] \sigma_3},
\]
which constitute a pair of linearly independent solutions to the Lax pair \eqref{eq:Lax-x}. Furthermore, the following symmetry relation holds:
\[
M^{\pm}(x,z)=\sigma_2 \overline{M^{\pm}(x,\overline{z})}\sigma_2,\quad z\in\mathbb{R}.
\]
Thus, there exists a scattering matrix $S(z)$ independent of $(x,t)$ for $z\in\mathbb{R}\setminus\Gamma$ such that
\begin{equation}
	\Phi^{+}(x,z)=\Phi^{-}(x,z)S(z),\quad z\in\mathbb{R}\setminus\Gamma,
\end{equation}
or equivalently,
$$
M^{+}(x,z)=M^{-}(x,z) e^{-i\left(\left(f(z)-f_0\right) x+\left(g(z)-g_0\right) t\right) \hat{\sigma}_3} S(z),\quad z\in\mathbb{R}\setminus\Gamma,
$$
where
\begin{equation}
	S(z)=\begin{pmatrix}
		a(z)&-\overline{b(\bar z)}\\
b(z)        &\overline{a(\bar z)}
	\end{pmatrix},\quad {\rm det}\,S(z)=|a(z)|^2+|b(z)|^2=1,
\end{equation}
the coefficients $a(z)$ and $b(z)$ can be expressed as
$$
\begin{aligned}
	& a(z)=  \operatorname{det}\left(M_1^{+}(x,z), M_2^{-}(x,z)\right),\\[2pt]
& b(z)=-e^{-2 i\left(\left(f(z)-f_0\right) x+\left(g(z)-g_0\right) t\right)} \operatorname{det}\left(M_1^{+}(x,z), M_1^{-}(x,z)\right),
\end{aligned}
$$
where $M_1^\pm$ and $M_2^\pm$ denote  the first and second  columns of the matrix $M^\pm$ respectively.    The reflection coefficient $r(z)$ is defined by
\begin{equation}
	r(z)=-\frac{b(z)}{a\overline{(\bar z)}}.
\end{equation}
For $z\in\Gamma_k$, 
  the boundary values of $M^{\pm}(x,z)$  on the contour $\Gamma_k$ satisfy either of the following jump conditions:
\begin{equation}\label{eq:scatter-gamma}
	\begin{cases}
		M^+_\pm(x,z)=M^-(x,z)\tilde S_\pm(z),\\
		M^+(x,z)=M^-_\pm(x,z)\tilde S_\pm(z).
	\end{cases}
\end{equation}
Here the scattering matrices $\tilde S_\pm(z)$   take the form
\begin{equation}\label{eq:S-matrix}
	\tilde{S}_\pm(z)=
	\begin{pmatrix}
		a_{\pm}(z) & -\overline{b_{\pm}(\overline{z})}\\[4pt]
		b_{\pm}(z) & \overline{a_{\pm}(\overline{z})}
	\end{pmatrix}, \qquad\text{with}\qquad \det\tilde S_\pm(z)=1.
\end{equation}
The existence, analyticity and symmetries of the Jost functions and scattering data cane proven directly.
Here we list their properties as follows \cite{FL26,MAB22}.


\begin{proposition}  The Jost functions and scattering data admit the following properties.
	\begin{enumerate}
		\item  The functions $M_{1}^{-}(x,z)$, $M_{2}^{+}(x,z)$
and $\overline{a(\bar{z})}$ (resp. $M_{1}^{+}(x,z)$, $M_{2}^{-}(x,z)$ and $a(z)$) admit analytic continuation to
\[
\Omega^{+}:=\mathbb{C}^{+}\cap\{\operatorname{Im}(f(z)-f_{0})>0\}\setminus\Gamma \quad \left(\text{resp. }\Omega^{-}:=\mathbb{C}^{-}\cap\{\operatorname{Im}(f(z)-f_{0})<0\}\setminus\Gamma\right),
\]
possessing continuous boundary values on $\mathbb{R}\setminus\Gamma$ and satisfying the normalization
$$\begin{pmatrix} M_{1}^{\mp}(z),  M_{2}^{\pm}(z) \end{pmatrix}\to I, \ \ z\to\infty \ \ {\rm in} \ \ \ \mathbb{C}^{\pm}.$$

 \item As $z \rightarrow \infty$, we have
		 $$
		 a(z)=1+\mathcal{O}\left(z^{-1}\right),\quad z\in\mathbb{C}^-\cup\mathbb{R},\quad
		 \quad b(z),r(z)=\mathcal{O}\left(z^{-1}\right),\quad z\in\mathbb{R}
		 $$

	\item  	
	For $M^+_\pm(x,z)=M^-(x,z)\tilde{S}_+(z)$ on each branch cut $z\in\Gamma_k$, one has
	\begin{equation}
		a_+(z)=-ie^{ic_k}\overline{b_-(\overline{z})},\quad
		b_+(z)= ie^{ic_k}\overline{a_+(\overline{z})},
	\end{equation}
	along with
	\begin{equation}\label{eq:tilder--1}
		r_{+}(z)-r_{-}(z)=\dfrac{ic_k}{\overline{a_{+}(\overline{z})}\,\overline{a_{-}(\overline{z})}}.
	\end{equation}
	The jump matrix of $N(z)$ on $\Gamma_k$, denoted $V_k^\dagger(z)$, is given by
	\begin{equation}
		V^\dagger_k(z)=
		\begin{pmatrix}
			\frac{\overline{a_-(\bar{z})}}{\overline{a_+(\bar{z})}} & ie^{-2\pi ic_k}\\
			0 & \frac{\overline{a_+(\bar{z})}}{\overline{a_-(\bar{z})}}
		\end{pmatrix},\quad z \in \Gamma_k \cap \mathbb{C}^{+},
		\quad \begin{pmatrix}
			\frac{{a_+({z})}}{{a_-({z})}} & 0\\
			ie^{2\pi ic_k} & \frac{{a_-({z})}}{{a_+({z})}}
		\end{pmatrix},\quad z \in \Gamma_k \cap \mathbb{C}^{-}.
	\end{equation}
	For $M^+(x,z)=M^-_\pm(x,z)\tilde{S}_-(z)$, we instead have
	\begin{equation}
		a_+(z)=-ie^{-ic_k}b_-(z),\quad
		b_+(z)= -ie^{ic_k}a_-(z),
	\end{equation}
	and the relation
	\begin{equation}\label{eq:tilder--2}
		\frac{r_+(z)}{1+|r_+(z)|^2}-\frac{r_-(z)}{1+|r_-(z)|^2}= ie^{-i c_k}.
	\end{equation}
	In this case, the jump matrix $V_k^\ddagger(z)$ reads
	\begin{equation}
		V^\ddagger_k(z)=
		\begin{pmatrix}
			1 & 0 \\
			\frac{ie^{2\pi i c_k}}{\overline{a_+(\overline{z})a_-(\overline{z})}} & 1
		\end{pmatrix},\quad z \in \Sigma_k \cap \mathbb{C}^{+},\quad \begin{pmatrix}
			1 & \frac{i e^{-2\pi i c_k}}{a_{+}(z) a_{-}(z)} \\
			0 & 1
		\end{pmatrix},\quad z \in \Sigma_k \cap \mathbb{C}^{-},
	\end{equation}
	\end{enumerate}
\end{proposition}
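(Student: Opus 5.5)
The plan is to read everything off the Volterra integral equations \eqref{eq:M-pm} and the Wronskian representations of $a$ and $b$, treating the finite-gap kernel $\Phi^{alg}(x,z)\Phi^{alg}(s,z)^{-1}$ as a known object. Under Assumption~\ref{assumption-1-1}, the perturbation $q-q^{alg}$ has compact support in $x$, so integrability of the kernel is never at issue; only its growth in $z$ matters.

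For item 1, I would rewrite the kernel of \eqref{eq:M-pm} column by column. For the first column of $M^-$ the exponential factor is $e^{2i(f(z)-f_0)(x-s)}$ with $s<x$. For the second column of $M^+$ it is $e^{-2i(f(z)-f_0)(x-s)}$ with $s>x$. Both factors are bounded exactly when $\operatorname{Im}(f(z)-f_0)\ge0$. Off $\Gamma$, the background factors $M^{alg}$ and $(M^{alg})^{-1}$ are analytic and bounded away from the endpoints, so the Neumann series converges uniformly on compact subsets of $\Omega^+$. This gives analyticity there and continuous boundary values on $\mathbb{R}\setminus\Gamma$. The normalization follows from $M^{alg}\to I$ and from $f(z)=z+f_0+O(1/z)$ in \eqref{properties--11}, which makes the integral term $O(1/z)$ by one integration by parts. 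The analogous argument in $\Omega^-$ handles $M_1^+$ and $M_2^-$. Then $a(z)=\det(M_1^+,M_2^-)$ is analytic in $\Omega^-$, and $\overline{a(\bar z)}$ in $\Omega^+$ by the symmetry $M=\sigma_2\overline{M(\bar z)}\sigma_2$.

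For item 2, I would substitute the large-$z$ expansions of the columns into the determinant formulas, giving $a=1+O(1/z)$. For $b$, the same integration by parts uses the oscillation $e^{2i(f-f_0)s}$ together with the smoothness of $q_0$, giving $b=O(1/z)$ on $\mathbb{R}$. Then $r=-b/\overline{a(\bar z)}=O(1/z)$ because $a$ is bounded away from zero near infinity.

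Item 3 is the main obstacle. On $\Gamma_k$ the background solution jumps by $J_k^{alg}$. Using $f_++f_-=C_k^f$ and $g_++g_-=C_k^g$, the exponentials $e^{-i\theta\sigma_3}$ of the two sides are related by $\sigma_1$-type swaps times phases $e^{\pm i c_k}$. The Jost function $M^-$ (or $M^+$) has no jump on a given cut in the first (resp.\ second) case. Inserting the jump of $M^{alg}$ into \eqref{eq:M-pm} then shows $M_+^\pm = M_-^\pm \sigma_1$ up to diagonal phases. Comparing the two expressions $M^-\tilde S_+$ and $M^-\tilde S_-\cdot(\text{jump})$ gives the stated relations between $a_\pm$ and $b_\pm$, and dividing yields \eqref{eq:tilder--1} and \eqref{eq:tilder--2}. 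The bookkeeping of the phases $e^{\pm ic_k}$ versus $e^{\pm 2\pi i c_k}$ needs care.

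Finally, I would define $N$ as the sectionally meromorphic matrix $(M_1^-/\overline{a(\bar z)},M_2^+)$ in $\Omega^+$ and $(M_1^+,M_2^-/a)$ in $\Omega^-$. Computing $N_-^{-1}N_+$ on $\Gamma_k\cap\mathbb{C}^\pm$ with the above relations produces the triangular jumps $V_k^\dagger$ and $V_k^\ddagger$.
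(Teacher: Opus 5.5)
Your plan is essentially the paper's: the paper only cites the literature for items 1--2, and your Volterra/Neumann-series and integration-by-parts sketch is the standard direct argument behind that citation; for item 3, its proof following RH problem~\ref{RH2-2} does exactly what you propose---it takes $M^-$ continuous on the cut, reads $S_+=S_-J$ off \eqref{eq:M-pm}, and identifies the triangular jump of $N$ by expanding $M^+_{2\pm}=s_{12,\pm}M_1^-+s_{22,\pm}M_2^-$ with $s_{22,\pm}=\overline{a_\pm(\bar z)}$. One caution that applies equally to the paper: because $M^+$ and $M^-$ are built on the same $M^{alg}$ (at $t=0$, $M^-=e^{if_0x\hat\sigma_3}M^{alg}$ for $x<-C$), \eqref{eq:M-pm} forces both to jump on every $\Gamma_k$, so the one-sided case in item 3 can only be taken as a hypothesis, not derived, and your step ``inserting the jump of $M^{alg}$ into \eqref{eq:M-pm}'' must be applied only to the Jost function assumed to jump.
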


As in the case of zero background, the analytic and asymptotic properties of $M^\pm(z)$ suggest that we introduce the 2$\times$2 matrix valued function   by
\begin{equation}\label{eq:RH-con}
	 N(z)=N(x,z):= \begin{cases}\left(\frac{M_1^{-}(z)}{\overline{a(\bar z)}}\,\,\,M_2^{+}(z)\right), & z \in \mathbb{C}^{+}, \\
			\left(M_1^{+}(z)\,\,\, \frac{M_2^{-}(z)}{a(z)}\right), & z \in \mathbb{C}^{-},\end{cases}
\end{equation}
which  satisfies the following RH problem.
	\begin{problem}\label{RH2-2}
Construct a meromorphic function $N(z):\mathbb{C}\setminus (\mathbb{R}\cup\Gamma)$such that:
	\begin{enumerate}
		\item $N(z)=I+\mathcal{O}(z^{-1}),$\quad $|z|\to\infty$.
		\item 	For each $z\in\mathbb{R}\,\cup\,\Gamma$, the boundary values $N_\pm(z)$ satisfy the jump relation
		$$N_+(z)=N_-(z)J(z).$$
		where
		\begin{equation}\label{eq:jummp-2}
		J(z)=
		\begin{cases}
			\displaystyle
			\begin{pmatrix}
				1+|r(z)|^2 & \overline{r(\bar{z})} e^{-2i t \theta(z)}\\[4pt]
				r(z) e^{2i t \theta(z)} & 1
			\end{pmatrix},
			\quad z\in \mathbb{R}\setminus\Gamma,\\[6pt]
	V^\dagger(z), \quad z\in\Gamma_k\cap\mathbb{H}^-,\quad V^\ddagger(z),z\in\Gamma_k\cap\mathbb{H}^+,
		\end{cases}
	\end{equation}
where $\mathbb{H}^{+}=\{\operatorname{Re} z>0\}$ and $\mathbb{H}^{-}=\{\operatorname{Re} z<0\}$ denote the open right and left half-planes, respectively.
		\item The function $N(z)$ has singularities at the endpoints of\,\,$\Gamma_k$ of order at most $|z-E_k|^{-1/4}$ or $|z-\bar{E}_k|^{-1/4}$.
	\end{enumerate}
\end{problem}
\begin{proof}
 	For $z\in\Gamma_k\in\mathbb{H}^-$, setting $x=t=0$, one has
	\[
	S_\pm(z)=\big(M^-(z)\big)^{-1}M_\pm^+(z).
	\]
	Combining this relation with \eqref{eq:M-pm}, we arrive at
	\begin{equation}
		S_+(z)=S_-(z)
		\begin{pmatrix}
			0 & i e^{-\phi_k} \\
			i e^{\phi_k} & 0
		\end{pmatrix}.
	\end{equation}
	By virtue of \eqref{eq:RH-con}, the jump relation across $\Gamma\cup\mathbb{C}^+$ at $x=t=0$ reads
	\begin{equation}
		\left(\frac{M_1^{-}(z)}{\overline{a_+(\bar{z})}} \quad M_{2+}^{+}(z)\right)
		=\left(\frac{M_1^{-}(z)}{\overline{a_-(\bar{z})}} \quad M_{2-}^{+}(z)\right)
		\begin{pmatrix}
			\frac{\overline{a_-(\bar{z})}}{\overline{a_+(\bar{z})}} & \gamma(z) \\
			0 & \dfrac{\overline{a_+(\bar{z})}}{\overline{a_-(\bar{z})}}
		\end{pmatrix}
	\end{equation}
	for some scalar-valued function $\gamma(z)$. Consequently,
	\begin{equation}\label{eq:c_2-undefined}
		\frac{M_{2+}^{+}(z)}{\overline{a_+(\bar{z})}}-\frac{M_{2-}^{+}(z)}{\overline{a_-(\bar{z})}}
		=\frac{\gamma(z)}{\overline{a_+(\bar{z})a_-(\bar{z})}}\, M_1^-(z).
	\end{equation}
	We now determine the explicit form of $\gamma(z)$. From the scattering relation \eqref{eq:scatter-gamma}, we write
	\begin{equation}\label{M-s}
		M_{2\pm}^+(z)=s_{12,\pm}\,M_1^-(z)+s_{22,\pm}\,M_2^-(z).
	\end{equation}
	Since $\det M^\pm(z)=1$, it follows that
	\begin{equation}\label{M-ss}
		s_{22,\pm}=\det\big(M_1^-(z),\,M_{2\pm}^+(z)\big)=\overline{a_\pm(\overline{z})},\quad z\in\Gamma\cap\mathbb{C}^+.
	\end{equation}
	Substituting \eqref{M-s} and \eqref{M-ss} yields again
	\[
	\frac{M_{2+}^{+}(z)}{\overline{a_+(\bar{z})}}-\frac{M_{2-}^{+}(z)}{\overline{a_-(\bar{z})}}
	=\frac{\gamma(z)}{\overline{a_+(\bar{z})a_-(\bar{z})}}\, M_1^-(z).
	\]
	Comparing this identity with \eqref{eq:c_2-undefined} and using the structure from the scattering matrix factorization, we conclude $\gamma(z)=ie^{-\phi_k}$. Using the same method, we can prove that for $z\in\Gamma_k\cap\mathbb{H}^+$. This completes the proof.
\end{proof}
The solution $q(x,t)$ of the initial value problem \eqref{eq:fNLS}-\eqref{eq:q_0} can be expressed in terms of the solution of the   RH problem \ref{RH2-2} as follows:
$$
q(x,t)=2ie^{2i(f_0x+g_0t)}\lim _{z \rightarrow \infty} z N_{12}(z).
$$
\subsection{The distribution of stationary phase points}
For large $t$, the asymptotic behavior of the jump matrix varies across distinct regions of the $(x,t)$-plane according to the sign of $\operatorname{Im}\theta(z)$. This sign structure is determined by the spectral parameter $\xi = x/t$ through the branch structure of $\theta(z)$ defined in \eqref{eq:theta(z)}.

The function $\theta(z)$ possesses branch cuts along the contours $\Gamma_k$ and satisfies the Schwarz reflection symmetry $\theta(z)=\overline{\theta(\bar z)}$. Consequently,   the endpoint values satisfy
\begin{equation}\label{theta-tt}
\operatorname{Im}f(E_k) = \operatorname{Im}g(E_k) = 0, \quad\operatorname{Im} \theta\left(E_k\right)=\operatorname{Im} \theta\left(\bar{E}_k\right)=0,
\end{equation}
implying that $f(E_k)$, $g(E_k)$, and $\theta(E_k)$ are all real, and the boundary values on the cuts satisfy the jump relation $\operatorname{Im} \theta_{+}(z)=-\operatorname{Im} \theta_{-}(z)>0$ for $z \in \Gamma$.

We give the asymtptotic as $z\to\infty$ as follows
\begin{align}\label{eq:large-theta}
		\theta(z)\sim 2z^2+\xi z+\theta_c(\infty,\xi)+\mathcal{O}(z^{-1}),
 \end{align}
where
\begin{equation}\theta_c(\infty,\xi)=\int_{\bar E_0}^{\infty}\left(\frac{h(z)}{w(z)}-4z-\xi\right) d z+2 A_0^2-2 B_0^2-(2 B_0+f_0)\xi-g_0
\end{equation}
is a real function of $\xi$.
These properties imply that the level set $\operatorname{Im}\theta(z)=0$ possesses two infinite branches: the real axis itself, and a secondary branch asymptotics to the vertical line $\operatorname{Re}z=-\frac{1}{4}\xi$. Henceforth, the term ``infinite branch'' refers exclusively to this latter component. The intersection points of the real axis with infinite branche is termed the real stationary phase point. Broadly speaking, this paper investigates how distinct configurations of the real stationary phase point give rise to qualitatively different asymptotic regimes.

 Differentiating $\theta(z)$ defined in \eqref{eq:theta(z)} yields
\begin{equation}
	\theta'(z) = \xi f'(z) + g'(z) = \frac{\xi \hat{f}(z) + \hat{g}(z)}{w(z)},
\end{equation}
hence, stationary phase  points satisfy
\begin{equation}\label{eq:primethet}
	  h(z;\xi) \equiv  \xi \hat{f}(z) + \hat{g}(z) = 0.
\end{equation}
Factoring $h(z;\xi)$, we obtain
\begin{equation}\label{eq:zero-j}
	h(z;\xi) =\prod_{l=1}^{r}(z-\kappa_l^{\mathrm{R}}(\xi))\prod_{s=1}^{\frac{n+2-r}{2}}(z-\kappa_s^{\mathrm{C}}(\xi))(z-\bar{\kappa}_s^{\mathrm{C}}(\xi)),\quad r\leq n+2,
\end{equation}
where $\kappa_l^{\mathrm{R}}(\xi)\in\mathbb{R}$ $(l\in\mathcal{L}=\{1,\dots,r\})$ and $\kappa_s^{\mathrm{C}}(\xi)\in\mathbb{C}^+\setminus\mathbb{R}$ denote the real and complex stationary phase points, respectively. Depending on the value of $\xi$, these stationary phase points may collide on the real axis or bifurcate into the complex plane; see Figures~\ref{figure-intial}, \ref{fff}, and
\ref{FIguree45}.

For generic $\xi\in\mathbb{R}$, the number $r$ of real stationary phase points satisfies $r\in\{1,3,\dots,n+2\}$ if $n$ is odd, and $r\in\{2,4,\dots,n+2\}$ if $n$ is even. As $\xi\to\pm\infty$, the phase points approach the $n+1$ roots of $\hat{f}(z)=0$ together with "infinite branch". At $\xi=0$, they reduce to the $n+2$ roots of $\hat{g}(z)=0$. As $\xi$ varies, complex conjugate pairs may collide on or separate from the real axis; hence the phase points can only appear or disappear in such pairs. The following lemma provides the coalescence condition.

\begin{lemma}\label{lem:collision}
Let $z_j\in\mathbb{R}$ denote a collision point of two stationary phase points on the critical line $\xi=\xi_j$, which corresponds to the region $|\xi-\xi_j|t^{2/3}<C$ for $C>0$ in the $(x,t)$-plane, where two stationary phase points coalesce at $z_j$. Suppose the phase function
	\[
	\theta(z;\xi)=\bigl(f(z)-f_0\bigr)\xi+\bigl(g(z)-g_0\bigr),
	\]
	where ${f}(z)$ and ${g}(z)$ are given in \eqref{eq:f(z)}
	satisfies the condition
	\begin{equation}\label{eq:transversality}
\partial_{\xi}\theta(z_j;\xi_j)\neq 0.
	\end{equation}
	Then the collision at $z_j$ occurs if and only if the pair $(z_j,\xi_j)\in\mathbb{R}^2$ satisfies the system
	\begin{equation}\label{eq:interior}
		\theta'(z_j;\xi_j)=0,\qquad \theta''(z_j;\xi_j)=0,\qquad \theta'''(z_j;\xi_j)\neq 0,
	\end{equation}
 where primes denote $z$-derivatives. Equivalently, the critical line is given by
	\begin{equation}\label{eq:xireal}
		\xi_j=-\frac{g'(z_j)}{f'(z_j)}=-\frac{\hat{g}(z_j)}{\hat{f}(z_j)}=-\frac{4 z_j^{n+2}+\hat{g}_{n+1} z_j^{n+1}+\hat{g}_n z_j^n+\cdots+\hat{g}_0}{z_j^{n+1}+\hat{f}_n z_j^n+\hat{f}_{n-1} z_j^{n-1}+\cdots+\hat{f}_0},
	\end{equation}
where $\{\hat f_j\}_{j=1}^n$ and $\{\hat g_j\}_{j=1}^{n+1}$ are related to the finite-genus algebro-geometric solution given by \eqref{eq:f-g-contour},	then eliminating $\xi_j$ yields the Wronskian condition
	\begin{equation}\label{eq:wronskian}
		\hat{f}'(z_j)\hat{g}(z_j)-\hat{f}(z_j)\hat{g}'(z_j)=0.
	\end{equation}
	This algebraic equation is of degree $2n+2$; consequently, there exist exactly $2n+2$ collision points $z_j$, indexed by $j\in\{1,\dots,2n+2\}$.
\end{lemma}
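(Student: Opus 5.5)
The plan is to work entirely with the factorization $\theta'(z;\xi)=h(z;\xi)/w(z)$, with $h(z;\xi)=\xi\hat f(z)+\hat g(z)$ as in \eqref{eq:primethet}. Since $w(z)$ does not vanish on $\mathbb{R}$ (all branch points $E_k,\bar E_k$ lie off the real axis, with $A_k\neq0$), the real stationary phase points are exactly the real zeros of the polynomial $h(\cdot;\xi)$, which has degree $n+2$ in $z$. Two real stationary phase points coalesce at $z_j$ when $\xi=\xi_j$ precisely when $z_j$ is a zero of multiplicity at least two of $h(\cdot;\xi_j)$. The equivalence then reduces to translating conditions on $h$ into conditions on $\theta$. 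From $\theta'=h/w$ we get $\theta''=h'/w-hw'/w^2$. Hence at a point where $h=0$, we have $\theta''=0$ if and only if $h'=0$. Differentiating once more shows that at a double zero $\theta'''=h''/w$. So ``exactly a double zero'' (a simple collision of a pair, rather than a triple coalescence) corresponds to $\theta'''(z_j;\xi_j)\neq0$. This gives \eqref{eq:interior}.

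Next, the transversality hypothesis \eqref{eq:transversality} guarantees that the collision is a genuine bifurcation in $\xi$ rather than a degenerate tangency. By the implicit function theorem applied to $\partial_z h(z;\xi)=0$, near $(z_j,\xi_j)$ the two roots of $h(\cdot;\xi)$ behave like $z_j\pm c\sqrt{\xi-\xi_j}$ with $c\neq0$. This uses $\partial_\xi h(z_j;\xi_j)=\hat f(z_j)\neq0$ together with $h''(z_j;\xi_j)\neq0$. The condition $\hat f(z_j)\neq 0$ is equivalent to $\partial_\xi\theta'\neq0$ at $(z_j,\xi_j)$, and this is how I would use \eqref{eq:transversality}: I expect it is meant as $\partial_\xi\theta'$, or equivalently it excludes $f'(z_j)=0$ so that the division below is legitimate. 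The consequence is that on one side of $\xi_j$ the roots are two real stationary points, and on the other side they form a complex-conjugate pair. This also matches the scaling region $|\xi-\xi_j|t^{2/3}<C$, because $\theta$ is locally cubic there.

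Solving $h(z_j;\xi_j)=0$ for $\xi_j$ using $\hat f(z_j)\neq0$ gives \eqref{eq:xireal}, since $f'=\hat f/w$ and $g'=\hat g/w$. Substituting into $h'(z_j;\xi_j)=\xi_j\hat f'(z_j)+\hat g'(z_j)=0$ and multiplying through by $-\hat f(z_j)$ yields the Wronskian condition \eqref{eq:wronskian}. Conversely, any real root $z_j$ of \eqref{eq:wronskian} with $\hat f(z_j)\neq0$ produces, through \eqref{eq:xireal}, a real $\xi_j$ satisfying both $h=0$ and $h'=0$.

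For the degree count, $\hat f$ has degree $n+1$ and $\hat g$ has degree $n+2$, both with leading coefficients given in \eqref{eq:f(z)}. The top-degree coefficient of $\hat f'\hat g-\hat f\hat g'$ is $(n+1)\cdot4-(n+2)\cdot4=-4\neq0$, so the Wronskian has degree exactly $2n+2$. I expect this counting step to be the main obstacle. The statement says there are exactly $2n+2$ collision points, but a polynomial of degree $2n+2$ has $2n+2$ roots only in $\mathbb{C}$ counted with multiplicity; real distinct roots are not guaranteed. To handle this, I would read the assertion as counting roots with multiplicity, and would add a genericity assumption on the background (all roots of the Wronskian simple, and $\hat f(z_j)\neq0$ at each). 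I would also note that only real roots give real critical lines $\xi_j$, and these are the ones used later in Section~\ref{sec-3}.
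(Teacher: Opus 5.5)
Your argument is correct, and it reaches the conclusions by a different route from the paper's. The paper works with $\theta$ directly. For the forward direction it Taylor-expands $\theta'(z;\xi)$ about $(z_j,\xi_j)$, finds $(z-z_j)^2$ proportional to $\xi-\xi_j$, and reads off two nondegenerate stationary points. For the converse it argues by contradiction with the implicit function theorem: if $\theta''(z_j;\xi_j)\neq 0$, the zero set of $\theta'$ near $(z_j,\xi_j)$ is a single curve.

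You instead factor $\theta'=h/w$ and turn everything into statements about root multiplicities of the real polynomial $h=\xi\hat f+\hat g$, which is linear in $\xi$. Continuity of roots gives the converse. The identities $\theta''=h'/w$ and $\theta'''=h''/w$ at the relevant points give the system \eqref{eq:interior}. A fold argument gives the splitting into a real pair on one side of $\xi_j$ and a conjugate pair on the other.

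The paper's local argument uses only analyticity of $\theta$ near $(z_j,\xi_j)$. Your algebraic one exploits the global polynomial structure, and that is what makes \eqref{eq:xireal}, \eqref{eq:wronskian} and the degree count (leading coefficient $4(n+1)-4(n+2)=-4$) immediate. The paper's proof does not actually establish any of these, nor $\theta'''\neq 0$ in the converse direction. In your version that last point comes from reading the collision as exactly two roots of $h$ tending to $z_j$, which forces multiplicity exactly two.

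Both caveats you raise are justified. In the paper's expansion of $\theta'$, the coefficient of $\xi-\xi_j$ must be $\partial_\xi\theta'(z_j;\xi_j)=f'(z_j)=\hat f(z_j)/w(z_j)$, not $\partial_\xi\theta(z_j;\xi_j)$. So the hypothesis actually doing the work is $\hat f(z_j)\neq 0$. And a degree of $2n+2$ only yields $2n+2$ complex roots counted with multiplicity, not $2n+2$ real collision points.

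You can tighten your genericity assumption. Writing $\hat g=h(\cdot;\xi_j)-\xi_j\hat f$ gives $\hat f'\hat g-\hat f\hat g'=\hat f'h-\hat f h'$. Its derivative at $z_j$ equals $-\hat f(z_j)\,w(z_j)\,\theta'''(z_j;\xi_j)$. Hence a root of the Wronskian with $\hat f(z_j)\neq 0$ is simple exactly when $\theta'''(z_j;\xi_j)\neq 0$, which ties the root count directly back to \eqref{eq:interior}.
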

 \begin{proof}
 	First, for the given collision point $z_j$ satisfying condition \eqref{eq:interior}, we prove that there exist two first-order stationary phase points $\kappa_1(\xi),\kappa_2(\xi)$ in its neighborhood.
 	
 At the critical line $\xi=\xi_j$, expanding $\theta(z;\xi_j)$ near $z=z_j$ gives
 \begin{equation*}
 	\theta(z;\xi_j)=\theta(z_j;\xi_j)+\frac{1}{6}\theta'''(z_j;\xi_j)(z-z_j)^3+\frac{1}{24}\theta^{(4)}(z_j;\xi_j)(z-z_j)^4+\mathcal{O}(z^5),
 \end{equation*}
 Since $\theta'(z_j;\xi_j)=\theta''(z_j;\xi_j)=0$ and $\theta'''(z_j;\xi_j)\neq 0$. Now consider a small perturbation $\xi=\xi_j+\delta_\xi$. Near $z=z_j$, the derivative expands as
 \begin{equation*}
 	\theta'(z;\xi)=\theta'(z_j;\xi_j)+\theta''(z_j;\xi_j)(z-z_j)+\frac{1}{2}\theta'''(z_j;\xi_j)(z-z_j)^2+\partial_\xi\theta(z_j;\xi_j)\delta_\xi+\mathcal{O}(z^3),
 \end{equation*}
 which at leading order reduces to
 \begin{equation*}
 	\theta'(z;\xi)\approx\frac{1}{2}\theta'''(z_j;\xi_j)(z-z_j)^2+\partial_{\xi}\theta(z_j;\xi_j)(z_j;\xi_j)\delta_\xi.
 \end{equation*}
 For $\delta_\xi\neq 0$, let the equation $\theta'(z;\xi)=0$ yields
 \begin{equation}\label{deri-heta}
 	(z-z_j)^2\approx-\frac{2}{\theta'''(z_j;\xi_j)}\partial_{\xi}\theta(z_j;\xi_j)(z_j;\xi_j)\delta_\xi.
 \end{equation}
 Thus, there exist two simple stationary phase points
 \begin{equation}\label{rrrr}
 	\kappa_l\approx z_j\pm\sqrt{-\frac{2}{\theta'''(z_j;\xi_j)}\partial_{\xi}\theta(z_j;\xi_j)\delta_\xi},\quad l=1,2,
 \end{equation}
 which coalesce at $z_j$ as $\delta_\xi\to 0$. Evaluating the second derivative \eqref{deri-heta} and using \eqref{rrrr} at $\kappa_{l}$ gives
 \begin{equation}
 	\theta^{\prime \prime}\left(\kappa_{ l} ; \xi\right) \approx \pm \theta^{\prime \prime \prime}\left(z_j ; \xi_j\right) \sqrt{-\frac{2}{\theta^{\prime \prime \prime}\left(z_j ; \xi_j\right)} \partial_{\xi}\theta(z_j;\xi_j)\delta_\xi} \neq 0,
 \end{equation}
 Combined with \eqref{eq:transversality}, this confirms that $\kappa_{l},\,l=1,2$ are nondegenerate stationary phase points.
 
Conversely, we prove that two stationary phase points $\kappa_1(\xi)$ and $\kappa_2(\xi)$ coalesce into a single point $z_j$, where $(z_j,\xi_j)$ satisfies \eqref{eq:interior}.
 
Given two simple stationary points $\kappa_1(\xi),\kappa_2(\xi)$ coalesce at $z_j$ as $\xi\to\xi_j$ satisfying
 \begin{equation}
 	\theta'(\kappa_l(\xi);\xi)=0,\quad \theta''(\kappa_l(\xi);\xi)\neq 0,\quad l=1,2.
 \end{equation}
 Since $\theta''(z,\xi)$ is continuous, it remains nonzero in a neighborhood of $(z_j,\xi_j)$. The implicit function theorem then guarantees that each $\kappa_l(\xi)$ extends uniquely as a smooth function with
 \begin{equation}
 	\frac{d\kappa_l}{d\xi}=-\frac{\partial_\xi\theta'(\kappa_l(\xi);\xi)}{\theta''(\kappa_l(\xi);\xi)},\quad l=1,2.
 \end{equation}
 However, if $\theta''(z_j;\xi_j)\neq 0$, the implicit function theorem yields a unique smooth curve of zeros $z(\xi)$ passing through $(z_j,\xi_j)$. This contradicts the existence of two distinct stationary phase points $\kappa_1(\xi),\kappa_2(\xi)$ both tending to $z_j$. Hence the second derivative must vanish at the coalescence point:
 \begin{equation}
 	\theta''(z_j;\xi_j)=0.
 \end{equation}
This completes the proof.
 \end{proof}
\begin{remark}
The expansion order in the two variables is dictated by a dominant-balance argument. Since $\theta''(z_j;\xi_j)=0$ by hypothesis, the linear term in $\delta_z:=z-z_j$ vanishes, making $(\delta_z)^2$ the first nontrivial spatial contribution. In the critical line direction, $\delta_\xi$ already provides the first nonzero correction. The balance relation $(\delta_z)^2\sim\delta_\xi$ (equivalently $\delta_z\sim\sqrt{\delta_\xi}$) then shows that keeping $(\delta_z)^2$ and $\delta_\xi$ captures the leading-order behavior, while higher-order terms are negligible as $\delta_\xi\to0$.
\end{remark}
\begin{remark}
	In the RH analysis of the Painlev\'e~II model in Appendix~\ref{App-B}, the standard phase function is
	\begin{equation}
		\theta(\lambda;s)=\frac{4}{3}\lambda^3+s\lambda.
	\end{equation}
	The stationary phase point  $\theta'(\lambda;s)=4\lambda^2+s=0$ yields
	\begin{equation}
		\lambda_{\pm}(s)=\pm\frac{\sqrt{-s}}{2},\quad s<0.
	\end{equation}
	As $s\to 0^-$, both $\lambda_{\pm}(s)$ converge to $0$. At the coalescence point $(\lambda,s)=(0,0)$,
	\begin{equation}
		\theta''(\lambda;s)=8\lambda\quad\Rightarrow\quad\theta''(0;0)=0.
	\end{equation}
	This verifies the general principle: coalescence necessarily implies the vanishing of the second derivative.\end{remark}
As $\xi$ approaches any critical line $\xi_j$, stationary phase points collide in pairs, admitting precisely two scenarios: two real stationary points may coalesce and depart from the real axis (so that $r \mapsto r-2$ ), or a pair of complex conjugate stationary points may emerge onto the real axis (so that $r \mapsto r+2$ ). Owing to the complex conjugation symmetry, isolated transitions with $r \mapsto r \pm 1$ cannot occur. In what follows, the notation $z_j$ is reserved specifically for real--real collisions, defined as the coalescence of two distinct real simple stationary phase points  at $z_j$ as $\xi \rightarrow \xi_j$.
 \begin{remark}
We investigate the distribution of the stationary phase points $\{\kappa_j\}_{j=1}^{n+2}$, where each $\kappa_j$ corresponds to either a real stationary phase point $\kappa_l^{\mathrm{R}}$ or a complex stationary phase point $\kappa_s^{\mathrm{C}}$ as defined in \eqref{eq:zero-j}, by analyzing the asymptotic and topological constraints governing their configuration.

To characterize these points, we examine the large-$z$ asymptotic expansion of the phase derivative. Expanding the denominator as $z \to \infty$, we obtain
\[
\frac{1}{\sqrt{\prod_{k=0}^{n}(z-E_k)(z-\bar{E}_k)}} = \frac{1}{z^{n+1}}\left(1+\frac{\mathcal{B}}{z}+\frac{\mathcal{B}^2+\sum_{k=0}^{n}(B_k^2-A_k^2)}{2z^2}+O(z^{-3})\right),
\]
where $\mathcal{B} \equiv \sum_{k=0}^{n} B_k$. Simultaneously, the numerator polynomial, whose roots constitute the stationary phase points, admits the expansion
\[
\prod_{l=1}^{n+2}(z-\kappa_j) = z^{n+2}\left[1-\frac{\mathcal{K}_1}{z}+\frac{\mathcal{K}_2}{z^2}+O(z^{-3})\right],
\]
with $\mathcal{K}_1 = \sum_{l=1}^{n+2} \kappa_j$ and $\mathcal{K}_2 = \sum_{1 \leq i < j \leq n+2} \kappa_i\kappa_j$ denoting the first and second elementary symmetric polynomials of the stationary phase points, respectively.

Combining these expansions yields the asymptotic representation of the phase derivative:
\[
\frac{d\theta(z)}{dz} = 4z + 4(\mathcal{B}-\mathcal{K}_1) + \frac{4\left[\mathcal{K}_2 - \mathcal{K}_1\mathcal{B} + \frac{1}{2}\left(\mathcal{B}^2+\sum_{k=0}^{n}(B_k^2-A_k^2)\right)\right]}{z} + O(z^{-2}).
\]
Comparing this expansion order-by-order with the required asymptotic behavior \eqref{eq:large-theta}, we derive the following constraints:

At order $O(z^0)$, matching the constant term yields the first moment condition:
	\begin{equation}
	4(\mathcal{B} - \mathcal{K}_1) = \xi \quad \Longrightarrow \quad \mathcal{K}_1= \mathcal{B} - \frac{1}{4}\xi.
	\label{eq:first_moment},
\end{equation}
which constrains the sum of all stationary phase points.

At order $O(z^{-1})$, the vanishing of the coefficient imposes the second moment condition:
	\begin{equation}
	\mathcal{K}_2 - \mathcal{K}_1\mathcal{B} + \frac{1}{2}\left(\mathcal{B}^2+\sum_{k=0}^{n}\left(B_k^2-A_k^2\right)\right) = 0.
	\label{eq:second_moment}
\end{equation}
These algebraic constraints are supplemented by $n$ homological conditions arising from the vanishing of periods over the $\alpha$-cycles:
\begin{equation}
	\oint_{\alpha_j} \frac{d\theta(z)}{dz}\, dz = 0, \qquad j = 1, \ldots, n.
	\label{eq:homology}
\end{equation}
Together, equations \eqref{eq:first_moment}, \eqref{eq:second_moment}, and \eqref{eq:homology} constitute a complete system of $(n+2)$ independent constraints that uniquely determine the locations of the $n+2$ stationary phase points $\{\kappa_j\}_{j=1}^{n+2}$.
\end{remark}
In the subsequent analysis, we examine the Painlev\'{e} asymptotics of $q(x, t)$ separately for the odd- and even-genus background solutions of the Cauchy problem  \eqref{eq:fNLS}--\eqref{eq:q_0},  which exhibit different asymptotic scenarios.
\section{Painlevé Asymptotics in Odd-genus Backgrounds}\label{sec-3}
In this section, we restrict our attention to the case of odd genus. We perform the Painlev\'{e} asymptotics
analysis of the RH problem~\ref{RH2-2} for $N(z)$ to derive the asymptotic
behavior of $q(x,t)$ in the transition region $|\xi-\xi_j|t^{2/3} < C$.

For $\xi \gg 0$, we assume the stationary phase points are determined by
\begin{equation}\label{h-z-xi}
	h(z;\xi) = 4(z-\kappa_1^{\mathrm{R}})\prod_{s=1}^{\frac{n+1}{2}}\left[(z-\kappa_s^{\mathrm{C}})(z-\bar{\kappa}_s^{\mathrm{C}})\right],
\end{equation}
where $\kappa_s^{\mathrm{C}}$ and $\bar{\kappa}_s^{\mathrm{C}}$ denote
complex conjugate stationary phase points, and $\kappa_1^{\mathrm{R}}$ asymptotic to the infinite branch
denotes as shown in Figure~\ref{figure-intial}. As $\xi$ decreases, the conjugate pair $\{\kappa^{\mathrm{C}}_1,
\bar{\kappa}^{\mathrm{C}}_1\}$ coalesces toward the real axis. This collision produces
two distinct real stationary phase points, denoted by $\kappa^{\mathrm{R}}_2$ and $\kappa^{\mathrm{R}}_3$,
which then move along $\mathbb{R}$ in opposite directions, transforming \eqref{h-z-xi} into
\begin{equation}\label{h-pp}
	h(z;\xi) = 4(z-\kappa_1^{\mathrm{R}})
    (z-\kappa_2^{\mathrm{R}})(z-\kappa_3^{\mathrm{R}})\prod_{s=2}^{\frac{n-1}{2}}\left[(z-\kappa_s^{\mathrm{C}})(z-\bar{\kappa}_s^{\mathrm{C}})\right],
\end{equation}
where $\kappa^{\mathrm{R}}_1$ subsequently moves to the right until it collides
with $\kappa^{\mathrm{R}}_2$, merging at $z_1=G(z_1)$, this collision occurs at $\xi=\xi_1$, which describes the distribution of stationary phase points addressed in this section.
\begin{figure}[htp]
	\begin{center}
		\begin{tikzpicture}[scale=0.7]
			\fill[orange!50]
			(-3.5,2)                          
			to[out=-165, in=-50] (-6.5,2)
			--(-6.5,0)--(-3.5,0)-- cycle;

				\draw[dashed] 	(-3.5,2)                          
				to[out=-165, in=-50] (-6.5,2) ;

	\draw[dashed] 	(-3.5,-2)                          
to[out=--165, in=50] (-6.5,-2) ;

			\fill[orange!50]
			(-11,2.5)--(-10,2.5)                          
			to[out=-90, in=100] (-9.5,0)       
			--(-11,0)-- cycle;		
			
				\draw[dashed] 	(-10,2.5)                          
				to[out=-90, in=100] (-9.5,0) ;
				
					\draw[dashed] 	(-10,-2.5)                          
				to[out=90, in=-100] (-9.5,0) ;

			\fill[orange!50]
			(-10,-2.5)                          
			to[out=90, in=-100] (-9.5,0)       
			--(-6.5,0)--(-6.5,-2.5)--
			cycle;

				\fill[orange!50]
			(-3.5,-2.5)-- (-3.5,-2)
			to[out=165, in=50] (-6.5,-2)--(-6.5,-2.5)--(-3.5,-2.5) --cycle;

			\fill[orange!50]
			(1,2.5)--(2,2.5)                          
			to[out=-90, in=100] (2.5,0)       
			--(1,0)-- cycle;

					\draw[dashed] 	(2,2.5)                          
					to[out=-90, in=100] (2.5,0) ;
			
			\draw[dashed] 	(2,-2.5)                          
		to[out=90, in=-100] (2.5,0) ;

			\fill[orange!50]
			(2,-2.5)                          
			to[out=90, in=-100] (2.5,0)       
			--(4,0)--(4,-2.5)-- cycle;

		\fill[orange!50]                      
			(-1,0)--(-1,-2.5)--(-3.5,-2.5)--(-3.5,0)--
			cycle;	
			\draw[thick,dashed](-10.2,-2.5)--(-10.2,2.5);	
			\draw[thick,dashed](1.8,-2.5)--(1.8,2.5);	
			\node  at (-6.4,-2.5)  { \footnotesize$\bar E_0$};
			\node  at (-3.5,-2.5)  { \footnotesize$\bar E_1$};	
			
			\node  at (-6.4,2.3)  { \footnotesize$ E_0$};
			\node  at (-3.5,2.3)  { \footnotesize$ E_1$};	
				\node  at (-5.3,2)  { \footnotesize$\kappa^{\mathrm{C}}_1$};	
				\node  at (-5.3,-2)  { \footnotesize$\bar\kappa^{\mathrm{C}}                _1$};	
					\node  at (-9,-0.5)  { \footnotesize$\kappa^{\mathrm{R}}                    _1$};	
					
				\node  at (-5,0.7)  { \footnotesize${\rm Im}\,\theta(z)<0$};	
	
							\node  at (-8,0.7)  { \footnotesize${\rm Im}\,\theta(z)>0$};	
			
			\node  at (-10,-2.9)  { \footnotesize${\rm Re}\,z=-\frac{1}{4}\xi$};
			\node  at (2,-2.9){ \footnotesize${\rm Re}\,z=-\frac{1}{4}\xi$};
			\draw[thick,->](-11,0)--(-1,0);
			\draw[thick,->](1,0)--(11,0);			
			\draw[thick,](-3.5,-2)--(-3.5,2);	
			\draw[thick](-6.5,-2)--(-6.5,2);

				\draw[thick](5.5,-1.5)--(5.5,1.5);
				\draw[thick](4.,-1.5)--(4.,1.5);

							\fill[orange!50]
				(4,1.5)                          
				to[out=-45, in=-145] (5.5,1.5)       
				--(5.5,0)--(4,0)-- cycle;

				\fill[orange!50]
				(4,-2.5) --	(4,-1.5)                          
				to[out=45, in=145] (5.5,-1.5) --(5.5,0)
			 --(8,0)--(8,-1.5) to [out=45, in=145]	
		(9.5,-1.5)--(9.5,0)--(11,0)--
			(11,-2.5)--(4,-2.5)--cycle;	
				\draw[dashed] 	(4,-1.5)                          
				to[out=45, in=145] (5.5,-1.5) ;
				
					\draw[dashed] 	(8,-1.5)                          
				to[out=45, in=145] (9.5,-1.5) ;
				
					\draw[dashed] 	(8,1.5)                          
				to[out=-45, in=-145] (9.5,1.5) ;
				
					\draw[dashed] 	(4,1.5)                          
				to[out=-45, in=-145] (5.5,1.5) ;

				\node  at (4.1,-2)  { \footnotesize$\bar E_0$};
					\node  at (4.1,1.8)  { \footnotesize$ E_0$};
				
					\node  at (5.5,-2)  { \footnotesize$\bar E_1$};
				\node  at (5.5,1.8)  { \footnotesize$ E_1$};
				
					\node  at (8.1,-2)  { \footnotesize$\bar E_2$};
				\node  at (8.1,1.8)  { \footnotesize$ E_2$};

					\node  at (8.1,-2)  { \footnotesize$\bar E_2$};
				\node  at (8.1,1.8)  { \footnotesize$ E_2$};

					\node  at (9.5,-2)  { \footnotesize$\bar E_3$};
				\node  at (9.5,1.8)  { \footnotesize$ E_3$};
				
		\node  at (8.8,1.6)  { \footnotesize$ \kappa^{\mathrm{C}}_2$};
					\node  at (4.8,1.6)  { \footnotesize$ \kappa^{\mathrm{C}}_1$};
			
				\node  at (8.8,-1.7)  { \footnotesize$\bar \kappa^{\mathrm{C}}_2$};
			\node  at (4.8,-1.7)  { \footnotesize$ \bar\kappa^{\mathrm{C}}_1$};
				\node  at (2.9,-0.4)  { \footnotesize$ \kappa^{\mathrm{R}}_1$};

			\draw[thick](8,-1.5)--(8,1.5);
			\draw[thick](9.5,-1.5)--(9.5,1.5);

			\fill[orange!50]
		(8,1.5)                          
			to[out=-45, in=-145] (9.5,1.5)       
			--(9.5,0)--(8,0)-- cycle;	
			
				\fill[red] (2.5,0) circle (3pt);
				\fill[red] (-9.5,0) circle (3pt);
	\fill[red] (4.7,1.2) circle (3pt);
	\fill[red] (4.7,-1.2) circle (3pt);
	\fill[red] (8.7,1.2) circle (3pt);
		\fill[red] (8.7,-1.2) circle (3pt);
			\fill[red] (-5.5,1.47) circle (3pt);
	\fill[red] (-5.5,-1.47) circle (3pt);
			
		\end{tikzpicture}
	\end{center}
	\caption{Initial configurations of stationary phase points illustrated for the cases of genus one (left) and genus three (right)}
	\label{figure-intial}
\end{figure}
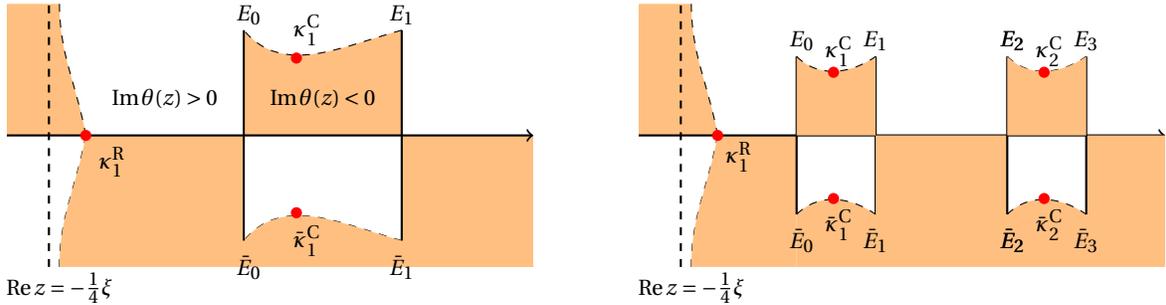

As an example, on a Riemann surface of genus one, the stationary phase points
of the phase function are given by the roots of the characteristic cubic equation:
\begin{equation}\label{eq:cubic}
	h(z) := z^3 - h_1 z^2 + h_2 z - h_3 = 0,
\end{equation}
with
\begin{align}
	h_1 &= B_0 + B_1- \frac{1}{4}\xi, \\
	h_2 &= \frac{2B_0B_1 + A_0^2 + A_1^2}{2} - \frac{(B_0 + B_1)\xi}{4}, \\
	h_3 &= -\frac{\int_{\bar{E}_1}^{E_1} \left(z^3 + h_2 z^2 + h_1 z\right) \frac{\mathrm{d}z}{w(z)}}{\int_{\bar{E}_1}^{E_1} \frac{\mathrm{d}z}{w(z)}} \in \mathbb{R}.
\end{align}
The three roots $\kappa_k$ ($k = 1, 2, 3$) are given by:
\begin{equation}\label{eq:roots}
	\kappa_k = \frac{h_1}{3} + 2\sqrt{-\frac{p^\dagger(\xi)}{3}} \cos\left(\frac{1}{3}\arccos\left[\frac{3q^\ddagger(\xi)}{2p^\dagger(\xi)}\sqrt{-\frac{3}{p^\dagger(\xi)}}\right] + \frac{2(k-1)\pi}{3}\right),
\end{equation}
where  \begin{align}
	&p^\dagger(\xi)=-\frac{\xi^2}{48}-\frac{\left(B_0+B_1\right) \xi}{12}+C^\dagger,\quad C^\dagger=\frac{3\left(A_0^2+A_1^2\right)-2\left(B_0-B_1\right)^2}{6},\\
	&	q^\ddagger(\xi) = -\frac{\xi^3}{864} - \frac{(B_0+B_1)\xi^2}{144} + \frac{\left[2(B_0+B_1)^2 + 6B_0B_1 + 3(A_0^2+A_1^2)\right]\xi}{72} + C^\ddagger,
\end{align}
and
 \begin{equation}
	C^\ddagger=-h_3(\xi) - \frac{(B_0+B_1)\left(2B_0B_1+A_0^2+A_1^2\right)}{6} + \frac{2(B_0+B_1)^3}{27}.
\end{equation}
Consider the asymptotic regime as $\xi \gg0$ and its subsequent decrease. As $\xi$ decreases from infinity, the first critical transition occurs when  $\frac{1}{3}\arccos\left(\frac{3q^\ddagger(\xi)}{2p^\dagger(\xi)}\sqrt{-\frac{3}{p^\dagger(\xi)}}\right) \to \frac{\pi}{3}$, at which point the discriminant $\Delta(\xi) = -4(p^\dagger(\xi))^3 - 27(q^\ddagger(\xi))^2$ vanishes. This condition induces the coalescence of $\kappa_2$ and $\kappa_3$ into a double root, while $\kappa_1$ remains simple. Explicitly, we obtain the limiting behavior
\[
\kappa_1=\frac{h_1}{3}-2 \sqrt{-\frac{p^\dagger(\xi)}{3}}\mapsto-\frac{1}{4}\xi+\frac{B_0+B_1}{3}, \quad\kappa_2=\kappa_3=\frac{h_1}{3}+\sqrt{-\frac{p^\dagger(\xi)}{3}}\mapsto \frac{B_0+B_1}{3},
\]
yielding the strict ordering
\[
-\frac{1}{4}\xi < \kappa_1 < B_0 < \operatorname{Re}\kappa_2 = \operatorname{Re}\kappa_3 < B_1.
\]
Continuing the decrease of $\xi>0$, the infinite branch of $\operatorname{Im}\theta(z)$ translates rightward until it intersects the finite branch. This transition corresponds to the second vanishing of the discriminant, occurring when $\frac{1}{3}\arccos\left(\frac{3q^\ddagger(\xi)}{2p^\dagger(\xi)}\sqrt{-\frac{3}{p^\dagger(\xi)}}\right) \to 2\pi$, whereby $\kappa_1$ and $\kappa_2$ merge. The resulting configuration comprises one simple root $z_3 = \frac{h_1}{3} + 2\sqrt{-\frac{p^\dagger(\xi)}{3}}$ and one double root $z_1 = z_2 = \frac{h_1}{3} - \sqrt{-\frac{p^\dagger(\xi)}{3}}$. For the boundary conditions satisfying $C^\dagger > 3B_0^2$, when $\xi \in (6|B_0| + \frac{2C^\dagger}{|B_0|}, +\infty)$, the coalescence occurs with the double root satisfying $z_1 = z_2 < B_0$, as shown in Figure~\ref{fff}.
\begin{figure}
	\subfloat[After the complex-complex collision]{\label{a}
		\begin{minipage}[t]{0.55\linewidth}
			\centering
			\includegraphics[width=2.7in]{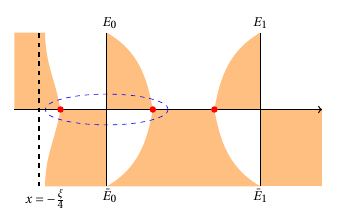}
		\end{minipage}
	}
	\subfloat[Befor the real-real collision]{\label{b}  
		\begin{minipage}[t]{0.4\linewidth}
			\centering
			\includegraphics[width=2.7in]{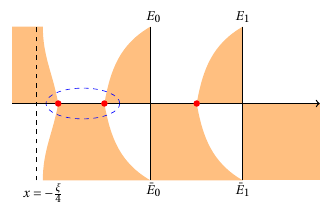}
		\end{minipage}
	}
	\caption{ In the transition region $|\xi-\xi_1|t^{2/3}<C$, two stationary phase points coalesce when $C^\dagger > 3B_0^2$.}
	\label{fff}
\end{figure}

\begin{remark}
Herein, the assumption on stationary phase point distribution in \eqref{h-pp}
is validated by the following construction. For the odd genus case, the numbers of branch cuts in the left and right half-planes are equal. Based on the properties of $\theta(z)$
\eqref{theta-tt}--\eqref{eq:large-theta} and the configuration of
endpoints $E_k$, the coalescence of the infinite branch with
$\Gamma_0$ as $\xi\gg0$ decrease exhibits two possible distinct
scenarios (Figures~\ref{ttt} and \ref{TTT}, among others),
depending on the choice of branch configuration $\{B_k,A_k\}_{k=0}^n$.

We restrict our analysis to the second scenario (Figure~\ref{fff},\ref{TTT}),
in which real-real coalescence occurs after complex-complex coalescence,
with exactly three real stationary phase points existing prior to the
coalescence event. At the same time, the real stationary phase point $\kappa^R_3$,
which moves rightward as $\xi$ decreases, may lie to the left or right
of the cut $\Gamma_1$; in both configurations, the contribution is
$\mathcal{O}(t^{-1/2})$ to the long-time asymptotics. Without loss of
generality, we consider the case where $\kappa^R_3$ lies to the left
of \,$\Gamma_1$ under the choice of branch configuration $\{B_k,A_k\}_{k=0}^n$ as shown in the first figure in Figure~\ref{TTT}.

Following this coalescence, the region where ${\rm Im}\,\theta(z)>0$ is restricted to the finite domain enclosed by $E_0$, $\bar E_0$, $\bar E_1$, and $E_1$, whereas the infinite branch coinciding with $\kappa^R_3$ migrates rightward with decreasing $\xi$ as shown in the second and third figure in Figure~\ref{TTT}.
\end{remark}
\begin{figure}[htp]
	\subfloat{\label{a}
		\begin{minipage}[t]{0.3\linewidth}
			\centering
			\includegraphics[width=2in]{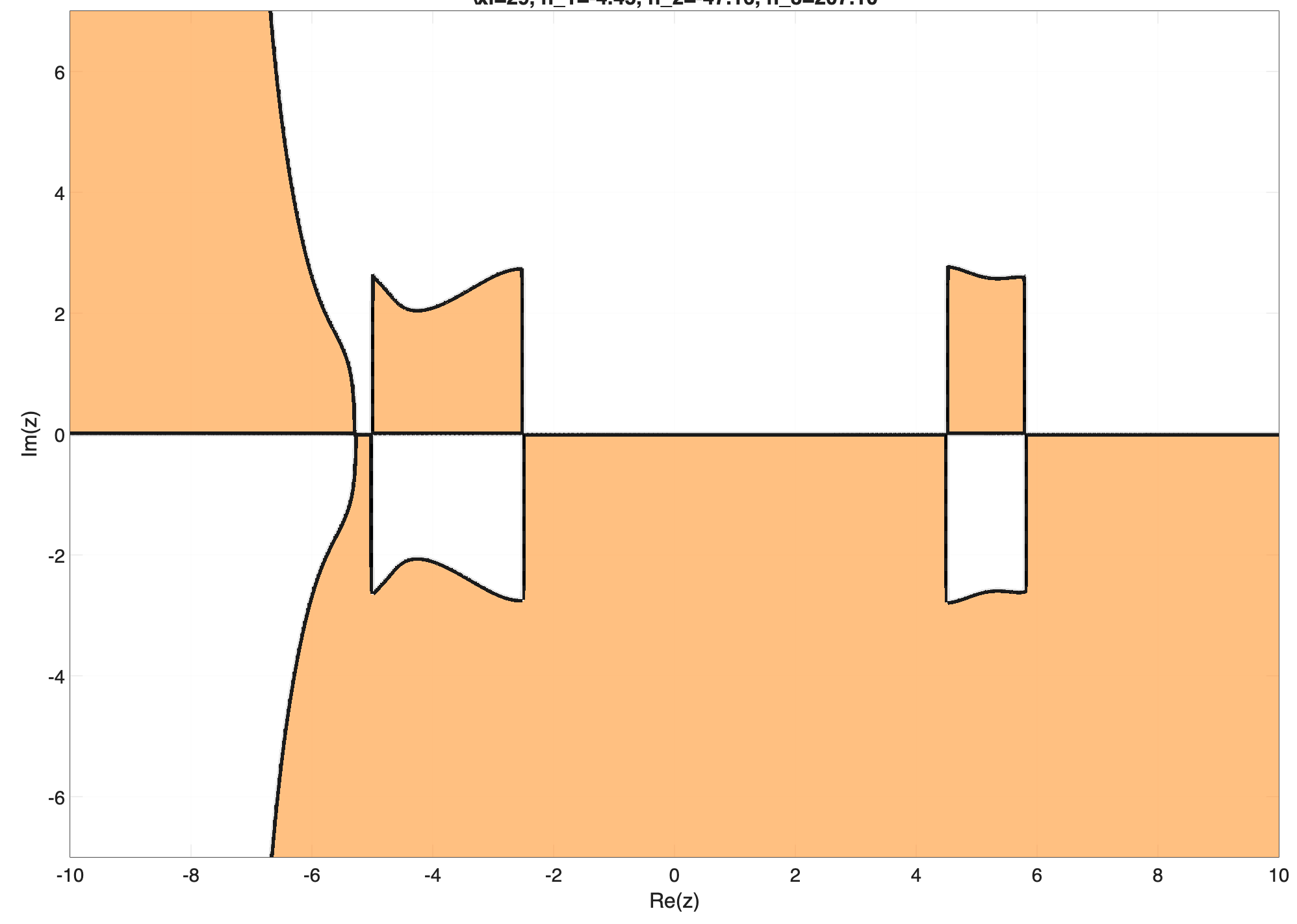}
		\end{minipage}
	}\quad
	\subfloat{\label{b}  
		\begin{minipage}[t]{0.3\linewidth}
			\centering
			\includegraphics[width=2in]{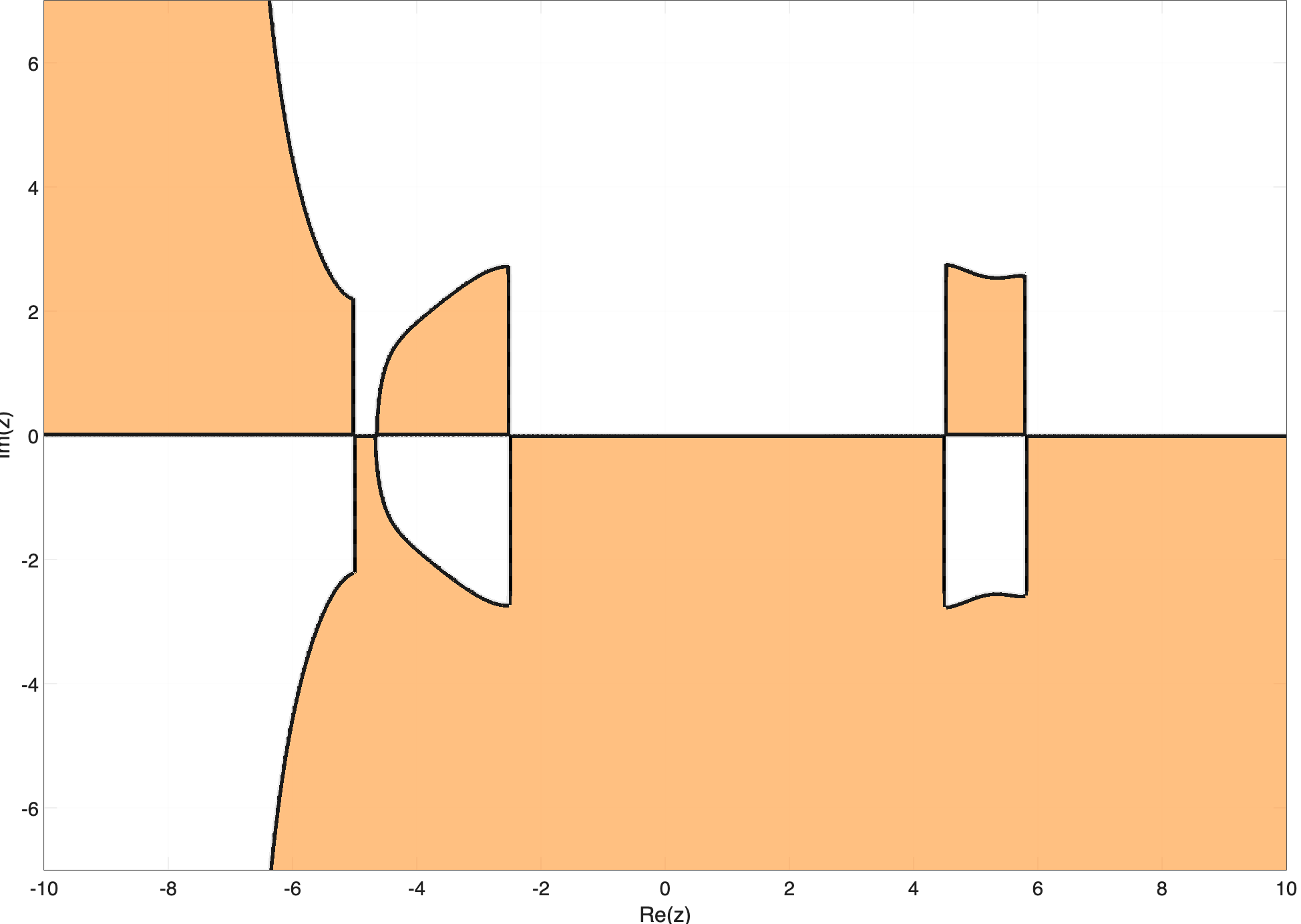}
		\end{minipage}
	}
	\subfloat{\label{c}\begin{minipage}[t]{0.4\linewidth}
			\centering
			\includegraphics[width=2in]{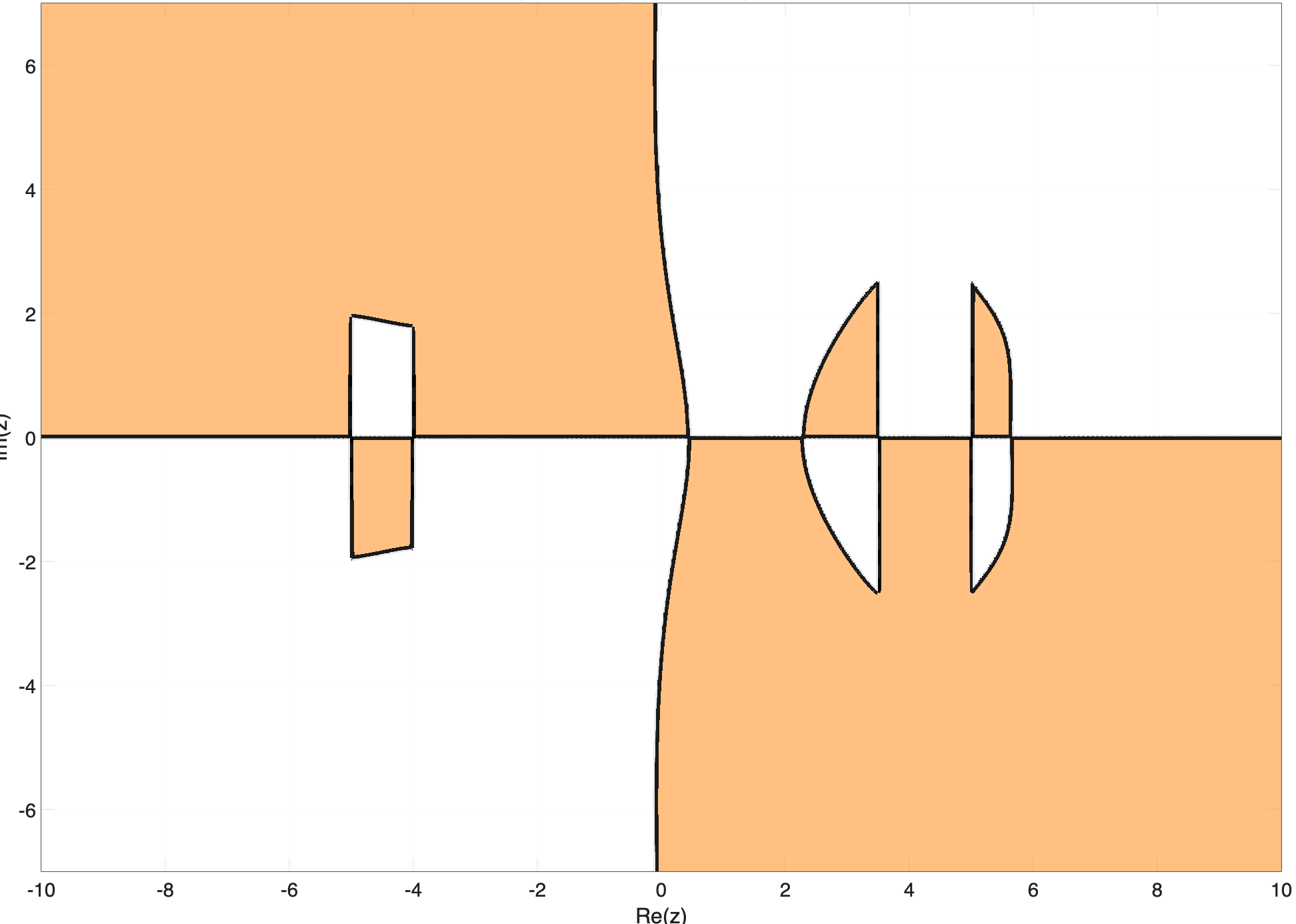}
	\end{minipage}}
\caption{The branch cut coalesces at infinity precisely when complex critical
	points remain off the real axis, yielding signature regions with
	${\rm Im}\,\theta(z)<0$ (orange) and ${\rm Im}\,\theta(z)>0$ (white) as $t\to\infty$.}
\label{ttt}
\end{figure}
\subsection*{Conjugation}
The first step in our analysis is to introduce factorizations of the jump
matrix that extend analytically off the real axis, following the approach
in \cite{CJ16,KTR08}. To achieve a properly normalized RH problem, we introduce the function
\begin{equation}\label{eq:ddeltaa}
	\delta(z)=\tilde{\nu}^{-1}(z)\exp\!\left\{\frac{w(z)}{2\pi i}\left(\int_{I}\frac{\log\!\left(1+|r(s)|^2\right)}{w(s)(s-z)}\,ds+\int_{\Gamma\setminus\tilde\Gamma}\frac{\log\frac{\overline{a_-(\bar{s})}}{\overline{a_+(\bar{s})}}}{w(s)(s-z)}\,ds+\int_{\bar{\Gamma}\setminus\bar{\tilde\Gamma}}\frac{\log\frac{\overline{a_+(\bar{s})}}{\overline{a_-(\bar{s})}}}{w(s)(s-z)}\,ds\right)\right\},
\end{equation}
where $\tilde\Gamma:=\bigcup_{k>[\frac{n}{2}]}\Gamma_k$ and the integration contour is given by
$I = \left(-\infty,\kappa^{\mathrm{R}}_1\right)\cup\left(\kappa^{\mathrm{R}}_2,B_{k_0+1}\right)\cup\left(\kappa^{\mathrm{R}}_3,B_{k_0+2}\right)$.
The function $\tilde{\nu}(z)=\prod_{k>\left[\frac{n}{2}\right]}\sqrt[4]{\frac{z-\bar{E}_k}{z-E_k}}$
is defined for parameters satisfying
$B_{k_0}<\kappa^{\mathrm{R}}_1<\kappa^{\mathrm{R}}_2<B_{k_0+1}<\kappa^{\mathrm{R}}_3<B_{k_0+2}$
with some $k_0\in\mathcal{K}$, and the logarithm assumes its principal branch.
The function $\delta(z)$ characterized by \eqref{eq:ddeltaa} satisfies the
following scalar RH problem.
\begin{figure}[htp]
	\subfloat{\label{a}
		\begin{minipage}[t]{0.3\linewidth}
			\centering
			\includegraphics[width=2in]{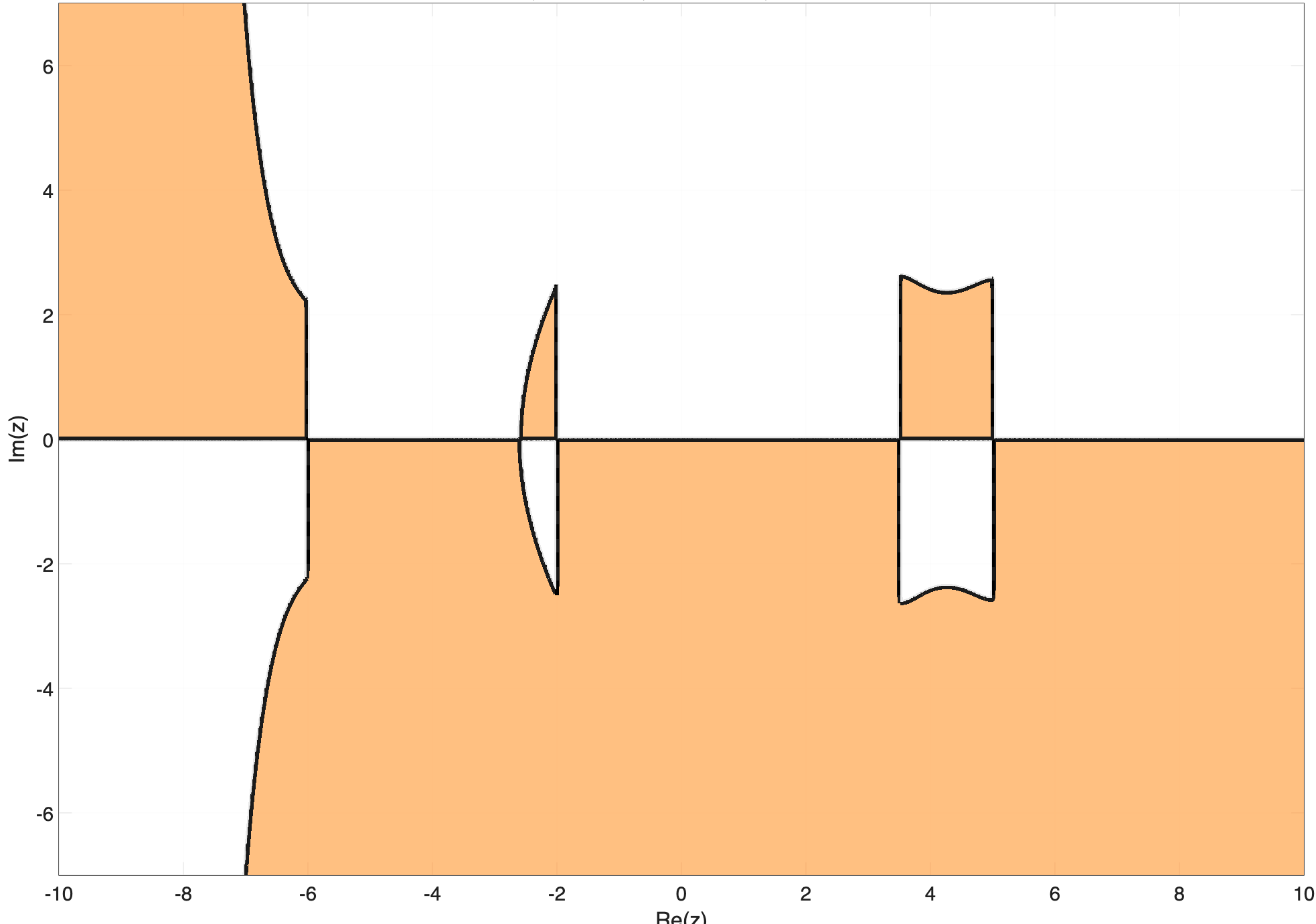}
		\end{minipage}
	}\quad
	\subfloat{\label{b}  
		\begin{minipage}[t]{0.3\linewidth}
			\centering
			\includegraphics[width=2in]{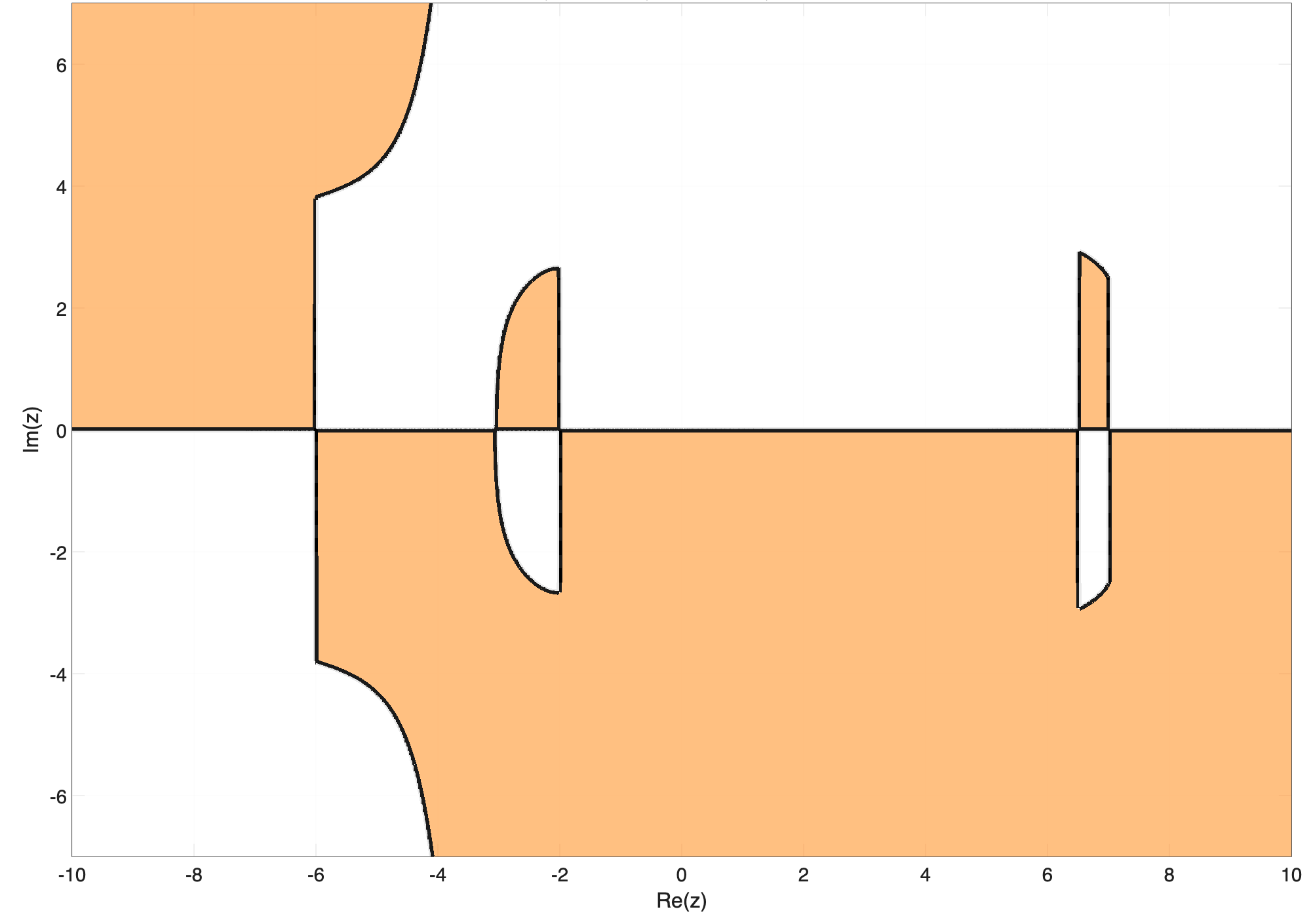}
		\end{minipage}
	}
	\subfloat{\label{c}\begin{minipage}[t]{0.4\linewidth}
			\centering
			\includegraphics[width=2in]{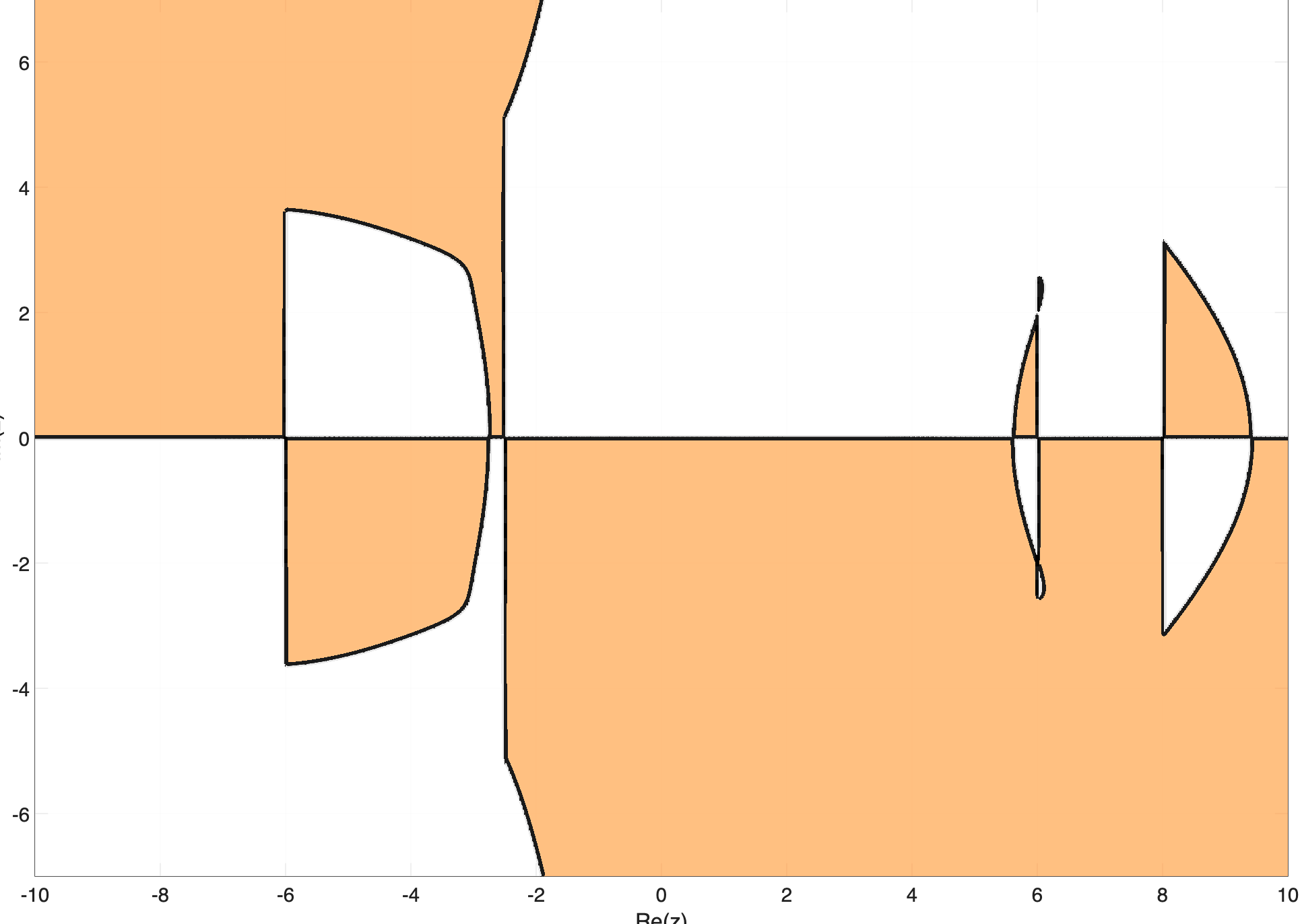}
	\end{minipage}}
	\caption{The branch cut coalesces with real stationary phase point on the real axis, yielding signature regions with
	${\rm Im}\,\theta(z)<0$ (orange) and ${\rm Im}\,\theta(z)>0$ (white) as $t\to\infty$.}
\label{TTT}
\end{figure}
\begin{problem}
	Construct a meromorphic function $\delta(z): \mathbb{C}\setminus\left( I\,\cup\,\Gamma \right)\to SL_2(\mathbb{C})$ such that:
	\begin{enumerate}
		\item[(i)]  $\delta(z)=\delta(\infty)\left(1+\frac{I_1}{z}+O\left(\frac{1}{z^2}\right)\right)
		$ as $z\to\infty$, where
		\begin{align}\label{delt-ingg}
			&\delta(\infty)=\exp\left\{-\frac{1}{2\pi i}\left[
			\int_{I}\frac{s^n\log(1+|r(s)|^2)}{w(s)}\,ds
			+\int_{\Gamma\setminus\tilde\Gamma}\frac{s^n\log\frac{\overline{a_-(\bar s)}}{\overline{a_+(\bar s)}}}{w(s)}\,ds
			+\int_{\bar\Gamma\setminus\bar{\tilde\Gamma}}\frac{s^n\log\frac{\overline{a_+(\bar s)}}{\overline{a_-(\bar s)}}}{w(s)}\,ds
			\right]\right\},\\
			&I_1=-\frac{1}{2\pi i}\left[
			\int_{I}\frac{s^n(s+\hat{f}_n)\log(1+|r(s)|^2)}{w(s)}\,ds
			+\int_{\Gamma\setminus\tilde\Gamma}\frac{s^n(s+\hat{f}_n)\log\frac{\overline{a_-(\bar s)}}{\overline{a_+(\bar s)}}}{w(s)}\,ds
			+\int_{\bar\Gamma\setminus\bar{\tilde\Gamma}}\frac{s^n(s+\hat{f}_n)\log\frac{\overline{a_+(\bar s)}}{\overline{a_-(\bar s)}}}{w(s)}\,ds
			\right].
		\end{align}
		where $\hat{f}_n$ is given in \eqref{hat-ff}.
		\item[(ii)]The boundary values $\delta_\pm(z)$ satisfy
		\begin{equation}
			\delta_+(z)=\delta_-(z)\begin{cases}
			\bigl(1+|r(z)|^2\bigr),&z\in I,\\
\log\frac{\overline{a_-(\bar{s})}}{\overline{a_+( \bar{s})}},&z\in (\Gamma\,\setminus\,\tilde\Gamma)\cap\mathbb{C}^+,\\
\log\frac{\overline{a_+(\bar s)}}{\overline{a_-(\bar s)}},&z\in (\Gamma\,\setminus\,\tilde\Gamma)\cap\mathbb{C}^-,\\
i,&z\in\tilde\Gamma.			\end{cases}
		\end{equation}
		\item[(iii)] $\delta(z)$ obeys the symmetry
		\begin{equation}
			\delta(z)=\left(\overline{\delta(\bar z)}\right)^{-1}.
		\end{equation}
		\item[(iv)] $\delta(z)$ and $\delta^{-1}(z)$ are bounded and analytic functions of $z \in \mathbb{C}\setminus\left( (I\setminus(\cup_{k<k_0}B_{k}))\,\cup\,\Gamma \right)$.
	\end{enumerate}
\end{problem}
\begin{proof}
	Properties $(ii)$, $(iii)$, and $(iv)$ can be directly deduced from the definition of $\delta(z)$ in \eqref{eq:ddeltaa}. To establish property $(i)$, let $|z|>\max\{|s|:s\in I\cup\Gamma\}$ and expand
	\begin{equation}\label{eq:cauchy}
		\frac{1}{s-z}=-\sum_{k=0}^{\infty}\frac{s^k}{z^{k+1}}.
	\end{equation}
	Let $\delta(z)=e^{E(z)}$ as $z\to\infty$, substituting \eqref{eq:cauchy} into the exponent of \eqref{eq:ddeltaa} yields
	\begin{equation*}\label{eq:Ez}
		E(z)=-\frac{1}{2\pi i}\sum_{k=0}^{\infty}\frac{w(z)}{z^{k+1}}M_k,
		\quad 
		M_k:=\int_{I}\frac{s^k\log(1+|r(s)|^2)}{w(s)}\,ds
		+\int_{\Gamma\setminus\tilde\Gamma}\frac{s^k\log\frac{\overline{a_-(\bar s)}}{\overline{a_+(\bar s)}}}{w(s)}\,ds
		+\int_{\bar\Gamma\setminus\bar{\tilde\Gamma}}\frac{s^k\log\frac{\overline{a_+(\bar s)}}{\overline{a_-(\bar s)}}}{w(s)}\,ds,
	\end{equation*}
	the terms $z^n,\dots,z^1$ appear for $k=0,\dots,n-1$. To ensure $\delta(z)=e^{E(z)}$ has a finite nonzero limit at infinity, the moment conditions
	\begin{equation}\label{eq:moment}
		M_0=M_1=\cdots=M_{n-1}=0
	\end{equation}
	must hold. Under \eqref{eq:moment}, only $k\ge n$ contribute to the $z^0$ and $z^{-1}$ coefficients. Keeping $k=n$ and $k=n+1$, we obtain
	\begin{equation}\label{eq:Eexp}
		E(z)=E_0+\frac{E_1}{z}+O(z^{-2}),
	\end{equation}
	with
	\begin{equation}\label{eq:E0E1}
		E_0=-\frac{M_n}{2\pi i},\qquad 
		E_1=-\frac{\hat{f}_nM_n+M_{n+1}}{2\pi i}.
	\end{equation}
	Therefore,
	\begin{equation}\label{eq:delta-asymp}
		\delta(z)=e^{E(z)}=\delta(\infty)\left(1+\frac{I_1}{z}+O(z^{-2})\right),
		\qquad z\to\infty,
	\end{equation}
The following result admits a complex proof for property $(i)$.
\end{proof}

 \begin{figure}[htp]
	\begin{center}
		\begin{tikzpicture}[scale=0.7]

				\fill[orange!50]
			(-9,0)--(-9,3)--(-5.5,3) to[out=-90, in=95] (-5,0)
			-- cycle;			
			\draw[dashed](-5.5,3) to[out=-90, in=95] (-5,0);
				\fill[orange!50]
			(-1,0) to[out=70, in=-160] (1,3)--(1,0)
			-- cycle;			
			\draw[dashed]	(-1,0) to[out=70, in=-160] (1,3);
			
				\fill[orange!50]
		(-5.5,-3)  to[out=90, in=-95] (-5,0)--(-1,0)to[out=-70, in=160] (1,-3)
			-- cycle;
			\draw[dashed]	(-5.5,-3)  to[out=90, in=-95] (-5,0);
			\draw[dashed](-1,0)to[out=-70,
			 in=160] (1,-3);
			
			 	\fill[orange!50]
			 (3.3,0) to[out=80, in=-150] (5,3)--(5,0)
			 -- cycle;			
			 \draw[dashed]	 (3.3,0) to[out=80, in=-150] (5,3);
			
			 	\fill[orange!50]
			 (1,-3)--(1,0)--(3.3,0)
			  to[out=-80, in=150] (5,-3)
			 -- cycle;			
			 \draw[dashed](3.3,0)
			 to[out=-80, in=150] (5,-3);
			 	\fill[orange!50]
 (5,-3)--(5,0)--(9,0)--(9,-3)
			 -- cycle;

			\node  at (3.1,-0.5)  { \footnotesize$\kappa^{\mathrm{R}}_3$};
				\node  at (-4.5,-0.5)  { \footnotesize$\kappa^{\mathrm{R}}_1$};
				\node  at (-1.3,-0.5)  { \footnotesize$\kappa^{\mathrm{R}}_2$};
			
				\draw[dashed,blue] (3.3,0) ellipse (1.5cm and 1cm);
				\draw[thick](4.3,0.8)--(2.3,-0.8);
				\draw[thick](4.3,-0.8)--(2.3,0.8);
				\node  at (4.7,-1.3)  { \footnotesize$\bar\Sigma^1_{\kappa^{\mathrm{R}}_3}$};
				\node  at (2.,-1.3)  { \footnotesize$\bar\Sigma^2_{\kappa^{\mathrm{R}}_3}$};
				
					\node  at (4.7,1.3)  {
				\footnotesize$\Sigma^1_{\kappa^{\mathrm{R}}_3}$};
				\node  at (2.,1.3)  { \footnotesize$\Sigma^2_{\kappa^{\mathrm{R}}_3}$};

			\draw[dashed,blue] (-3,0) ellipse (2.9cm and 1.3cm);

								\draw[thick](-1,0)--(0.5,1.5);
					\draw[thick](-1,0)--(0.5,-1.5);
	\draw[thick](-5,0)--(-6.3,1.5);
					\draw[thick](-5,0)--(-6.3,-1.5);

						\node  at (2.5,2.5)  { \footnotesize$\Omega^1_{\kappa^{\mathrm{R}}_2}$};
					
						\node  at (2.5,-2.5)  { \footnotesize$\bar\Omega^1_{\kappa^{\mathrm{R}}_2}$};
						
							\node  at (-7.5,1.5)  { \footnotesize$\Omega^2_{\kappa^{\mathrm{R}}_1}$};
											\node  at (-7.5,-1.5)  { \footnotesize$\Omega^2_{\kappa^{\mathrm{R}}_1}$};
			
				\draw[thick, ->](-9,0)--(9,0);
			\draw[thick](1,-3)--(1,3);
			\draw[thick](5,-3)--(5,3);

	\fill[blue] (-3,0) circle (3pt);

								\fill[red] (3.3,0) circle (3pt);
						\fill[red] (-1,0) circle (3pt);
						\fill[red] (-5,0) circle (3pt);
		\end{tikzpicture}
	\end{center}
	\caption{ The coalescence of two stationary phase points $\kappa^\mathrm{R}_0,\kappa^\mathrm{R}_1<B_0$ as $\xi\gg0$.}
	\label{fig:zs}
\end{figure}
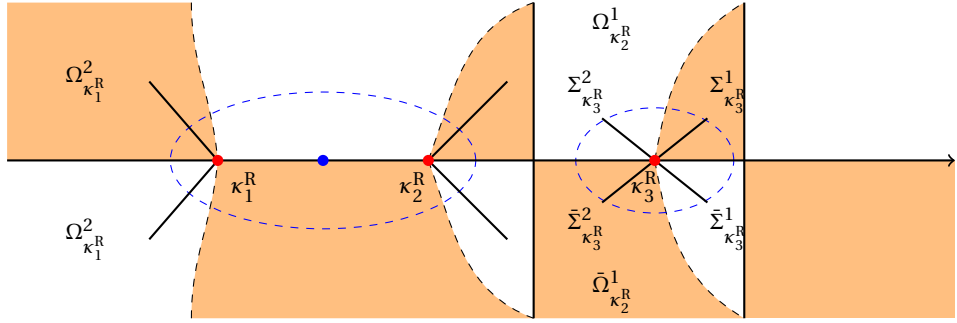
Denote by $\Sigma_{\kappa^\mathrm{R}_l} = \Sigma_{\kappa^\mathrm{R}_l}^t \cup \bar{\Sigma}_{\kappa^\mathrm{R}_l}^t$ ($t=1,2,l\in\mathcal{L}$) the curves oriented counterclockwise, which can be chosen freely subject only to the following conditions: they pass through $\kappa^\mathrm{R}_{l}$, do not intersect $\Gamma_k$, and lie entirely in regions where $\theta(z)$ has the appropriate sign. Denote by$$
\begin{aligned}
\Omega_{\kappa_2^{\mathrm{R}}}^1&=\left\{z\in\mathbb{C}\setminus\mathcal{U}_0:0\leq\arg(z-\kappa_2^{\mathrm{R}})\leq\varphi_0\right\},\\
	\Omega_{\kappa_1^{\mathrm{R}}}^2&=\left\{z\in\mathbb{C}:\pi-\varphi_0\leq\arg(z-\kappa_1^{\mathrm{R}})\leq\pi\right\},
\end{aligned}
$$
where $\mathcal{U}_0=\{z\in\mathbb{C}:|z-\kappa^{\mathrm{R}}_3|<\epsilon\}$ with $\epsilon$ sufficiently small to ensure $\mathcal{U}_0$ does not intersect $\Gamma$ and
 $\varphi_0\in(0,\pi/2)$ is chosen such that $\Gamma_k\cap\mathbb{C}^+\subset	\Omega_{\kappa_2^{\mathrm{R}}}^1$ for $k>k_0$ and $\Gamma_k\cap\mathbb{C}^+\subset	\Omega_{\kappa_1^{\mathrm{R}}}^2$ for $k\leq k_0$, the sectors in $\mathbb{C}$ separated by $\mathbb{R}$ and the corresponding curves $\Sigma^t_{\kappa^\mathrm{R}_l} \cup \bar{\Sigma}_{\kappa^\mathrm{R}_l}^t$, as shown in Figure~\ref{fig:zs}.

Define
$$\Sigma^{(1)}=\Gamma\,\bigcup\,\left(\Sigma^{2}_{\kappa^\mathrm{R}_1}\cup \bar\Sigma^{2}_{\kappa^\mathrm{R}_1}\cup\Sigma_{\kappa^\mathrm{R}_2}^1\cup\bar\Sigma_{\kappa^\mathrm{R}_2}^1
\cup\Sigma_{\kappa^\mathrm{R}_3}\right)$$
We define a new function
\begin{equation}\label{tran-1}
N^{(1)}(z)=e^{-ig(\infty)\sigma_3}\delta(\infty )^{-\sigma_3} N(z)\delta^{\sigma_3}(z) G(z)e^{g(z)\sigma_3},
\end{equation}
where
\begin{equation}\label{eq:G(zz)}
	G(z)= I+\begin{cases}
    V_4= \frac{\overline{r(\bar z)} e^{-2 i t \theta(z)}}{1+|r(z)|^2}\sigma_+
		,  z \in \Omega^{2}_{\kappa^\mathrm{R}_1}\cup\Omega^1_{\kappa^\mathrm{R}_2}\cap \Omega^1_{\kappa^\mathrm{R}_3},\quad  V^{-1}_2=	- r(z) e^{2 i t \theta(z)}\sigma_-,z\in\Omega^{2}_{\kappa^\mathrm{R}_3},\\
V_3=\frac{r(z) e^{2 i t \theta(z)}}{1+|r(z)|^2}\sigma_-,  z \in \bar\Omega^{2}_{\kappa^\mathrm{R}_1}\cup\bar\Omega^1_{\kappa^\mathrm{R}_2} \cup\bar\Omega^1_{\kappa^\mathrm{R}_3} , 	\quad \,\,\,V_1= {\overline{r(\bar z)}} e^{-2 i t \theta(z)}\sigma_+,\quad z\in\bar\Omega^2_{\kappa^\mathrm{R}_3},\\
		I,  \text { elsewhere, }\end{cases}
\end{equation}
	and		\begin{equation}
		\sigma_-=	\begin{pmatrix}
				0&0\\1&0
			\end{pmatrix},\qquad \sigma_+=\begin{pmatrix}
			0&1\\
			0&0
			\end{pmatrix}.
		\end{equation}
The function $g(z)$ is defined via the Cauchy integral
\begin{equation}\label{eq:hhhh(z)}
	g(z)=\frac{w(z)}{2\pi i}\int_{\tilde\Gamma}\frac{\mathcal{H}(s)}{w(s)(s-z)}\,ds,\qquad z\in\mathbb{C}\setminus\tilde\Gamma,
\end{equation}
where the function $\mathcal{H}(z)$ is prescribed on $\tilde\Gamma$ according to
\begin{equation}\label{eq:mathcal-H-def}
	\mathcal{H}(z)=
	\begin{cases}
		\frac{i\ln\delta^{2}(z)}{w_{+}(z)},\,\,\, \qquad z\in\Gamma_{[\frac{n}{2}]+1}\cap\mathbb{C}^+,\qquad\quad
\frac{2\alpha+i\ln\delta^{2}(z)}{w_{+}(z)},\qquad z\in(\tilde\Gamma\setminus\Gamma_{[\frac{n}{2}]+1})\cap\mathbb{C}^+,\\
		\frac{-i\ln\left(-i\overline{\delta^{2}(\bar z)}\right)}{w_{+}(z)},\,\,\, z\in(\tilde\Gamma\setminus\Gamma_{[\frac{n}{2}]+1})\cap\mathbb{C}^-,\quad
			\frac{2\alpha-i\ln\left(-i\overline{\delta^{2}(\bar z)}\right)}{w_{+}(z)},\,\,\, z\in(\tilde\Gamma\setminus n\Gamma_{[\frac{n}{2}]+1})\cap\mathbb{C}^-.
	\end{cases}
\end{equation}
Here $\alpha=\alpha(\xi) \in \mathbb{R}$ denotes a real parameter yet to be determined, while $w_{+}(z)$ represents the boundary value of $w(z)$ taken from the positive side of $\tilde{\Gamma}$. Consequently, when $\alpha$ is chosen real, the prescription \eqref{eq:mathcal-H-def} necessarily enforces the Schwarz symmetry condition $\mathcal{H}(z)=\overline{\mathcal{H}(\bar{z})}$ on $\tilde{\Gamma}$; this assertion will be rigorously established in the following lemma.

\begin{lemma}
There exists a unique choice of real constants $\alpha=\alpha(\xi)$, such that the function $g(z)$ defined in \eqref{eq:hhhh(z)} possesses the following properties:
	\begin{enumerate}
		\item It satisfies the symmetry condition $g(z)=\overline{g(\bar{z})}$.
		\item As $z\to\infty$, $g(z)$ admits the asymptotic expansion		\begin{equation}
		g(z)=g(\infty)\left(1+\frac{I_2}{z}+\mathcal{O}(z^{-2})\right),
		\end{equation}
		where $g(\infty)$ is a finite real constant given by
		\begin{align}\label{ginfty}
		g(\infty)=-\frac{1}{2\pi i}\int_{\tilde\Gamma}\frac{s^n\mathcal H(s)}{w(s)}\,ds, \quad
		I_2=\hat{f}_n+\frac{\displaystyle\int_{\tilde\Gamma}\frac{s^{n+1}\mathcal H(s)}{w(s)}\,ds}
		{\displaystyle\int_{\tilde\Gamma}\frac{s^n\mathcal H(s)}{w(s)}\,ds}.
		\end{align}
		\item $e^{ig(z)\sigma_3}$ is bounded and analytic for $z\in\mathbb{C}\setminus\tilde\Gamma_k$.
		\item For each $z \in \tilde\Gamma$, the boundary values $g_\pm(z)$ satisfy the jump relations
		\begin{align}\label{eq:h-condition}
			&g_{+}(z)+g_{-}(z)=
			\begin{cases}
				2\alpha+i\ln\delta^{2}(z)
,&z\in\tilde\Gamma_k\cap\mathbb{C}^+,\\
				2\alpha-i\ln (-\overline{\delta^2(\bar z)}),&z\in\tilde\Gamma_k\cap\mathbb{C}^-.
			\end{cases}
		\end{align}
	\end{enumerate}
\end{lemma}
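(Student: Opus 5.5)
We treat $g$ as the standard solution of a scalar additive RH problem on the hyperelliptic curve, and argue directly from the Cauchy representation \eqref{eq:hhhh(z)}.

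\emph{Jump relation (property 4).} Since $w$ changes sign across each cut, $w_+(z)=-w_-(z)$ there. The Plemelj--Sokhotski formula applied to $\frac{1}{2\pi i}\int_{\tilde\Gamma}\frac{\mathcal H(s)}{w(s)(s-z)}ds$ gives a jump of $\mathcal H(z)/w_+(z)$. Multiplying by $w_\pm$ then yields
\[
g_+(z)+g_-(z)=w_+(z)\,\frac{\mathcal H(z)}{w_+(z)}=w_+(z)\mathcal H(z),
\]
because the two principal-value contributions cancel after the sign change of $w$. We then read off \eqref{eq:h-condition} from the prescription \eqref{eq:mathcal-H-def}; this is exactly why $\mathcal H$ was divided by $w_+$. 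On the distinguished cut $\Gamma_{[\frac n2]+1}$ the constant is absent. This is a normalization of the $\alpha$-constants, and I would note that the statement's uniform ``$2\alpha$'' should be read with that convention.

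\emph{Behaviour at infinity and the choice of $\alpha$ (property 2).} We expand $\frac{1}{s-z}=-\sum_k s^k z^{-k-1}$ exactly as in the proof for $\delta$, using $w(z)=z^{n+1}\bigl(1-\frac{P_{2n+1}}{2}z^{-1}\cdots\bigr)$. This produces
\[
g(z)=-\frac{w(z)}{2\pi i}\sum_k \frac{m_k}{z^{k+1}},\qquad m_k=\int_{\tilde\Gamma}\frac{s^k\mathcal H(s)}{w(s)}\,ds .
\]
Boundedness at infinity requires the moment conditions $m_0=\dots=m_{n-1}=0$. Each $m_k$ depends affinely on the constants appearing in $\mathcal H$, so these conditions form a linear system. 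Uniqueness reduces to nondegeneracy of the matrix
\[
\left(\int_{\Gamma_l}\frac{s^k}{w(s)}\,ds\right)_{k,l},
\]
i.e. of the period matrix of the holomorphic differentials $s^k ds/w$ over the cycles encircling the cuts. Its invertibility is the standard fact that the $a$-periods of a basis of holomorphic differentials on a hyperelliptic curve are nonsingular (Riemann bilinear relations). This is the main obstacle. With only one real scalar $\alpha$ but $n$ conditions, I would either allow one constant $\alpha_l$ per cut, matching the count via the periods, or argue that by the symmetry below the real and imaginary parts of the system reduce it. I would first check the dimension count carefully on genus one, where a single $\alpha$ suffices. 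Once the conditions hold, keeping the $k=n,n+1$ terms gives $g(\infty)=-\frac{m_n}{2\pi i}$ and the $z^{-1}$ coefficient $-\frac{m_{n+1}+\hat f_n m_n}{2\pi i}$. This matches \eqref{ginfty}, using $\hat f_n=P_{2n+1}/2$ from \eqref{hat-ff}.

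\emph{Symmetry and reality (property 1).} With $\alpha$ real, conjugating \eqref{eq:mathcal-H-def} and using $\overline{w(\bar z)}=w(z)$ together with $\delta(z)=\overline{\delta(\bar z)}^{-1}$ shows $\overline{\mathcal H(\bar z)}$ matches the prescription on the reflected arc. The change of variable $s\mapsto\bar s$, which reverses the upward orientation, then gives $g(z)=\overline{g(\bar z)}$. This also makes the linear system for $\alpha$ real, hence with a real solution. Reality of $g(\infty)$ follows by letting $z\to\infty$ along $\mathbb R$.

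\emph{Boundedness of $e^{ig\sigma_3}$ (property 3).} Near the endpoints, $\mathcal H/w$ has integrable $|s-E|^{-1/2}$ singularities. The prefactor $w(z)$ vanishes like $|z-E|^{1/2}$, and the Cauchy integral grows at most logarithmically, so $g$ is bounded near the endpoints. Away from $\tilde\Gamma$, $g$ is analytic and tends to $g(\infty)$. Hence $g$ is bounded, and so is $e^{\pm ig}$.
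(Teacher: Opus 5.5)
Your expansion at infinity is the same computation as the paper's, and it is in fact all the paper proves. The paper expands the Cauchy kernel, \emph{assumes} $m_0=\dots=m_{n-1}=0$, and reads off $g(\infty)$ and $I_2$ from the terms $k=n,n+1$. (Note that $w(z)=z^{n+1}(1+\frac{P_{2n+1}}{2}z^{-1}+\cdots)$ carries a plus sign, which is what your final coefficient actually uses.)

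The genuine gap is the lemma's main claim: existence and uniqueness of a real $\alpha$ for which these moment conditions hold. You leave it open, and neither of your proposed ways out closes it as written.
\begin{itemize}
\item Your second way out cannot work. Once the density is Schwarz-symmetric (your property 1) and the cuts are vertical and oriented upward, $\overline{m_k}=-m_k$. So each condition $m_k=0$ is already a single real equation, and you still have $n$ real equations in one real unknown.
\item Even on the curve branched only on $\tilde\Gamma$ there are $[\frac n2]$ equations, so a single $\alpha$ matches the count only when $[\frac n2]=1$.
\item Your genus-one check does not test anything: for $n=1$ one has $\tilde\Gamma=\Gamma_1=\Gamma_{[\frac n2]+1}$, which carries no $\alpha$ at all.
\item Your first way out is the right one, but it proves a corrected statement and needs two changes you do not make. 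First, $w$ must be replaced, both as prefactor and in the kernel, by $w_{\tilde\Gamma}(z)=\prod_{k>[\frac n2]}\sqrt{(z-E_k)(z-\bar E_k)}$. With the full $w$ one gets $g_+=-g_-\neq0$ on $\Gamma\setminus\tilde\Gamma$, contradicting property 3 and your claim that $g$ is analytic off $\tilde\Gamma$, and the moment system is overdetermined. Second, each cut of $\tilde\Gamma$ other than $\Gamma_{[\frac n2]+1}$ needs its own real constant $\alpha_l$.
\end{itemize}
With both changes, the coefficient matrix is the $[\frac n2]\times[\frac n2]$ matrix of $a$-periods of the holomorphic differentials $s^k\,ds/w_{\tilde\Gamma}$, $0\le k<[\frac n2]$. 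It is invertible because a holomorphic differential with vanishing $a$-periods vanishes identically. After division by $i$ the system is real, which gives a unique real solution.

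Two further steps are wrong.

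\textbf{Jump relation.} You write $w_+\cdot\mathcal H/w_+=w_+\mathcal H$, but the left side equals $\mathcal H$. This slip hides a normalization defect in \eqref{eq:hhhh(z)}. Since $\mathcal H$ in \eqref{eq:mathcal-H-def} is already divided by $w_+$, the kernel $\mathcal H(s)/(w(s)(s-z))$ has a density behaving like $|s-E_k|^{-1}$ at the endpoints, so the integral diverges. Formally, Plemelj then gives $g_++g_-=\mathcal H$ rather than the stated jump. You should say explicitly that one division by $w$ must be dropped, as in the even-genus definition \eqref{eq:hhh(z)}, whose kernel is $\mathcal H(s)/(s-z)$. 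With that correction your Plemelj argument yields property 4, and treating the missing $2\alpha$ on $\Gamma_{[\frac n2]+1}$ as a normalization is legitimate.

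\textbf{Boundedness (property 3).} A density with an $|s-E_k|^{-1/2}$ endpoint singularity gives a Cauchy integral of size $|z-E_k|^{-1/2}$, not a logarithm. More importantly, the datum $i\ln\delta^2$ is itself logarithmically singular at the endpoints of $\tilde\Gamma$. There $\delta$ contains the factor $\tilde\nu^{-1}(z)\sim c\,(z-E_k)^{1/4}$, while its exponential factor tends to $1$. Hence $g=\frac i4\ln(z-E_k)+O(1)$, and $e^{ig\sigma_3}$ has entries of size $|z-E_k|^{\mp1/4}$. Only the combination $\delta^{\sigma_3}e^{ig\sigma_3}$ is bounded near these endpoints. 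So your boundedness argument does not go through, and property 3 fails as literally stated.
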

\begin{proof}
	Expanding the Cauchy kernel as in \eqref{eq:cauchy} and substituting into \eqref{eq:hhhh(z)}, we obtain
	\begin{equation}
		g(z)=-\frac{1}{2\pi i}\sum_{k=0}^{\infty}\frac{w(z)}{z^{k+1}}\widetilde M_k,
		\qquad 
		\widetilde M_k:=\int_{\tilde\Gamma}\frac{s^k\mathcal H(s)}{w(s)}\,ds.
	\end{equation}
	Under the moment conditions $\widetilde M_0=\cdots=\widetilde M_{n-1}=0$, only the terms with $k\ge n$ contribute. Using $w(z)=z^{n+1}+\hat f_nz^n+\cdots+a_0$, we have
	\begin{align*}
		g(z)&=-\frac{1}{2\pi i}\left[\left(1+\frac{\hat f_n}{z}+O(z^{-2})\right)\widetilde M_n+\left(\frac{1}{z}+O(z^{-2})\right)\widetilde M_{n+1}+O(z^{-2})\right]\\
		&=-\frac{\widetilde M_n}{2\pi i}-\frac{\hat f_n\widetilde M_n+\widetilde M_{n+1}}{2\pi i\,z}+O(z^{-2}).
	\end{align*}
Writing $g(z)=g(\infty)\bigl(1+\frac{I_2}{z}+O(z^{-2})\bigr)$ and comparing coefficients completes the proof of the lemma.
\end{proof}
It follows that $N^{(1)}(z)$ satisfies the following RH problem.
\begin{problem}
	Construct a meromorphic function $N^{(1)}(z) :\mathbb{C}\setminus \Sigma^{(1)}\to SL_2(\mathbb{C})$  such that:
	\begin{enumerate}
		\item $N^{(1)}(z)=I+\mathcal{O}(z^{-1}),$\quad $|z|\to\infty$.
		\item 	For each $z\in\Sigma^{(1)}$, the boundary values $N^{(1)}_\pm(z)$ satisfy the jump relation
		$$N^{(1)}_+(z)=N^{(1)}_-(z)J^{(1)}(z),$$
		where
  \begin{equation}\label{eq:J-pele}
        J^{(1)}(z)=
        \begin{cases}
            I+V_3(z)\delta^{2}(z)\sigma_-, \qquad z\in\Sigma^{1}_{\kappa^R_2}\cup\Sigma^{2}_{\kappa^R_1}\cup
            \Sigma^{1}_{\kappa^R_3}           , \quad   I+V_1(z)\delta^{-2}(z)\sigma_+,  z\in\bar\Sigma^{2}_{\kappa^R_3}.            \\
            I+V_4(z)^{-1}\delta^{-2}(z)\sigma_+, \,\,\,z\in\bar\Sigma^{1}_{\kappa^R_2}\cup\bar\Sigma^{2}_{\kappa^R_1}\cup\bar\Sigma^{1}_{\kappa^R_1}            ,\quad    I+V_2(z)\delta^2(z)\sigma_-,  z\in\Sigma^{2}_{\kappa^R_3},\\            \begin{pmatrix}
                0 & ie^{-2\pi ic_k-2i\alpha}\\
                ie^{2\pi ic_k+2i\alpha} & 0
            \end{pmatrix},  z\in\tilde\Gamma_k\cap\mathbb{C}^+,\, \begin{pmatrix}
                0 & -ie^{-2\pi ic_k-2i\alpha}\\
                -ie^{2\pi ic_k+2i\alpha} & 0
            \end{pmatrix},  z\in\tilde\Gamma_k\cap\mathbb{C}^-,\\
\delta(z)^{\hat\sigma_3}J(z),z\in(\kappa^R_{1},\kappa^R_{2}).
        \end{cases}
    \end{equation}

	\item The function $N^{(1)}(z)$ has singularities at the endpoints of\,\, $\Gamma_k$ of order at most $|z-E_k|^{-1/4}$ or $|z-\bar{E}_k|^{-1/4}$.
	\end{enumerate}
\end{problem}
We will construct the solution $N^{(1)}(z)$ by seeking a solution of the form
\begin{equation}\label{eq:trans-2}
N^{(1)}(z)= \begin{cases}\mathcal{E}(z) N^{alg}(z), & z \in \mathbb{C} \backslash U_{loc}, \\ \mathcal{E}(z) N^{loc}(z) N^{alg}(z), & z \in U_{loc},\end{cases}
\end{equation}
where $U_{loc}$ denotes the union of open disks
\begin{equation}\label{eq:UU}
	U_{loc}:=D_{\rho_0}(z^{R}_{j})\cup D_\epsilon(\kappa_3^{\mathrm{R}}),
\end{equation}
where the parameter $\rho_0$ is chosen sufficiently small such that
$D_{\rho_0}(z^{R}_{j})$ contains the points $\kappa^{\mathrm{R}}_{1}$,
$z_{j}$, $\kappa^{\mathrm{R}}_{2}$ but does not intersect
$\Gamma_{k_0}$ or $\Gamma_{k_0+1}$. Here, $D_\epsilon(\kappa^{\mathrm{R}}_{3})$
denotes a disk with radius $\epsilon\equiv\epsilon(\xi)$ chosen small enough
to ensure it does not intersect $\Gamma_{k_0+2}$. Moreover, $N^{alg}(z)$ satisfies the RH problem~\ref{RH-global},
$N^{loc}(z)$ is defined via the RH problem~\ref{RH-ee}, and
the small-norm problem $\mathcal{E}(z)$ is specified in the RH problem~\ref{RHee}.

\subsection{Solution of the Model problem}
The matrix $N^{(1)}(z)$ is meromorphic on $\mathbb{C}\setminus\Sigma^{(1)}$
and satisfies the jump condition \eqref{eq:J-pele} across $\Sigma^{(1)}$.
Away from the real stationary phase points, the jump matrix approaches the
identity exponentially fast. Specifically, recalling $\theta(z)$ defined
in \eqref{eq:theta(z)}, we have the uniform estimate
\begin{equation}\label{eq:j-2-2}
	\sup_{z\in\Sigma^{(1)}\setminus\tilde\Gamma}\left\|J^{(1)}(z)-I\right\|=\mathcal{O}(e^{-ct}),\quad c=c(\xi)>0,
\end{equation}
as $t\to\infty$. Furthermore, on $\Gamma\setminus\tilde\Gamma$, the jump
simplifies to the identity matrix $J^{(1)}(z)=I$, since
\begin{align}
	J^{(1)}(z)=V_4^{-1}(z)V^\dagger(z)V_4(z),
\end{align}
and by virtue of \eqref{eq:tilder--1} and \eqref{eq:tilder--2}, the
off-diagonal entries vanish identically:
\begin{align}
	\delta^{-2}(z)e^{2it\theta(z)}\left(r_-(z)-r_+(z)+\frac{ie^{i\phi_k}}{\overline{a_+(\bar z)a_-(\bar z)}}\right)\equiv 0, \quad z\in\Gamma\setminus\tilde\Gamma.
\end{align}
Thus, we arrive at the following normalized RH problem.
\begin{problem}\label{RH-global}
	Construct a meromorphic function $N^{alg}(z) :\mathbb{C}\setminus\tilde\Gamma\to SL_2(\mathbb{C})$ such that:
	\begin{enumerate}
		\item $N^{alg}(z)=I+\mathcal{O}(z^{-1}),$\quad $|z|\to\infty$.
		\item 	For each $z\in\tilde\Gamma_k$, the boundary values $N_\pm^{alg}(z)$ satisfy the jump relation
		$$N^{alg}_+(z)=N^{alg}(z)J^{alg}(z),$$
		where $J^{alg}(z)=J^{(1)}(z)|$ for $_{z\in\tilde\Gamma}$.
		\item  The function $N^{alg}(z)$ has singularities at the endpoints of $\Gamma_k$ of order at most $|z-E_k|^{-1/4}$ or $|z-\bar{E}_k|^{-1/4}$.
	\end{enumerate}
\end{problem}
The RH problem \ref{RH-global} can be solved explicitly by
\begin{equation}
	N^{alg}(z)=\tilde N^{alg}(\infty)^{-1} \tilde N^{alg}(z),
\end{equation}
where $\tilde N^{alg}(z)=\left(\tilde N_{i j}^{(alg)}(z)\right)_{i, j=1,2}$ and
$$
\begin{aligned}
	& \tilde N_{s1}^{alg}(z)=\frac{1}{2}\left(\hat\nu(z)+\hat\nu(z)^{-1}\right) \frac{\Theta(\varphi(z)+\tilde c+d_s)}{\Theta(\varphi(z)+d_s)} ,\\
	& \tilde N_{s2}^{alg}(z)=\frac{1}{2}\left(\hat\nu(z)-\hat\nu(z)^{-1}\right) \frac{\Theta(\varphi(z)-\tilde c-d_s)}{\Theta(\varphi(z)-d_s)},\quad s=1,2.
\end{aligned}
$$
Herein, we define the auxiliary function
\[
\hat\nu(z)=\sqrt[4]{\prod_{k>\left[\frac{n}{2}\right]}\frac{z-E_k}{z-\bar E_k}},
\]
and denote by $\Theta$ the Riemann theta function as specified in \eqref{eq:THeze}. Furthermore, we introduce the parameter vector $\tilde c=\{\tilde c_k\,|\,\tilde c_k=-c_k-\frac{\alpha}{\pi},\, k>\left[\frac{n}{2}\right]+1\}$. The Abel map $\varphi\colon \mathcal{X} \to \mathbb{C}^{\left[\frac{n}{2}\right]}$ is defined by \eqref{eq:abel}, and we set $d_1=\varphi(\mathcal{D})+K,\,d_2=-\varphi(\mathcal{D})-K,$ where $K\in\mathbb{C}^{\left[\frac{n}{2}\right]}$ denotes the vector-valued Riemann constant and $\mathcal{D}$ signifies the pole divisor on the Riemann surface $\mathcal{X}$.

Additionally, the matrix function $N^{alg}(z)$ exhibits the following asymptotic behavior at infinity:
\begin{equation}\label{eq:Nglo-solution}
	N^{alg}(z)=I+\frac{N_1^{alg}}{z}+\mathcal{O}\left(z^{-2}\right),\qquad z\to\infty,
\end{equation}
and the reconstruction formula for $\tilde Q(x)$ is given by
\begin{equation}\label{tile-Q}
\tilde Q(x)=(N_1^{alg})_{12}=-\frac{i}{2}\operatorname{Im}\left(\sum_{k>\left[\frac{n}{2}\right]}E_k\right)\cdot\frac{\Theta\bigl(\varphi(\infty^+)+d_1\bigr)\,\Theta\bigl(\varphi(\infty^+)-\tilde{c}-d_1\bigr)}{\Theta\bigl(\varphi(\infty^+)-d_1\bigr)\,\Theta\bigl(\varphi(\infty^+)+\tilde{c}+d_1\bigr)}.
\end{equation}

\subsection{Solution of the Local Problem}
Within $U_{loc}$, the bound \eqref{eq:j-2-2} is merely pointwise
and lacks the uniformity necessary for $\mathcal{E}(z)$ to satisfy a
uniformly small-norm RH problem. Therefore, we introduce the local
parametrix $N^{loc}(z)$ matching the jumps of $N^{(1)}(z)$ on
$\Sigma^{loc} := \Sigma^{(1)} \cap U_{loc}$. The
construction of $N^{loc}(z)$ is formulated as follows.
\begin{problem}\label{RH-ee}
		Construct a meromorphic function $N^{loc}(z) :\mathbb{C}\setminus\Sigma^{loc}\to SL_2(\mathbb{C})$ such that:
	\begin{enumerate}
		\item $N^{loc}(z)=I+\mathcal{O}(z^{-1}),$\quad $|z|\to\infty$.
		\item 	For each  $z \in \Sigma^{loc}$, the boundary values $N^{loc}_\pm(z)$ satisfy the jump relation
		$$N^{loc}_+(z)=N^{loc}_-(z)J^{(loc)}(z),$$
		where $J^{l o c}(z)=\left.J^{(1)}(z)\right|_{z \in \Sigma^{ loc}}$.
	\end{enumerate}
\end{problem}
Consider the case $z\in D_\epsilon(\kappa^{\mathrm{R}}_{3})$, the jump matrix is estimated by
\begin{equation}
    \bigl\|J^{(1)}(z)-I\bigr\|_{L^\infty(\Sigma^{\mathrm{R}}_{\kappa_3})}=\mathcal{O}(t^{-1/2}),\qquad z\in\Sigma^{(1)}\cap D_\epsilon(\kappa^{R}_{3}).
\end{equation}

Consider the case $z\in D_{\rho_0}(z_{j})$, as in the mKdV case \cite{PD93}, the signature table dictates a factorization of $J(z)$ on $\mathbb{R} \backslash I$ that enables deformation of the RH problem for $N^{(1)}(z)$. The resulting triangular factors decay exponentially to $I$ as $t \rightarrow+\infty$, uniformly away from $ D_{\rho_0}(\kappa^{\mathrm{R}}_{3})$. Within this neighborhood, $e^{i t \theta(z)}$ admits the approximation
\begin{equation}\label{eq:t-lambda}
	t \theta(z)=\frac{4}{3}\lambda^3+\varpi \lambda+\mathcal{O}\left(t^{-1 / 3}\lambda^4\right),
\end{equation}
where $\lambda$ is the scaled spectral parameter
\begin{equation}\label{eq:trans-z-k}
	\varpi=2\frac{\theta^\prime(z_j)}{\sqrt[3]{\theta^{\prime\prime\prime}(z_j)}}t^{2 / 3}, \quad \lambda=\sqrt[3]{\frac{\theta^{\prime\prime\prime}(z_j)t}{8}}(z-z_j).
\end{equation}
The coefficients above are selected so that the scaled phase factor matches that of RH problem \ref{RH2-2} for a familiar equation. Indeed, the right-hand side of \eqref{eq:t-lambda} displays the phase factor from RH problem \ref{RH2-2}, associated with
\begin{align}
	&u_{{\varpi}{\varpi}}={\varpi}u+2u^2v,\\
	&v_{{\varpi}{\varpi}}={\varpi}v+2v^2u.
\end{align}
Thus, asymptotics in this region are expressible via Painlevé II transcendents in ${\varpi}$.

 In view of the structural properties exhibited by the preceding expansion, we define appropriate scaled spectral variables. These are designed to ensure correspondence between the coefficients of exponential terms and those appearing in the Painlevé II model RH problem in Appendix \ref{App-B}. Using the transform $z\mapsto\lambda$ in \eqref{eq:trans-z-k}, the RH problem \ref{RH-ee} for $N^{loc}(z)$ transforms into an RH problem for $N^{loc}(\lambda)$ with jump contour $L_j=L^t_{j}\cup \bar{L}^t_{j} (t=1,2)$, as shown in Figure \ref{ddd}. The basic estimate
 \begin{equation}\label{eq:tras}
 	\left\|\left(e^{2i t \theta(z)} r\left(z\right)-\mathrm{e}^{\frac{8}{3}i \lambda^3+2 i{\varpi}\lambda} r(z_j)\right)\right\|_{ \Sigma^{(1)}\setminus U_{loc}} \leq Ct^{-\frac{1}{3}},\quad C>0,
 \end{equation}
 which, in particular, implies that
 \begin{equation}
 	N_1^{loc}(\lambda)=\tilde N_1^{loc}(\lambda)+\mathcal{O}(t^{-1/3}),
 \end{equation}
 where $N_1^{loc}(\lambda)$ and $\tilde N_1^{loc}(\lambda)$ are the coefficients in the large $\lambda$ expansion of
 \begin{align}
 	N^{loc}(\lambda)=I+\frac{N_1^{loc}({\varpi})}{\lambda}+\mathcal{O}(\lambda^{-2}),\quad\tilde N^{loc}(\lambda)=I+\frac{\tilde N_1^{loc}({\varpi})}{\lambda}+\mathcal{O}(\lambda^{-2}),
 \end{align}
and $\tilde N^{loc}(\lambda)$ is the solution of the following RH problem.
\begin{problem}\label{RHtilde-N}
	Construct a meromorphic function $\tilde N^{loc}(\lambda) :\mathbb{C}\setminus L_{j}\to SL_2(\mathbb{C})$  such that:
	\begin{enumerate}
		\item $\tilde N^{loc}(\lambda)=I+\mathcal{O}(z^{-1}),$\quad $|z|\to\infty$.
		\item 	For each  $z \in L_j$, the boundary values $\tilde N^{loc}_\pm(\lambda)$ satisfy the jump relation
		$$\tilde N^{loc}_+(\lambda)=\tilde N^{loc}_-(\lambda)\tilde J^{loc}(\lambda),$$
where
		\begin{align}\label{eq:J(z)-loc}
			&\tilde J^{loc}(\lambda)=I+ \begin{cases}
		\frac{{r(z_j)}e^{2i\left(\frac{4}{3}\lambda^3+{\varpi}\lambda\right)}}{1+|r(z_j)|^2}\delta^{-2}(z_j^\mathrm{R})\sigma_-&z \in \bar L^{1}_{j}\cup\bar L_{j}^2, \\
			- \frac{\overline{r(\bar z_{j})}e^{-2i\left(\frac{4}{3}\lambda^3+{\varpi}\lambda\right)}}{1+|r(z_{j})|^2}\delta^2(z_{j})\sigma_+, & z \in L^{1}_{j}\cup L_{j}^2.  \end{cases}
		\end{align}
	\end{enumerate}
\end{problem}
\begin{figure}
	{
		\begin{minipage}{15cm}\centering
			\includegraphics[scale=0.276]{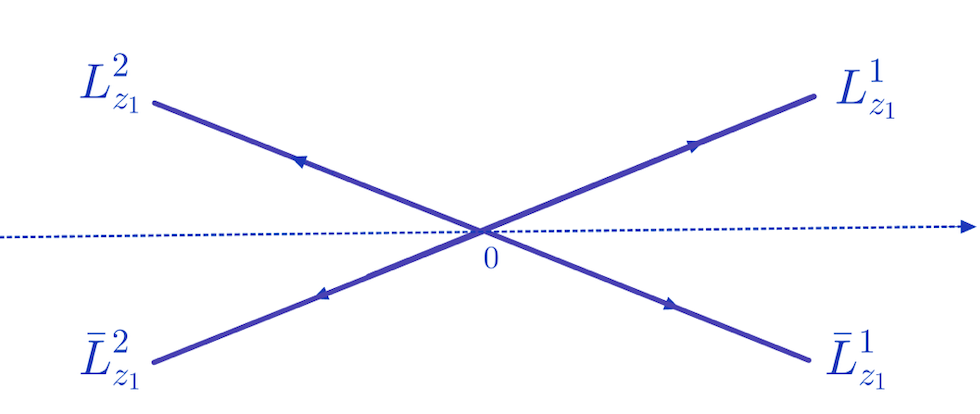}
	\end{minipage}}
	\caption{The jump contour is given by $L_{j}=\bigcup_{k=1}^{4}L_{j}^{k}$.}
	\label{ddd}
\end{figure}
In order to determine $\tilde{N}^{loc}(\lambda)$, we transform the jump matrix \eqref{eq:J(z)-loc} of RH problem \ref{RHtilde-N} into a form standard for studying the asymptotic behavior of Painlev\'{e} II transcendents. Following the approach in \cite{AAA06}, we find that the solution $\tilde{N}^{loc}(\lambda)$ can be expressed in terms of
\begin{equation}
 \tilde N^{loc}(\lambda) =\sigma_1 e^{-i\left(\frac{\pi}{4}+\frac{\varphi_0}{2}\right) \hat{\sigma}_3} N^{P}(\lambda)  \sigma_1,\quad\varphi_0=\arg r(z_j)
\end{equation}
where $N^{P}(\lambda)$ satisfies a standard Painlev\'{e} II model RH problem given in Appendix \ref{App-B} with parameter
\begin{equation}\label{eq:rrrho}
	\rho=i\frac{\overline{r(\bar{z}_{j})}e^{-2i\left(\frac{4}{3}\lambda^3+{\varpi}\lambda\right)}}{1+|r(z_{j})|^2}\delta^2(z_j).\end{equation}
Finally, as $\lambda\to\infty$, $N^{loc}(z)$ can be described by the following equation
\begin{equation}
	N^{loc}(z)=I+\frac{2N^{loc}_1({\varpi})}{\sqrt[3]{\theta^{\prime\prime\prime}(z_j)t}(z-z_j)}+\mathcal{O}\left(t^{-\frac{2}{3}+\epsilon}\right),
\end{equation}
where $N^{loc}_1({\varpi})$ is given in \eqref{eq:Pe-e} with
\begin{equation}\label{asym-u-q}
	u({\varpi})=2\left(\tilde N^{loc}(\lambda)\right)_{12}={\rm Im}\rho Ai({\varpi})+\mathcal{O}\left(\varpi^{-\frac{1}{4}} e^{-\frac{4}{3} \varpi^{3 / 2}}\right).\end{equation}.
 \subsection{Solution of the Small norm Problem}
Using the functions $N^{alg}(z)$ and $N^{loc}(z)$ defined by RH problems \ref{RH-global} and \ref{RH-ee}, respectively, we implicitly define an unknown function $\mathcal{E}(z)$ that is analytic in $\mathbb{C}\setminus\Sigma^{\mathcal{E}}$, where
\begin{equation}
	\Sigma^{\mathcal{E}}=\left(\Sigma^{(2)} \setminus\left(\Sigma^{loc} \cup \tilde\Gamma\right)\right) \cup \partial U_{loc},
\end{equation}
and we orient $\partial\Sigma^{\mathcal{E}}$ clockwise. It is straightforward to show that $\mathcal{E}(z)$ satisfies the following RH problem.
 \begin{problem}\label{RHee}
 	Construct a meromorphic function $\mathcal{E}(z):\mathbb{C}\setminus\Sigma^{\mathcal{E}}\to SL_2(\mathbb{C})$  such that:
 	\begin{enumerate}
 		\item $\mathcal{E}(z)=I+\mathcal{O}(z^{-1}),$\quad $|z|\to\infty$.
 		\item 	For each $z\in\Sigma^{\mathcal{E}}$, the boundary values $\mathcal{E}_\pm(z)$ satisfy the jump relation
 		$$\mathcal{E}_+(z)=\mathcal{E}_-(z)J_{\mathcal{E}}(z)$$
 		where
 		\begin{equation}\label{err-11}
 			J_{\mathcal{E}}(z)= \begin{cases}N^{alg}(z) J^{(1)}(z)\left(N^{alg}(z)\right)^{-1}, & z \in \Sigma^{(2)} \backslash {U}_{loc}, \\
 				N^{alg} (z)N^{loc}(z)\left(N^{alg}(z)\right)^{-1}, & z \in  \partial U_{loc},
 			\end{cases}
 		\end{equation}
 	\end{enumerate}
 \end{problem}
 Starting from \eqref{eq:j-2-2} and using\eqref{eq:tras} for $z\in\mathbb{C}\setminus\Sigma^{loc}$, and the boundedness of $N^{alg}(z)$ for $z\in D_{z^R_{j}}$, one finds that
 \begin{align}\label{eq:L-P}
 	\left|J_{\mathcal{E}}(z)-I\right|=\begin{cases}
 		\mathcal{O}\left(e^{-ct}\right),& z\in\Sigma^{(2)}\setminus 	U_{z_j^\mathrm{R}},\\
 		\mathcal{O}(t^{-1/3}), &z\in \partial 	U_{z_j^\mathrm{R}},
 	\end{cases}
 \end{align}
This uniformly vanishing bound on $J_{\mathcal{E}}(z)-I$ establishes RH problem \ref{RHee} as a small-norm RH problem, for which there is a walg known existence and uniqueness theorem \cite{PZ03,PDZ94,XZ89}.  In order to reconstruct the solution $q(x, t)$ of \eqref{eq:fNLS} we need the large $z$ behavior of the solution of RH problem \ref{RH-EE}. This will include the large $z$ expansion of $\mathcal{E}(z)$ which we give here.
 $$
\mathcal{E} (z)=I+\frac{1}{2 \pi i} \int_{\Sigma^{\mathcal{E}}} \frac{(I+\eta(s))\left(J_{\mathcal{E}}(s)-I\right)}{s-z} d s=I+\frac{\mathcal{E}_1}{z}+\mathcal{O}(z^{-2})
 $$
 where $\eta \in L^2\left(\Sigma^{\mathcal{E}}\right)$ is the unique solution of
 $
 \left(1-C_{\Sigma^\mathcal{E}}\right) \eta=C_{\Sigma^\mathcal{E}} I
 $ and
\begin{equation}
	\mathcal{E}_1=\frac{2iN^{loc}_1({\varpi})}{\sqrt[3]{\theta^{\prime\prime\prime}(z_{j}^\mathrm{R})t}(z-z_{j}^\mathrm{R})}+\mathcal{O}\left(t^{-1}\right).
\end{equation}

Inverting the sequence of transformations \eqref{tran-1} and \eqref{eq:trans-2}, the solution of RH problem \ref{RH2-2} is given by
\begin{equation}
	N(z)=\delta^{\sigma_3}(\infty)\mathcal{E}(z)N^{alg}(z)\delta^{-\sigma_3}(z)G^{-1}(z),\quad z\in\mathbb{C}\setminus U_{z^R_{1}}.
\end{equation}
Expanding as $z\to\infty$, we obtain
\begin{equation}
	N(z)=\delta^{\sigma_3}(\infty)\left(I+\frac{\mathcal{E}_1}{z}+\mathcal{O}(z^{-2})\right)\left(I+\frac{N_1^{alg}}{z}+\mathcal{O}(z^{-2})\right)\left(I+\frac{\delta_1}{z}+\mathcal{O}(z^{-2})\right)\delta^{-\sigma_3}(\infty),
\end{equation}
and consequently, by \eqref{tile-Q}, \eqref{err-11} and \eqref{asym-u-q},
\begin{align}\nonumber
	q(x,t) &= 2ie^{2i(f_0x+g_0t)}\lim_{z\to\infty}zN_{12}(z) = 2e^{2i(f_0x+g_0t)}\delta^2(\infty)\left(\left(\mathcal{E}_1\right)_{12}+\left(N_1^{alg}\right)_{12}\right) \\
	&= 2e^{2i(f_0x+g_0t)}\delta^2(\infty)\frac{u(\varpi)}{\sqrt[3]{\theta'''(z_{j})t}(z-z_{j}^\mathrm{R})} - 2e^{2i(f_0x+g_0t)}\delta^2(\infty)\tilde Q(x,t).
\end{align}
This completes the proof of Theorem \ref{theorem-1}.
\section{Long-time Asymptotics in Even-genus Backgrounds}\label{sec-4}
In this section, we restrict our attention to the case of even genus, with the distribution of stationary phase points determined by
\begin{equation}\label{h-z-xii}
	h(z;\xi) = 4(z-\kappa_1^{\mathrm{R}})(z-\kappa^{\mathrm{R}}_2)\prod_{s=1}^{\frac{n}{2}}\left[(z-\kappa_s^{\mathrm{C}})(z-\bar{\kappa}_s^{\mathrm{C}})\right],
\end{equation}
Among these, the number of branch cuts $\Gamma\setminus\Gamma_{\frac{n}{2}}$ in the left and right half-planes is equal, while the remaining cut $\Gamma_{\frac{n}{2}}$ is located either in the left half-plane or in the right half-plane.

The distinction between the two distributions becomes apparent in a neighborhood of $\xi=0$, specifically in the regime $|\xi|<d$ for some $d>0$,  where the infinite branch coalesces with the first positive stationary point. When an even number of branch cuts $\Gamma_k$ lies in the left half-plane, the first positive stationary point increases monotonically with decreasing $\xi$ as shown in Figure~\ref{FIguree45}. Conversely, when an odd number of cuts $\Gamma_k$ lies in the left half-plane, the first positive stationary point decreases monotonically with decreasing $\xi$ as shown in Figure~\ref{ffuu}. In this section, we restrict our attention to the latter case, where an odd number of cuts $\Gamma_k$ lies in the left half-plane.
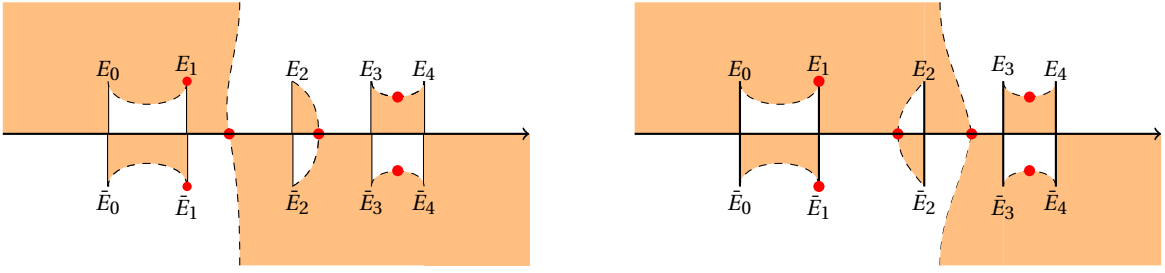
\begin{figure}[htp]
	\begin{center}
		\begin{tikzpicture}[scale=0.7]			
			\fill[orange!50](1,2.5)--(1,0)--(3,0)--(3,1)to[out=-90, in=-90]
			(4.5,1)--(4.5,0)--(6,0)
			to[out=90, in=-130]
			(6.5,1)--(6.5,2.5)--(1,2.5)--cycle;
			
				\fill[orange!50](3,0)--(3,-1)to[out=90, in=90]
			(4.5,-1)--(4.5,0)--cycle;

			\draw[dashed](3,1)to[out=-90, in=-90]
			(4.5,1);
			\draw[dashed](6,0)
			to[out=90, in=-130]
			(6.5,1);
				\fill[orange!50](6,0)
				to[out=-90, in=130]
				(6.5,-1)--(6.5,0)
				--cycle;
			\draw[dashed](6,0)
			to[out=-90, in=130]
			(6.5,-1);
			
				\fill[orange!50](6.5,2.5)--(6.8,2.5)
			to[out=-90, in=100]
			(7.4,0)--(6.5,0)--cycle;
			\draw[dashed](6.8,2.5)
			to[out=-90, in=100]
			(7.4,0);
			
				\draw[dashed](6.8,-2.5)
			to[out=90, in=-100]
			(7.4,0);
			
				\fill[orange!50](6.8,-2.5)
				to[out=90, in=-100]
				(7.4,0)--(8,0)--(8,-2.5)--cycle;
				
					\fill[orange!50](11,-2.5)--(11,0)--(9,0)--(9,-1)
				to[out=90, in=90]
				(8,-1)--(8,-2.5)--cycle;
				\draw[dashed](9,-1)
				to[out=90, in=90]
				(8,-1);

				\fill[orange!50](8,0)--(8,1)
			to[out=-90, in=-90]
			(9,1)--(9,0)--cycle;
			\draw[dashed](8,1)
			to[out=-90, in=-90]
			(9,1);

				\draw[thick](-9,-1)--(-9,1);
					\draw[thick](-7.5,-1)--(-7.5,1);
								\draw[thick](-5.5,-1)--(-5.5,1);
							\draw[thick](-4,-1)--(-4,1);
							\draw[thick](-3,-1)--(-3,1);
							
					\fill[orange!50](-4,1)to[out=-90, in=-90]
			(-3,1)--(-3,0)--(-4,0)--cycle;
			
			\draw[dashed](-4,1)to[out=-90, in=-90]
			(-3,1);
								\fill[orange!50](-5.5,1)to[out=-30, in=90]
						(-5,0)--(-5.5,0)--cycle;
							
					\draw[dashed](-5.5,1)to[out=-30, in=90]
					(-5,0);	\draw[dashed](-4,1)to[out=-90, in=-90]
					(-3,1);
							\draw[dashed](-5.5,-1)to[out=30, in=-90]
						(-5,0);	
							
									\draw[dashed](4.5,-1)to[out=90, in=90]
							(3,-1);

		\fill[orange!50]
		(-11,2.5)--(-6.5,2.5)
		to[out=-90, in=100] (-6.7,0)
		--(-7.5,0)--(-7.5,1)to[out=-90, in=-90] (-9,1)--(-9,0)--(-11,0)
		--(-11,2.5)-- cycle;			
			\draw[dashed](-7.5,-1)to[out=90, in=90] (-9,-1);

		\fill[orange!50]
	(-6.5,-2.5)
		to[out=90, in=-80] (-6.7,0)
		--(-5.5,0)--(-5.5,-2.5)--(-5.5,-1)to[out=30, in=-90](-5,0)-- (-4,0)--(-4,-2.5)-- (-4,-1)to[out=90, in=90]
		(-3,-1)--(-3,-2.5)--
		cycle;			
	
	\fill[orange!50]
(-3,-2.5)--(-1,-2.5)--(-1,0)--(-3,0)--
	cycle;			
	
			\draw[dashed](-5.5,-1)to[out=30, in=-90]
		(-5,0);	
			\draw[dashed](-4,-1)to[out=90, in=90]
			(-3,-1);

		\draw[dashed]	(-6.5,2.5)
		to[out=-90, in=100] (-6.7,0) ;				
		\draw[dashed]	(-7.5,1)to[out=-90, in=-90] (-9,1) ;

			\fill[orange!50](-7.5,0)--
	(-7.5,-1)to[out=90, in=90] (-9,-1)--(-9,0)--cycle;
	
		\draw[dashed](-7.5,-1)to[out=90, in=90] (-9,-1);
	
			\draw[dashed]	(-6.5,-2.5)
			to[out=90, in=-80] (-6.7,0) ;

									\draw[thick](9,-1)--(9,1);
							\draw[thick](8,-1)--(8,1);
							\draw[thick](6.5,-1)--(6.5,1);
							\draw[thick](4.5,-1)--(4.5,1);
							\draw[thick](3,-1)--(3,1);

				\fill[red] (-7.5,1) circle (2.5pt);			
				\fill[red] (-7.5,-1) circle (2.5pt);
								\fill[red] (-6.7,0) circle (3pt);
								\fill[red] (-5,0) circle (3pt);
									\fill[red] (-3.5,0.7) circle (3pt);
										\fill[red] (-3.5,-0.7) circle (3pt);
							\fill[red] (4.5,1) circle (3pt);		
								\fill[red] (4.5,-1) circle (3pt);				
										\fill[red] (6,0) circle (3pt);				
											\fill[red] (7.4,0) circle (3pt);				
								\fill[red] (8.5,0.7) circle (3pt);				
										
										\fill[red] (8.5,-0.7) circle (3pt);				
						
			\draw[thick,->](-11,0)--(-1,0);
			\draw[thick,->](1,0)--(11,0);			
			
				\node  at (-9,1.2)  { \footnotesize$E_0$};
					\node  at (-7.5,1.3)  { \footnotesize$E_1$};
						\node  at (-5.4,1.2)  { \footnotesize$E_2$};
						\node  at (-4,1.2)  { \footnotesize$E_3$};
							\node  at (-3,1.2)  { \footnotesize$E_4$};
						
							\node  at (-9,-1.3)  { \footnotesize$\bar E_0$};
						\node  at (-7.5,-1.4)  { \footnotesize$\bar E_1$};
						\node  at (-5.4,-1.3)  { \footnotesize$\bar E_2$};
						\node  at (-4,-1.3)  { \footnotesize$\bar E_3$};
						\node  at (-3,-1.3)  { \footnotesize$\bar E_4$};

							\node  at (9,1.2)  { \footnotesize$E_4$};
						\node  at (8,1.3)  { \footnotesize$E_3$};
						\node  at (6.5,1.2)  { \footnotesize$E_2$};
						\node  at (4.5,1.3)  { \footnotesize$E_1$};
						\node  at (3,1.2)  { \footnotesize$E_0$};
						
							\node  at (9,-1.3)  { \footnotesize$\bar E_4$};
						\node  at (8,-1.4)  { \footnotesize$\bar E_3$};
						\node  at (6.5,-1.3)  { \footnotesize$\bar E_2$};
						\node  at (4.5,-1.4)  { \footnotesize$\bar E_1$};
						\node  at (3,-1.3)  { \footnotesize$\bar E_0$};
						
		\end{tikzpicture}
	\end{center}
		\caption{The infinite cut coalesces with the real stationary phase point as $|\xi|<d$ for even left-half-plane cuts.}
	\label{FIguree45}
\end{figure}

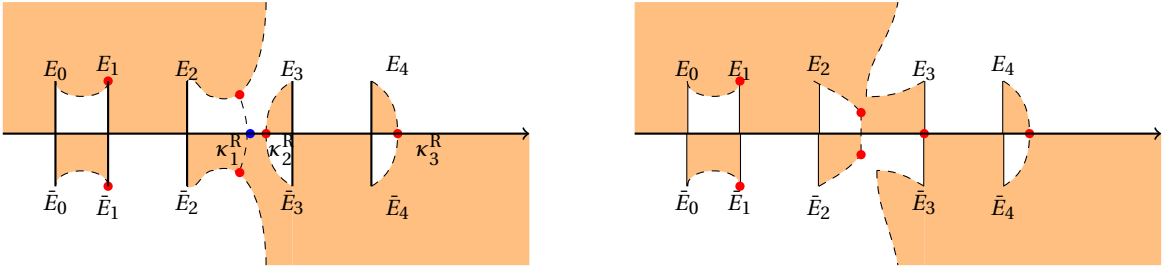
\begin{figure}[htp]
	
	\begin{center}
		\begin{tikzpicture}[scale=0.7]
			\draw[thick](8,-1)--(8,1);
			\draw[thick](6.5,-1)--(6.5,1);
			\draw[thick](4.5,-1)--(4.5,1);
			\draw[thick](3,-1)--(3,1);
					\draw[thick](2,-1)--(2,1);

						\fill[orange!50](1,2.5)--(1,0)--(2,0)--(2,1)to[out=-90, in=-90] (3,1)--(3,0)--(4.5,0)--(4.5,2.5)
			--cycle;		
			\draw[dashed](2,1)to[out=-90, in=-90] (3,1);
			
				\fill[orange!50](4.4,2.5)--(4.4,1)to[out=-30, in=130] (5.3,0.4)--
				(5.3,0)--(6.5,0)--((6.5,1)to[out=-150, in=0] (5.4,0.7)to[out=90, in=-100] (6,2.5);
			
				\draw[dashed](4.4,1)to[out=-30, in=130] (5.3,0.4);

		\draw[dashed](6.5,1)to[out=-150, in=0] (5.4,0.7);
					\draw[dashed](5.4,0.7)to[out=90, in=-100] (6,2.5);

				\fill[orange!50](4.5,-1)to[out=30, in=-130] (5.3,-0.4)--
				(5.3,0)--(4.5,0)--cycle;
			\draw[dashed](4.5,-1)to[out=30, in=-130] (5.3,-0.4);
				\draw[dashed](5.3,0.4)--(5.3,-0.4);

				\fill[orange!50](2,-1)to[out=90, in=90] (3,-1)--(3,0)--(2,0)
			--cycle;		
			\draw[dashed](2,-1)to[out=90, in=90] (3,-1);
			
				\fill[orange!50](6.5,-2.5)--
			(6.5,-1)to[out=150, in=0] (5.6,-0.7)to[out=-90, in=100] (6,-2.5)
			--cycle;		
			
			\draw[dashed]	(6.5,-1)to[out=150, in=0] (5.6,-0.7);
			
			\draw[dashed](5.6,-0.7)to[out=-90, in=100] (6,-2.5);

				\fill[orange!50](6.5,-2.5)--
			(6.5,-0)--(8,0)--(8,-1)
			to[out=30, in=-90] (8.5,0)--(11,0)--(11,-2.5)
			--cycle;		
			\draw[dashed](8,-1)
			to[out=30, in=-90] (8.5,0);
			
				\fill[orange!50](8,1)to[out=-30, in=90] (8.5,0)--(8,0)
			--cycle;		
			
		\draw[dashed](8,1)to[out=-30, in=90] (8.5,0);

			\fill[orange!50]
			(-11,2.5)--(-6,2.5)	to[out=-90, in=-20] (-7,0.7)
			to[out=180, in=30]
(-7.5,1)--(-7.5,0)--(-9,0)--(-9,1)to[out=-90, in=-90] (-10,1)--(-10,0)--(-11,0)
			--(-11,2.5)-- cycle;			
				\draw[dashed](-6,2.5)	to[out=-90, in=-20] (-7,0.7);
				\draw[dashed](-7,0.7)
				to[out=180, in=30]
				(-7.5,1);
				\draw[dashed](-9,1)to[out=-90, in=-90] (-10,1);

				\fill[orange!50]
		(-6,-2.5)	to[out=90, in=20] (-7,-0.7)
			to[out=180, in=-30]
			(-7.5,-1)--(-7.5,0)--(-6,0)to[out=-90, in=150] (-5.5,-1)--(-5.5,-2.5)-- cycle;			
			\draw[dashed]	(-6,-2.5)	to[out=90, in=20] (-7,-0.7);
				\draw[dashed](-6,0)to[out=-90, in=150] (-5.5,-1);
				\draw[dashed](-7,-0.7)
				to[out=180, in=-30]
				(-7.5,-1);

				\fill[orange!50]
			(-6,0)to[out=90, in=-150] (-5.5,1)--(-5.5,0)-- cycle;			
			\draw[dashed]	(-6,0)to[out=90, in=-150] (-5.5,1);
		\fill[orange!50]
(-5.5,-2.5)--(-5.5,0)--(-4,0)
--(-4,-1)to[out=30, in=-90]
(-3.5,0)--(-1,0)--(-1,-2.5)--
 cycle;			
 \draw[dashed](-4,-1)to[out=30, in=-90]
 (-3.5,0);
				\fill[orange!50]
		(-4,1)to[out=-30, in=90]
			(-3.5,0)--(-4,0)--
			cycle;			
			\draw[dashed]	(-4,1)to[out=-30, in=90]
			(-3.5,0);
				\fill[orange!50]
(-9,0)--(-9,-1)to[out=90,in=90]
			(-10,-1)--(-10,0)--cycle;			
			\draw[dashed](-9,-1)to[out=90,in=90]
			(-10,-1);
			
					\draw[dashed](-6.5,-0.7)to[out=70,in=-70]
			(-6.5,0.7);
				\fill[blue] (-6.3,0) circle (2.5pt);			
			
				\fill[red] (-9,1) circle (2.5pt);			
			\fill[red] (-9,-1) circle (2.5pt);			
			\fill[red] (-6.5,0.74) circle (2.5pt);			
			\fill[red] (-6.5,-0.74) circle (2.5pt);			
			\fill[red] (-6,0) circle (2.5pt);			
				\fill[red] (-3.5,0) circle (2.5pt);			
			
				\fill[red] (3,1) circle (2.5pt);			
				\fill[red] (3,-1) circle (2.5pt);			
				\fill[red] (5.3,0.4) circle (2.5pt);			
				\fill[red] (5.3,-0.4) circle (2.5pt);	
					\fill[red] (6.5,0) circle (2.5pt);	
					\fill[red] (8.5,0) circle (2.5pt);

				\node  at (2,1.2)  { \footnotesize$E_0$};
			\node  at (8,1.3)  { \footnotesize$E_4$};
			\node  at (6.5,1.2)  { \footnotesize$E_3$};
			\node  at (4.5,1.3)  { \footnotesize$E_2$};
			\node  at (3,1.2)  { \footnotesize$E_1$};

				\node  at (2,-1.3)  { \footnotesize$\bar E_0$};
			\node  at (8,-1.4)  { \footnotesize$\bar E_4$};
			\node  at (6.5,-1.3)  { \footnotesize$\bar E_3$};
			\node  at (4.5,-1.4)  { \footnotesize$\bar E_2$};
			\node  at (3,-1.3)  { \footnotesize$\bar E_1$};

					\node  at (-10,-1.3)  { \footnotesize$\bar E_0$};
			\node  at (-9,-1.4)  { \footnotesize$\bar E_1$};
			\node  at (-5.5,-1.3)  { \footnotesize$\bar E_3$};
			\node  at (-3.5,-1.4)  { \footnotesize$\bar E_4$};
			\node  at (-7.5,-1.3)  { \footnotesize$\bar E_2$};
			
	\node  at (-6.7,-0.3)  { \footnotesize$\kappa^{\mathrm{R}}_1$};
	\node  at (-5.7,-0.3)  { \footnotesize$\kappa^{\mathrm{R}}_2$};

	\node  at (-2.9,-0.3)  { \footnotesize$\kappa^{\mathrm{R}}_3$};

				\node  at (-10,1.2)  { \footnotesize$ E_0$};
			\node  at (-9,1.3)  { \footnotesize$ E_1$};
			\node  at (-5.5,1.2)  { \footnotesize$ E_3$};
			\node  at (-3.5,1.3)  { \footnotesize$ E_4$};
			\node  at (-7.5,1.2)  { \footnotesize$ E_2$};

				\draw[thick](-9,-1)--(-9,1);
			\draw[thick](-7.5,-1)--(-7.5,1);
			\draw[thick](-5.5,-1)--(-5.5,1);
			\draw[thick](-4,-1)--(-4,1);
			\draw[thick](-10,-1)--(-10,1);
			\draw[thick,->](-11,0)--(-1,0);
			\draw[thick,->](1,0)--(11,0);			
		\end{tikzpicture}
	\end{center}
	\caption{The infinite cut coalesces with the real stationary phase point as $|\xi|<d$ for odd left-half-plane cuts.}
	\label{ffuu}
\end{figure}
Starting from the distribution of stationary phase points as depicted in the first picture of Figure~\ref{ffuu}, we construct an $n+1$ genus Riemann surface by adding the branch cut $[\bar{E}_{00},E_{00}]$ as shown in Figure~\ref{hh}; this operation generates a real stationary phase point $\kappa^{R}_1$. Subsequently, as $\xi$ decreases, $\kappa^{R}_1$ moves along the real axis and coalesces with the existing real stationary phase point $\kappa^{R}_2$. This collision produces a pair of complex conjugate stationary phase points $\bar{E}_{01}$ and $E_{01}$, allowing us to further increase the genus by adding the corresponding branch cut $[\bar{E}_{01},E_{01}]$ as shown in Figure~\ref{hh}. Consequently, we obtain an $n+2$ genus Riemann surface characterized by these two additional stationary phase points, whose existence is proved in \cite{BL21}. We introduce the regularized phase
function
\begin{equation}\label{eq:theta_prime}
	d\theta(z)=\frac{(z-\kappa_1^{\mathrm{R}})(z-\kappa_2^{\mathrm{R}})(z-E_{00})(z-\bar E_{00})(z-E_{01})(z-\bar E_{01})\prod_{s=1}^{\frac{n-2}{2}}\left[(z-\kappa_s^{\mathrm{C}})(z-\bar{\kappa}_s^{\mathrm{C}})\right]}{w(z)\sqrt{(z-E_{00})(z-\bar E_{00})(z-E_{01})(z-\bar E_{01})}}dz.
\end{equation}
\begin{remark}
Herein, the assumption regarding the stationary phase point distribution in \eqref{h-z-xii}
is validated by the following construction. Consider the case of odd genus in the
left half-plane: for $|\xi|<d$, the coalescence of the infinite branch with
$\Gamma_{\frac{n}{2}}$ admits two possible topologically distinct scenarios
(Figures~\ref{tttt} and \ref{TTTT}, among others), depending on the choice of branch
configuration $\{B_k,A_k\}_{k=0}^n$.

The first scenario corresponds to the case where the complex stationary phase points on the cuts $\Gamma_{\frac{n}{2}}$ and $\Gamma_{\frac{n}{2}+1}$ coalesce \emph{before} the infinite branch coalesces with $\Gamma_{\frac{n}{2}}$ (Figure~\ref{tttt}). In this process, the coalescence generates two real stationary phase points moving in opposite directions as $\xi$ decreases: the left real stationary phase point coalesces with the infinite branch, while the right stationary phase point moves across $\Gamma_{\frac{n}{2}+1}$, contributing $\mathcal{O}(t^{-1/2})$ on both sides of $\Gamma_{\frac{n}{2}+1}$. Without loss of generality, we consider the case where the stationary phase point lies to the right of\, $\Gamma_{\frac{n}{2}+1}$.
\begin{figure}[htp]
	\subfloat{\label{a}
		\begin{minipage}[t]{0.3\linewidth}
			\centering
			\includegraphics[width=2in]{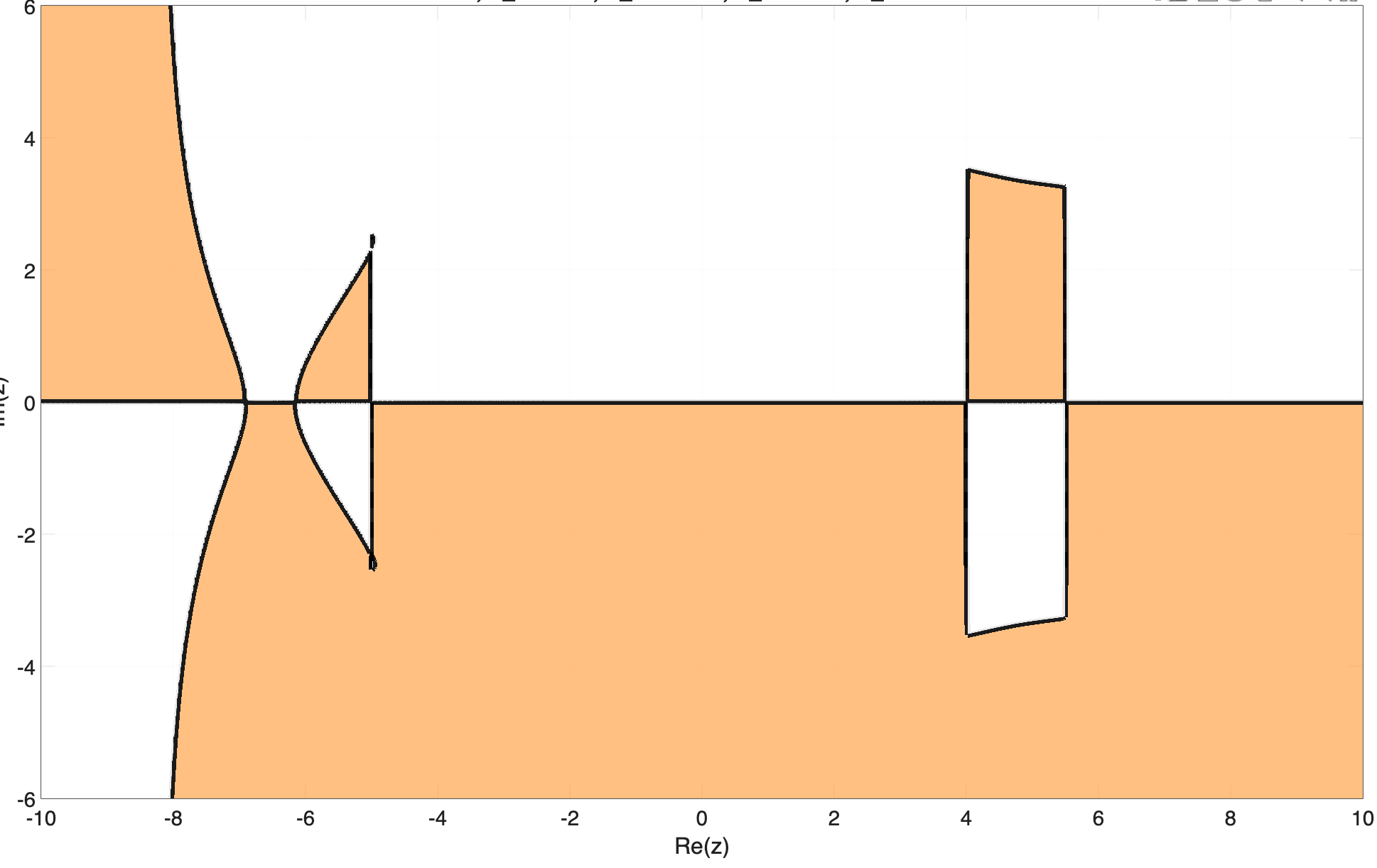}
		\end{minipage}
	}\quad
	\subfloat{\label{b}  
		\begin{minipage}[t]{0.3\linewidth}
			\centering
			\includegraphics[width=2in]{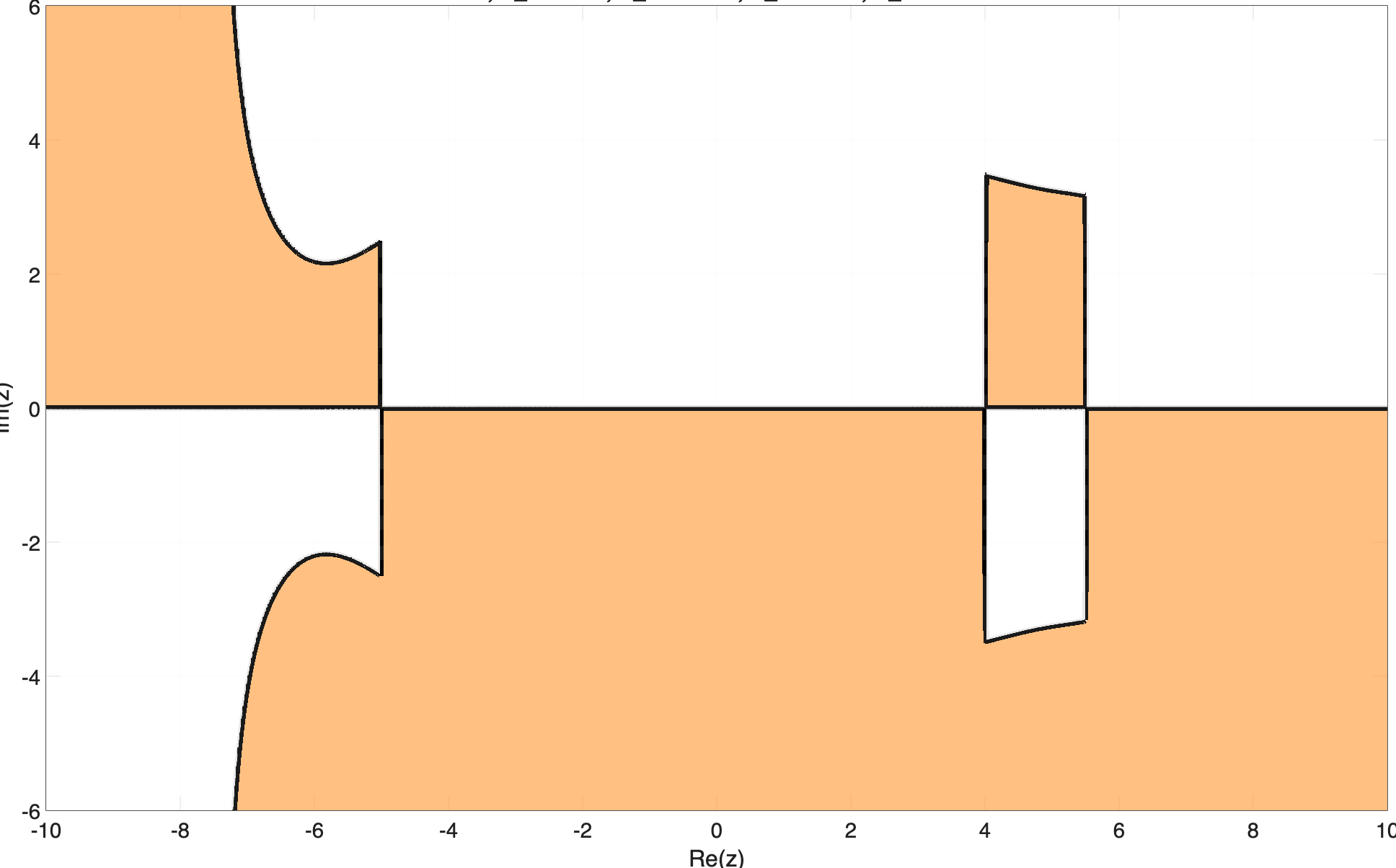}
		\end{minipage}
	}
	\subfloat{\label{c}\begin{minipage}[t]{0.4\linewidth}
			\centering
			\includegraphics[width=2in]{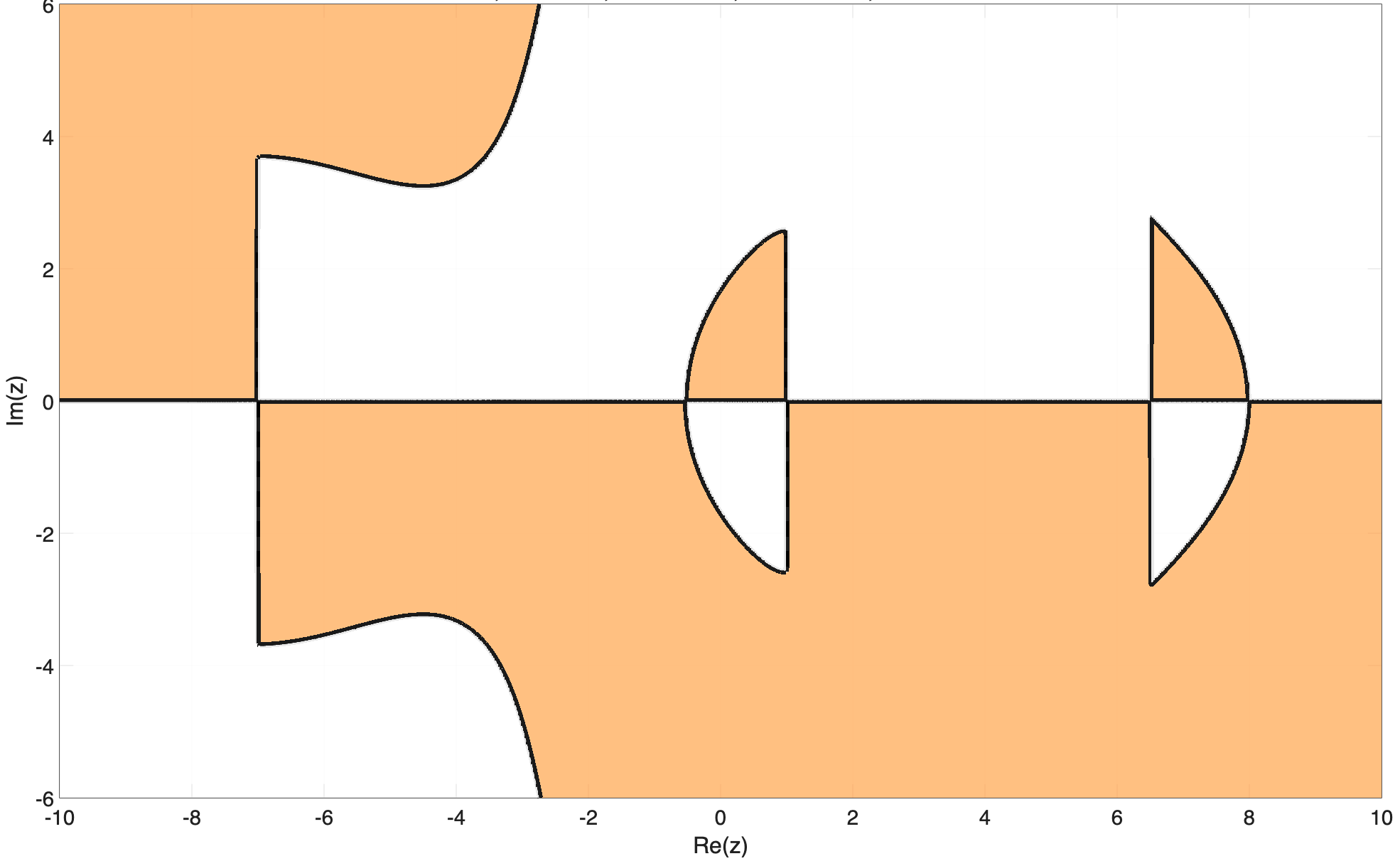}
	\end{minipage}}
	\caption{Complex-complex coalescence precedes infinity branch coalescence for the finite branch when $|\xi|<d$.}
	\label{tttt}
\end{figure}

The second scenario corresponds to the infinite branch coalescing with
$\Gamma_{\frac{n}{2}}$ \emph{before} the two complex stationary phase points coalesce
(Figure~\ref{TTTT}). In this paper, we restrict our analysis to the first scenario.
\begin{figure}[htp]
	\subfloat{\label{a}
		\begin{minipage}[t]{0.3\linewidth}
			\centering
			\includegraphics[width=2in]{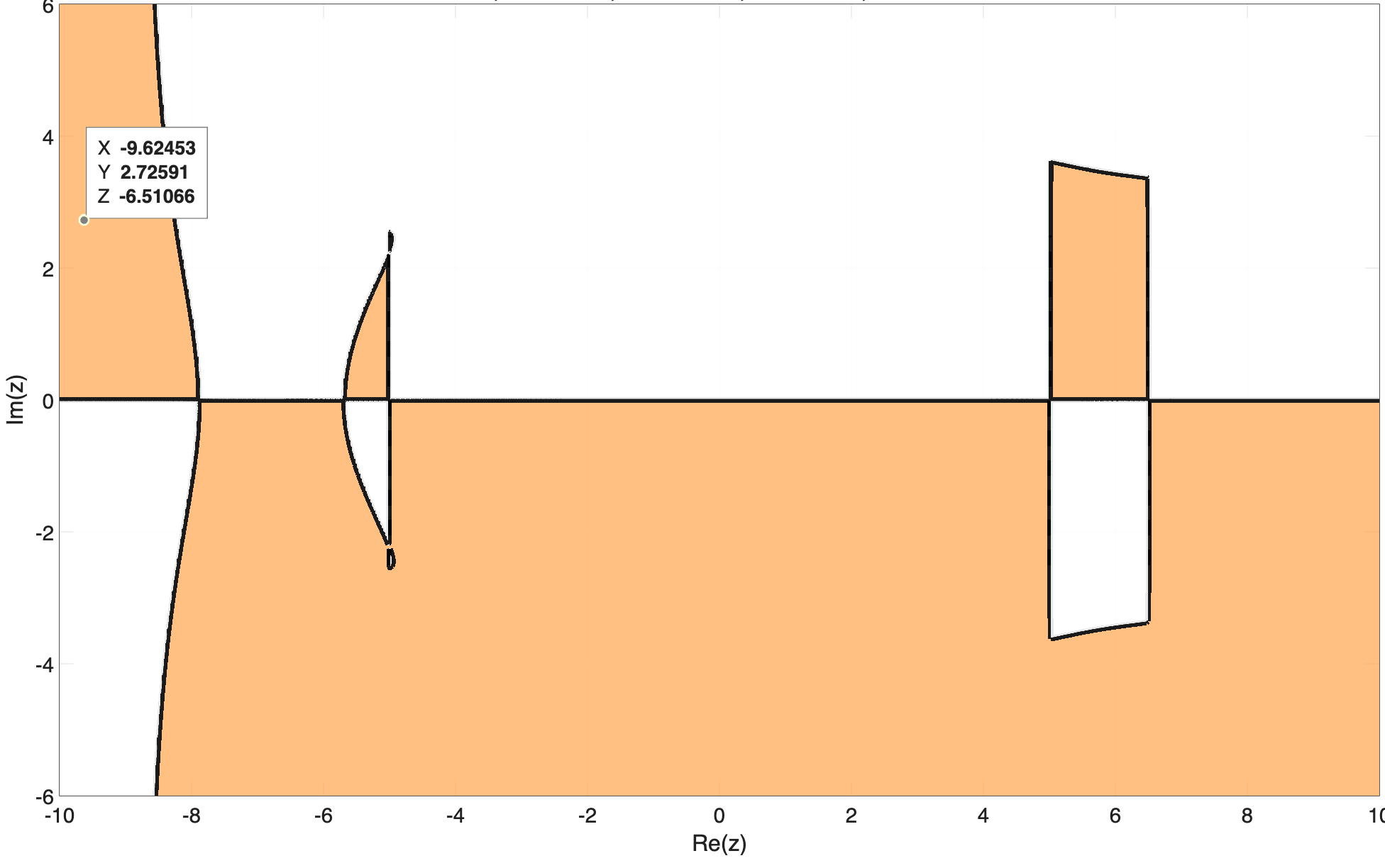}
		\end{minipage}
	}\quad
	\subfloat{\label{b}  
		\begin{minipage}[t]{0.3\linewidth}
			\centering
			\includegraphics[width=2in]{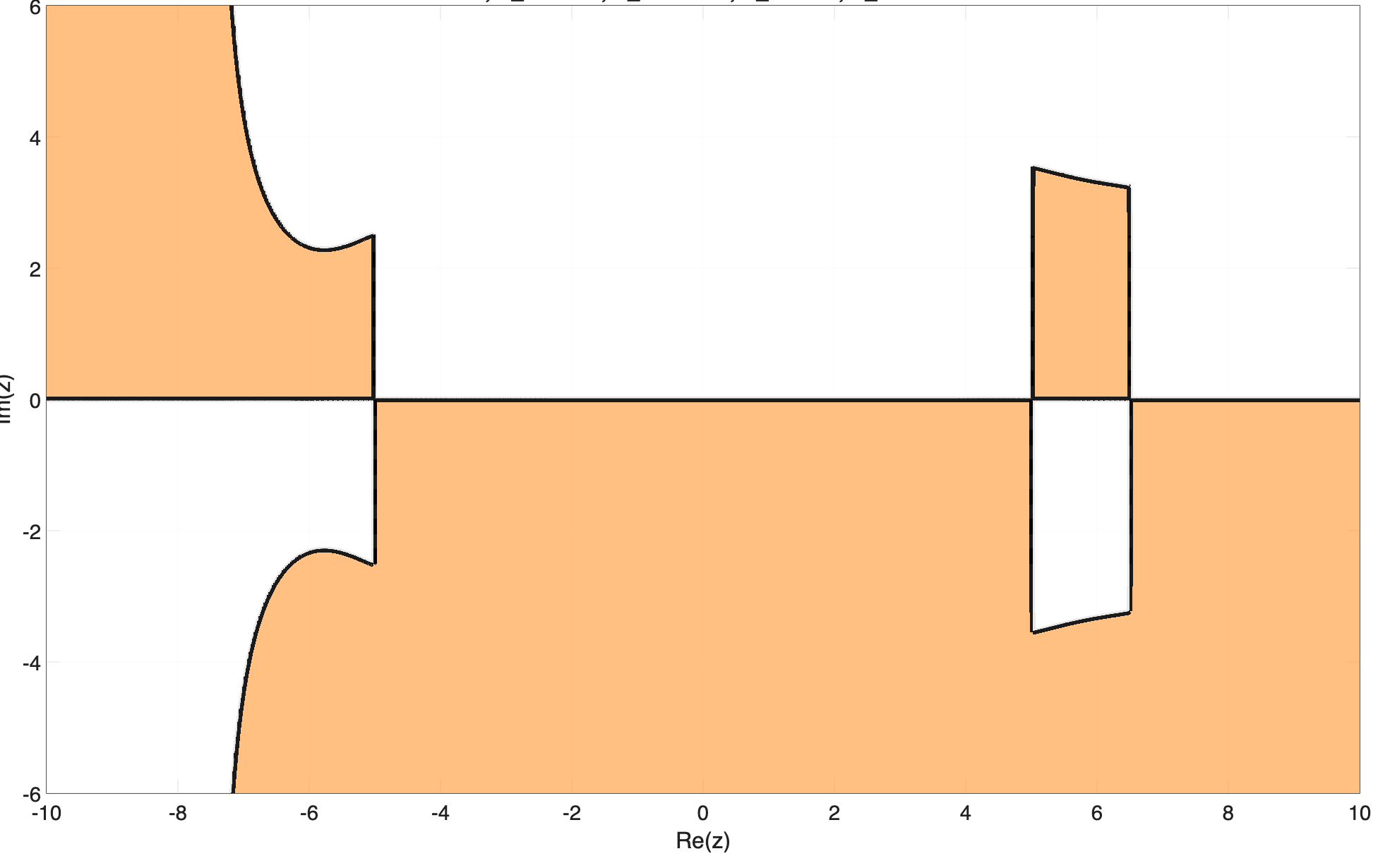}
		\end{minipage}
	}
    \subfloat{\label{c}\begin{minipage}[t]{0.4\linewidth}
			\centering
			\includegraphics[width=2in]{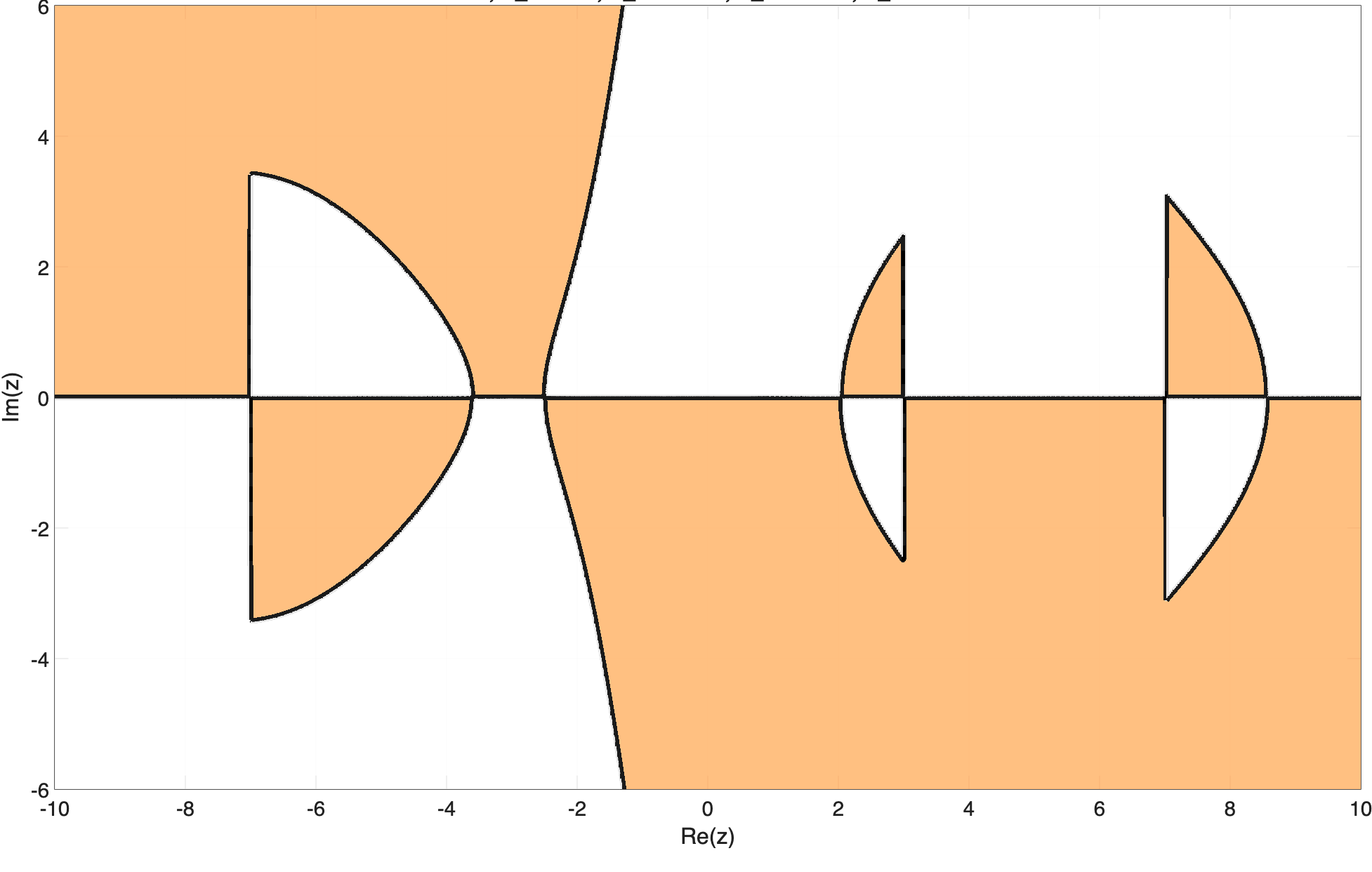}
		\end{minipage}}
	\caption{The infinity branch coalescence occurs prior to the complex-complex coalescence with the finite branch as $|\xi|<d$.}
\label{TTTT}
\end{figure}
\end{remark}
We now introduce the following interpolation transformation to construct a regular RH problem:
\begin{equation}\label{eq:N-1-two}
	N^{(1)}(z)=\delta^{-\sigma_3}(\infty)N(z)\delta^{\sigma_3}(z)G(z),
\end{equation}
where $\delta(z)$ is given by
\begin{equation}\label{eq:ddeltaaa}
	\delta(z)=\tilde{\nu}(z)\exp\left\{\frac{w(z)}{2\pi i}\int_{I}\frac{\log\left(1+|r(s)|^2\right)}{w(s)(s-z)}\,\mathrm{d}s\right\},\quad I=(-\infty,\kappa^{\mathrm{R}}_1)\cup(B_{\frac{n}{2}+1},\kappa^{\mathrm{R}}_2),
\end{equation}
with $\tilde{\nu}(z)=\prod_{k\ge\frac{n}{2}}\sqrt[4]{\frac{z-\bar{E}_k}{z-E_k}}$, and where the branch of the logarithm is taken to be principal. The function $\delta(z)$ defined by \eqref{eq:ddeltaaa} satisfies the following scalar RH problem.
\begin{problem}
Construct a scalar function $\delta(z): \mathbb{C}\setminus\left(I\cup(\tilde\Gamma_0=\bigcup_{k\ge \frac{n}{2}}\Gamma_k)\right)\to\mathbb{C}$ such that:
\begin{enumerate}
	\item[(i)] $\delta(z)=\delta(\infty)\left(1+\frac{I_1}{z}+O\left(\frac{1}{z^2}\right)\right)$ as $z\to\infty$, where
	\begin{align}\label{delt-ing}
		\delta(\infty)=\exp\left\{-\frac{1}{2\pi i}\int_{I}\frac{s^n\log(1+|r(s)|^2)}{w(s)}\,ds\right\},\quad
		I_1=-\frac{1}{2\pi i}\int_{I}\frac{s^n(s+\hat f_n)\log(1+|r(s)|^2)}{w(s)}\,ds.
	\end{align}
	\item[(ii)] For each $z\in I\cup\tilde\Gamma_0$, the boundary values $\delta_\pm(z)$ satisfy
	\begin{equation}
		\delta_+(z)=\delta_-(z)\begin{cases}
			1+|r(z)|^2, & z\in I,\\
			i&z\in \tilde\Gamma_0.
		\end{cases}
	\end{equation}
	\item[(iii)] $\delta(z)$ obeys the symmetry
	\begin{equation}
		\delta(z)=\overline{\delta(\bar{z})}^{-1}.
	\end{equation}
	
	\item[(iv)] $\delta(z)$ and $\delta^{-1}(z)$ are bounded analytic functions on $\mathbb{C}\setminus (I \cup_{k<\frac{n}{2}}B_{k})\cup\tilde\Gamma_0)$.
\end{enumerate}
\end{problem}
The function $G(z)$ defined  in \eqref{eq:N-1-two} as
\begin{equation}\label{eq:G(z)}
	G(z)=I+ \begin{cases}V^{-1}_2=	- r(z) e^{2 i t \theta(z)}\sigma_-,\,\,\,\,\,\,\,\,  z \in \Omega_{\kappa^{\mathrm{R}}_1}^1\cup\Omega_{\kappa^{\mathrm{R}}_2}^1,\quad
	V^{-1}_4=\frac{\overline{ r(\bar z)} e^{-2 i t \theta(z)}}{1+|r(z)|^2}\sigma_+,\, z \in \Omega_{\kappa^{\mathrm{R}}_1}^2 \cup\Omega_{\kappa^{\mathrm{R}}_2}^1,\\
	V_1= {\overline{r(\bar z)}} e^{-2 i t \theta(z)}\sigma_+,  \quad\,\,\,\,\,\,\, z \in \bar\Omega_{\kappa^{\mathrm{R}}_1}^1\cup\bar\Omega_{\kappa^{\mathrm{R}}_2}^1,\quad
 V_3=\frac{ r(z)e^{2 i t \theta(z)}}{1+|r(z)|^2}\sigma_-
, \quad\,\,\, z \in \bar\Omega_{\kappa^{\mathrm{R}}_1}^1\cup\bar\Omega_{\kappa^{\mathrm{R}}_2}^1, \\
		I,  \text { elsewhere, }\end{cases}
\end{equation}
where$$
\begin{aligned}
	\Omega^1&=\left\{z\in\mathbb{C}\setminus\mathcal{U}:0\leq\arg(z-\kappa_1^{\mathrm{R}})\leq\varphi_0\right\},\\
	\Omega^2&=\left\{z\in\mathbb{C}\setminus\mathcal{U}:\pi-\varphi_0\leq\arg(z-\kappa_1^{\mathrm{R}})\leq\pi\right\},
\end{aligned}
$$
where $\mathcal{U}=\{z\in\mathbb{C}:|z-\kappa^{\mathrm{R}}_2|<\epsilon\}$ with $\epsilon$ sufficiently small to ensure $\mathcal{U}$ does not intersect $\Gamma$, and $\varphi_0\in(0,\pi/2)$ is chosen such that $\Gamma_k\cap\mathbb{C}^+\subset\Omega^1$ for $k>\frac{n}{2}$ and $\Gamma_k\cap\mathbb{C}^+\subset\Omega^2$ for $k\leq \frac{n}{2}$.
Define
$$\Sigma^{(1)}=\Gamma\cup\Sigma_{\kappa_1^{\mathrm{R}}}\cup\Sigma_{\kappa_2^{\mathrm{R}}}\cup\gamma_{(\bar E_{01},E_{01})}\cup\gamma_{(E_{01},E_{00})}\cup\gamma_{(\bar E_{00},\bar E_{01})},$$
where $\gamma_{(\bar E_{01},E_{01})}$ denotes the curve segment passing through $\kappa^{\mathrm{R}}_{1}$, while $\gamma_{(E_{01},E_{00})}$ and $\gamma_{(\bar E_{00},\bar E_{01})}$ denote the oriented contours from $E_{01}$ to $E_{00}$ and from $\bar E_{00}$ to $\bar E_{01}$, respectively, as shown in Figure~\ref{hh}. This leads to the following RH problem.
\begin{figure}[htp]
	{
		\begin{minipage}{16.5cm}\centering
			\begin{overpic}[width=13cm, percent]{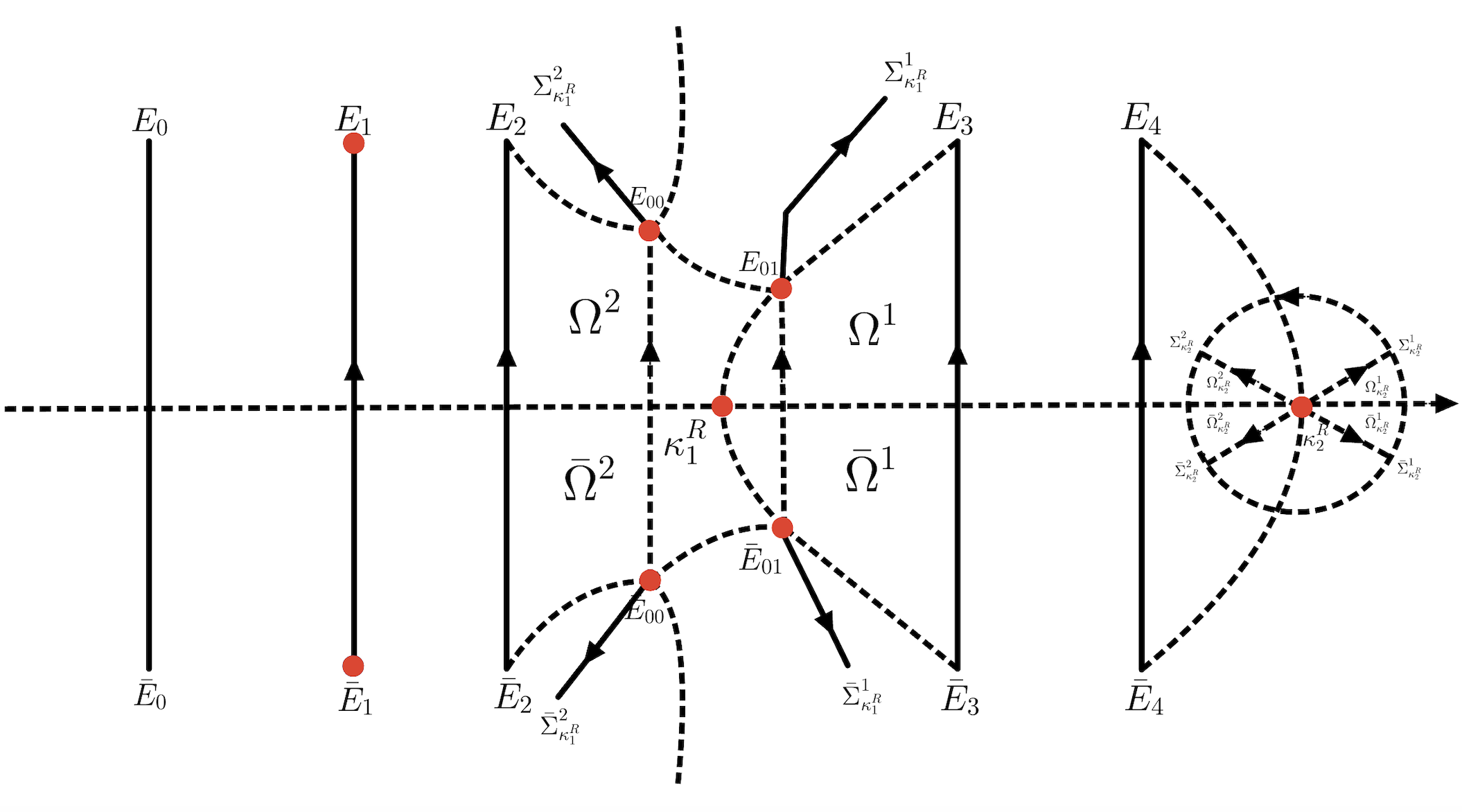}
				\put(30, 50){
					\begin{tikzpicture}[overlay]
					\end{tikzpicture}
				}
			\end{overpic}
	\end{minipage}}
	\caption{Distribution scenarios of genus~$6$ and stationary phase points}
	\label{hh}
\end{figure}
\begin{problem}
	Construct a meromorphic function $N^{(1)}(z) :\mathbb{C}\setminus \Sigma^{(1)}\to SL_2(\mathbb{C})$  such that:
	\begin{enumerate}
		\item $N^{(1)}(z)=I+\mathcal{O}(z^{-1}),$\quad $|z|\to\infty$.
		\item 	For each  $z \in \Sigma^{(1)} \cap U$, the boundary values $N^{(1)}(z)$ satisfy the jump relation
		$$N_+^{(1)}(z)=N_-^{(1)}(z)J^{(1)}(z)$$
		where
		\begin{align*}
		&J^{(1)}(z)=I+\begin{cases}
			V_1^{-1}(z)V_r(z)\tilde V_3(z)	,\quad z\in \gamma_{(\kappa_{1}^\mathbb{R},E_{01})},\quad\,\,\,\,\,\,\,\,\,\,\,
				\sigma_2V_1^{-1}(z)V_r(z)\tilde V_3(z)	\sigma_2,\quad z\in \gamma_{(\bar E_{01},\kappa_{1}^\mathbb{R})}, \\
			\tilde V_1(z)\tilde V_2(z)\tilde V_1(z),\quad\,\,\,\,\,\,\,  z\in\gamma_{(E_{01},E_{00})},\,\,\qquad
			\sigma_2	\tilde V_1(z)\tilde V_2(z)\tilde V_1(z)\sigma_2,\quad\,\,\,\,\,\,\,  z\in\gamma_{(\bar E_{00},\bar E_{01})},\\
				V_1\delta^{-2}(z)\sigma_+, \, z \in \bar\Sigma^1_{\kappa_1^{\mathrm{R}}}\cup\bar\Sigma^1_{\kappa_2^{\mathrm{R}}}, \quad\,\,\,\,\,\,\,\,	V_2\delta^2(z)\sigma_-, \,z \in \Sigma^1_{\kappa_1^{\mathrm{R}}}\cup\Sigma^1_{\kappa_2^{\mathrm{R}}},\\
				V_4\delta^{-2}(z),\sigma_+ \, z \in\Sigma^2_{\kappa_1^{\mathrm{R}}}\cup\Sigma^2_{\kappa_2^{\mathrm{R}}},  \qquad	V_3\delta^{2}(z)\sigma_-, \,\,z \in \bar\Sigma^2_{\kappa_1^{\mathrm{R}}}\cup\bar\Sigma^2_{\kappa_2^{\mathrm{R}}},
		\end{cases}\\
	&	J^{(1)}(z)=	\begin{cases}
	\tilde	V^\ddagger_k(z)=	\begin{pmatrix}
			0 &\frac{ i\overline{a_{+}(\bar z) a_{-}(\bar z)} }{ \delta_+(z) \delta_-(z)}e^{-2\pi ic_k} \\
			\frac{i \delta_+(z)\delta_-(z)  }{\overline{a_{+}(\bar z) a_{-}(\bar z)}} e^{2\pi ic_k}& 0
		\end{pmatrix},\quad z\in \tilde\Gamma_k\cup\mathbb{C}^+,\\
	\tilde V^\dagger_k(z)=\begin{pmatrix}
			0&i\delta^{-2}(z)e^{-2\pi ic_k}\\
			i\delta^{2}(z)e^{2\pi ic_k}&0
		\end{pmatrix},\qquad\qquad z\in\Gamma\cap\mathbb{C}^+\setminus\tilde\Gamma_k,\\
		\sigma_2\tilde V^\dagger_k(z)\sigma_2,\quad z\in\Gamma\cap\mathbb{C}^-\setminus\tilde\Gamma_k,\qquad \sigma_2\tilde	V^\ddagger_k(z)\sigma_2,\quad z\in \tilde\Gamma_k\cup\mathbb{C}^-,
	\end{cases}
	\end{align*}
	and
	\begin{align*}
		&	\tilde V_1(z)=	-\frac{1+|r(z)|^2}{\overline{r(\bar z)}\delta^2(z)}e^{2it\theta(z)}\sigma_-,\quad \tilde V_3=(1+|r(z)|^2)r(z)\delta^{-2}(z)e^{2it\theta(z)}\sigma_-,\\
		&\tilde V_2(z)=-\frac{\overline{r(\bar z)}}{1+|r(z)|^2}\delta^2(z)e^{-2it\theta(z)}\sigma_++	\frac{1+|r(z)|^2}{\overline{r(\bar z)}\delta^2(z)}e^{2it\theta(z)}\sigma_-,\quad V_r=(1+|r(z)|^2)^{\sigma_3}.
	\end{align*}
	\item The function $N^{(1)}(z)$ has singularities at the endpoints of $\,\,\Gamma_k$ of order at most $|z-E_k|^{-1/4}$ or $|z-\bar{E}_k|^{-1/4}$.
	\end{enumerate}
\end{problem}
However, for the nonlinear steepest descent method, the cuts connecting the branch
points to their complex conjugates must lie in the level set where
$\operatorname{Im}\theta(z)=0$. We therefore deform the vertical cut from
$\bar E_{01}$ to $E_{01}$ into the curve segment $\gamma_{(\bar E_{01},E_{01})}$,
and similarly deform the vertical cut connecting $\bar E_{00}$ and $E_{00}$ into
the composite curve
$
	\gamma_{(E_{01},E_{00})}\cup\gamma_{(\bar E_{00},\bar E_{01})}\cup \gamma_{(\bar E_{01},E_{01})}.
$
Since both deformed cuts then proceed along $\gamma_{(\bar E_{01},E_{01})}$, this
segment ceases to be a cut, and the homology basis on the left of
Figure~\ref{figur-cb} transforms into the canonical homology basis $\{a_k,b_k\}$
displayed on the right.

We now construct a new $n+2$ sheeted Riemann surface by adding the cuts
$\gamma_{(\bar E_{00},E_{00})}$ and $\gamma_{(\bar E_{01},E_{01})}$ between
$\Gamma_{\frac{n}{2}-1}$ and $\Gamma_{\frac{n}{2}}$. The resulting ordered set
of cuts is
\begin{equation}
\Gamma^\dagger=	\{\Gamma_0,\dots,\Gamma_{\frac{n}{2}-1}, \Gamma_{\frac{n}{2}}=\gamma_{(\bar E_{01},E_{01})}, \Gamma_{\frac{n}{2}+1}=\gamma_{(\bar E_{00},E_{00})}, \Gamma_{\frac{n}{2}+2},\dots,\Gamma_{n+2}\}.
\end{equation}
\begin{figure}[htp]
	\begin{minipage}{4cm}
		\includegraphics[scale=0.28]{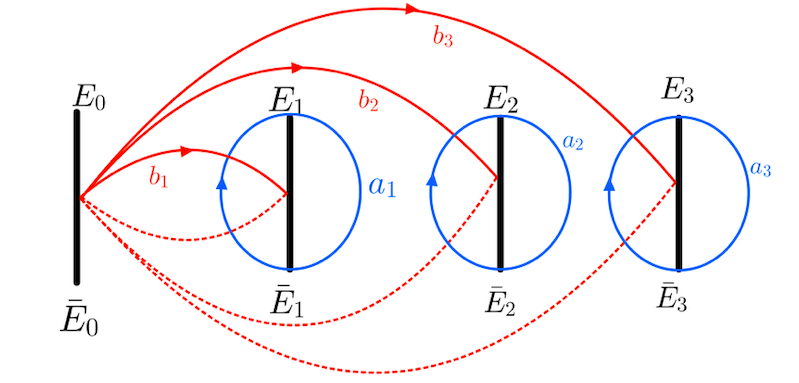}
		\end{minipage}
	\hspace{4.5cm} 
	\begin{minipage}{10cm}    
		\includegraphics[scale=0.26]{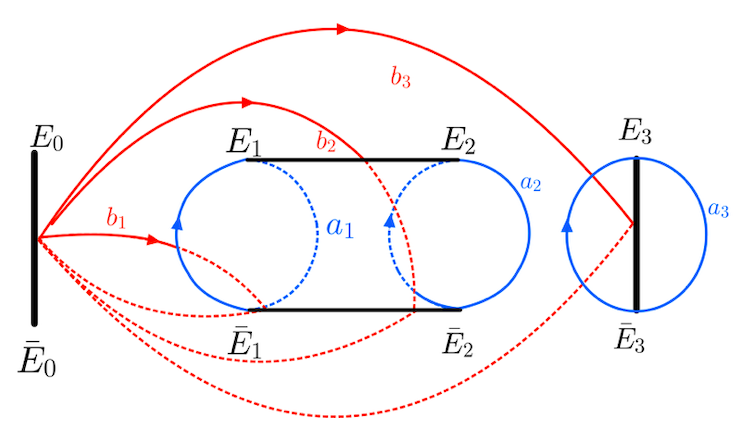}
	\end{minipage}
	\caption{The standard (left) and appropriate (right) choice of canonical homology base $\{a_{k},b_{k}\}_{k=1}^3$ for genus $n=3$}
	\label{figur-cb}
\end{figure}
Define
\begin{equation}
	\Sigma^{(2)}=\Gamma\cup\bigcup_{k=1}^8 L_k \cup \bigcup_{k=1}^8 \bar{L}_k,
\end{equation}
as shown in Figure~\ref{hhh}. The conditions in \eqref{eq:large-theta} imply that
$\theta(z)$ is independent of the choice of contour in $\mathbb{C}\setminus\Sigma^{(2)}$.
Using the relation
\begin{equation}
	\theta'_+(z)=\theta'_-(z)\times\begin{cases}
		1, & z\in\Sigma^{(2)}\setminus(\gamma_{(\bar E_{01},E_{01})},\\
		-1, & z\in\gamma_{(\bar E_{01},E_{01})},
	\end{cases}
\end{equation}
we find that $\theta_+(z)+\theta_-(z)$ is constant on the respective arcs, satisfying
\begin{align}\label{eq:th+}
	\theta_{+}(z)+\theta_{-}(z)=
	\begin{cases}
		2\beta^-, & z\in\gamma_{(\bar E_{00},\bar E_{01})}\cup\gamma_{(E_{01},E_{00})},\\
		c_k, & z\in\Gamma_k,
	\end{cases}
\end{align}
while $\theta_+(z)-\theta_-(z)$ is constant on $\gamma_{(\bar E_{01},E_{01})}$ with
\begin{align}\label{eq:th-}
	\theta_{+}(z)-\theta_{-}(z)=2\beta^+,\quad z\in\gamma_{(\bar E_{01},E_{01})}.
\end{align}
Here the constants $\beta^\pm$ are defined by
\begin{align}\label{con-a-b}
	\beta^-=\theta(E_{00})=\theta(\bar E_{00}),\quad
	\beta^+=\frac{\theta_{+}(E_{01})-\theta_{-}(E_{01})}{2}=\frac{\theta_{+}(\bar E_{01})-\theta_{-}(\bar E_{01})}{2}.
\end{align}
The conditions \eqref{eq:th+}, \eqref{eq:th-} and \eqref{eq:f-g-contour}, combined
with the Schwarz symmetry $\theta(z)=\overline{\theta(\bar{z})}$, ensure that
$\operatorname{Im}\theta(z)=0$ on $\mathbb{R}$ and that
$\operatorname{Im}\theta(E_{k_0-2})=\operatorname{Im}\theta(E_{00})=\operatorname{Im}\theta_+(E_{01})=\operatorname{Im}\theta_-(E_{01})=0$.
Consequently, the constants $\beta^\pm$ defined in \eqref{con-a-b} are indeed real.

The purpose of the following transformation is to make the jumps across the curves $\gamma_{({\bar E_{00}},\bar E_{01})}\cup\gamma_{(E_{01},E_{00})}$ and $\gamma_{(\bar E_{01},E_{01})}$ constant in $z$. We define $N^{(2)}(z)$ by
\begin{equation}
	N^{(2)}(z)=\mathrm{e}^{-ig(\infty)\sigma_3}N^{(1)}(z)G^{(1)}(z)\mathrm{e}^{ig(z)\sigma_3},
\end{equation}
where
\begin{align}
	G^{(1)}(z)=\begin{cases}
		\tilde{V}_1,  z\in U^1,\quad
		\tilde{V}_1^{-1}, z\in U^2,\\
		\tilde{V}_3,  z\in U^4,\quad
		\tilde{V}_4, \,\,\,\, z\in U^3,\\
		\sigma_2\overline{G^{(1)}(\bar{z})}\sigma_2,  z\in\bar{U}^1\cup\bar{U}^2\cup\bar{U}^3\cup\bar{U}^4.
	\end{cases}
\end{align}
\begin{figure}[htp]
	{
		\begin{minipage}{16.5cm}\centering
			\begin{overpic}[width=13cm, percent]{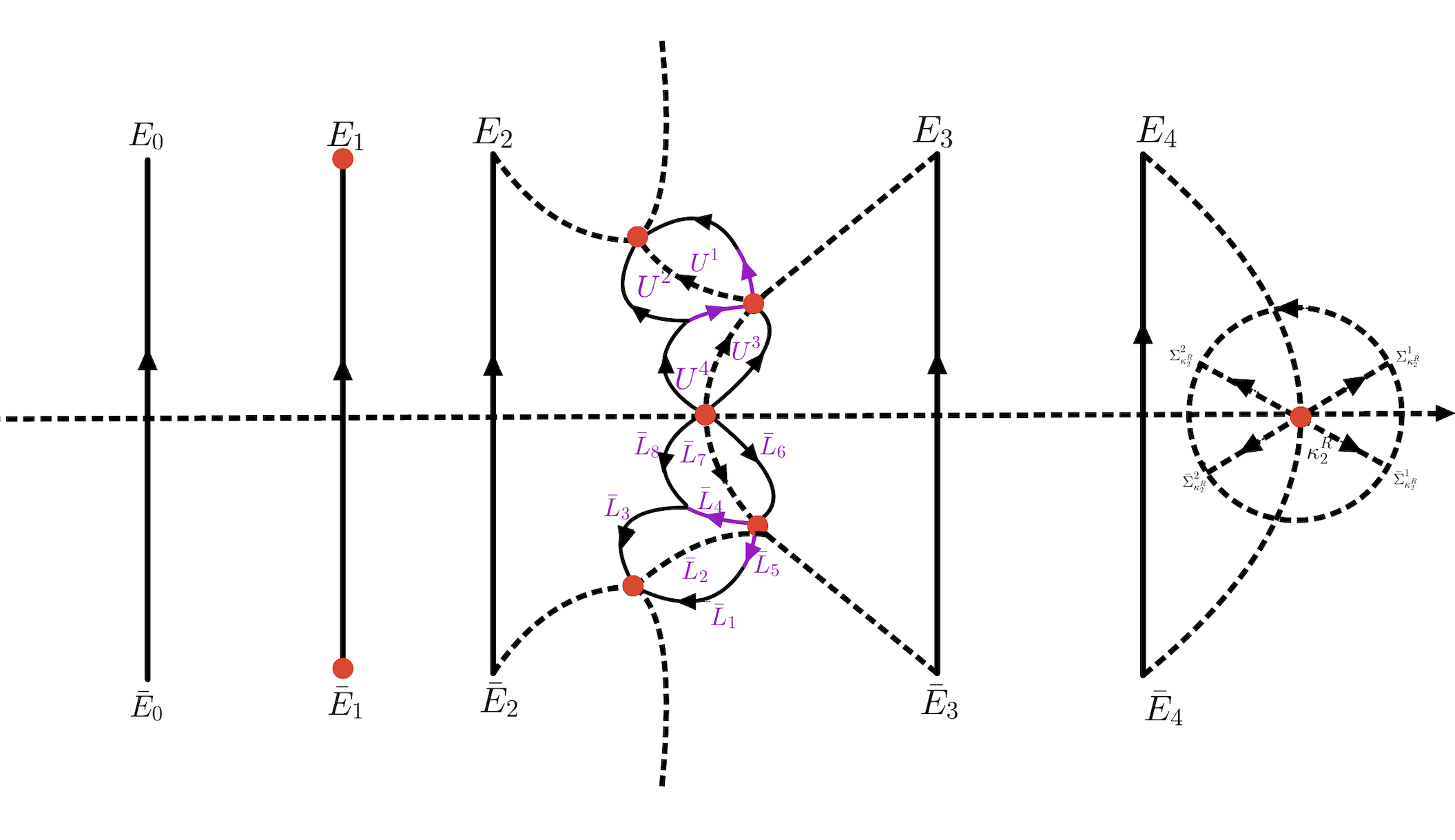}
				\put(30, 50){
					\begin{tikzpicture}[overlay]
					\end{tikzpicture}
				}
			\end{overpic}
	\end{minipage}}
	\caption{The distribution of stationary phase points for $n=6$, where the red region indicates regions with ${\rm Im}\,\theta(z)<0$.}
	\label{hhh}
\end{figure}
and $\{U^j\}_{j=1}^4$ denote open sets of the form displayed in Figure \ref{hhh}.

The function $g(z)$ is defined by
\begin{equation}\label{eq:hhh(z)}
	g(z)=\frac{w(z)}{2\pi i}\int_{\mathcal{Z}}\frac{\mathcal{H}(s)}{s-z}\,ds,\quad z\in\mathbb{C}\setminus\mathcal{Z},
\end{equation}
where $\mathcal{Z}=\gamma_{(\bar E_{00},\bar E_{01})}\cup\gamma_{(E_{01},E_{00})}\cup\gamma_{(\bar E_{01},E_{01})}\cup\Gamma$ and
\begin{equation}
	\mathcal{H}(z)=\begin{cases}
		\frac{2\alpha_1+g_1(z)}{w_{+}(z)}, &\gamma_{(E_{01},E_{00})},\\
			\frac{2\alpha_2+g_2(z)}{w_{+}(z)}, & z\in\tilde\Gamma\cap\mathbb{C}^+,\\
				\frac{2\alpha_3+g_3(z)}{w_{+}(z)}, & z\in\left(\Gamma\setminus\tilde\Gamma\setminus\Gamma_0\right)\cap\mathbb{C}^+,\\
		\frac{2\alpha_4+g_4(z)}{w(z)}, & z\in\gamma_{(\kappa_1^{\mathrm{R}},E_{01})},\\
			\frac{g_3(z)}{w_{+}(z)}, & z\in\Gamma_0\cap\mathbb{C}^+,\\
	\end{cases}
\end{equation}
	with $\alpha_t\in\mathbb{R}$ $(t=1,2,3,4)$ satisfying $\mathcal{H}(z)=\overline{\mathcal{H}(\bar z)}$ and
\begin{align}
&	g_1(z)=-i\ln\left(i\,\frac{\overline{r(\bar{z})}\,\delta^2(z)}{1+|r(z)|^2}\right),\qquad g_2(z)=	i\ln\left(\frac{ \delta_+(z)\delta_-(z)}{\overline{a_{+}(\bar z) a_{-}(\bar z)}}\right),\\
&	g_3(z)=	i\ln\delta^{2}(z),\qquad\qquad\quad
	g_4(z)=i\ln\left(1+|r(z)|^2\right).
\end{align}
It follows that $g(z)$ satisfies the following lemma.
\begin{lemma}\label{lemma-g}
	There exists a unique choice of real constants $\alpha_t=\alpha_t(\xi)$, $t=1,2,3,4$, such that the function $g(z)$ defined in \eqref{eq:hhh(z)} possesses the following properties:
	\begin{enumerate}
		\item It satisfies the symmetry condition $g(z)=\overline{g(\bar{z})}$.
		\item As $z\to\infty$,
		\begin{equation}
			g(z)=g(\infty)\left(1+\frac{I_2}{z}+\mathcal{O}(z^{-2})\right),
		\end{equation}
		where $g(\infty)$ is a finite real constant given by
		\begin{align}\label{ginfty}
			g(\infty)= \exp\left\{-\frac{1}{2\pi i}\int_{\mathcal{Z}} \frac{\mathcal{H}(s)}{w(s)}\,ds\right\}, \quad
			I_2=\frac{1}{2\pi i} \int_{\mathcal{Z}} \frac{s\,\mathcal{H}(s)}{w(s)}\,ds.
		\end{align}
		\item $e^{ig(z)\sigma_3}$ is bounded and analytic for $z\in\mathbb{C}\setminus\mathcal{Z}$.
		\item For each $z \in \mathcal{Z}$, the boundary values $g_\pm(z)$ satisfy the jump relations
		\begin{align}\label{eq:h-condition}
			&g_{+}(z)+g_{-}(z)=
			\begin{cases}
				2 \alpha_1- i\ln \left(i \frac{\overline{r(\bar z)}\delta^2(z)}{1+|r(z)|^2}\right), & z \in \gamma_{(E_{01},E_{00})}, \\
				2\alpha_1+i \ln \left(-i \frac{{r( z)}\overline{\delta^{2}(\bar z)}}{1+|\overline{r(\bar z)}|^2}\right), & z \in \gamma_{(\bar E_{00}, \bar E_{01})},\\
				2\alpha_2+	i\ln\left(\frac{ \delta_+(z)\delta_-(z)}{\overline{a_{+}(\bar z) a_{-}(\bar z)}}\right),&z\in\tilde\Gamma\cap\mathbb{C}^+,\\
				2\alpha_2-	i\ln\left(-\frac{ \overline{\delta_+(\bar z)\delta_-(\bar z)}}{{a_{+}(z) a_{-}( z)}}\right),&z\in\tilde\Gamma\cap\mathbb{C}^-,\\
					2\alpha_3+	i\ln \delta^{2}(z)
					,&z\in\left(\Gamma\setminus\tilde\Gamma\right)\cap\mathbb{C}^+,\\
				2\alpha_3-i\ln (-\overline{\delta^{2}(\bar z)}),&z\in\left(\Gamma\setminus\tilde\Gamma\right)\cap\mathbb{C}^-,
			\end{cases}\\
			&g_{+}(z)-g_{-}(z)=
			\begin{cases}
				2\alpha_4+i \ln \left(1+|r(z)|^2\right), & z \in \gamma_{(\kappa^R_{1}, E_{01})}, \\
				2\alpha_4-i \ln \left(1+|\overline{r(\bar z)}|^2\right), & z \in \gamma_{(\bar E_{01}, \kappa^R_{1})}.
			\end{cases}
		\end{align}
	\end{enumerate}
\end{lemma}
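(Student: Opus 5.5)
The plan is to treat \eqref{eq:hhh(z)} as the standard Cauchy-type solution of a scalar additive RH problem on the hyperelliptic curve, and to fix the free constants $\alpha_1,\dots,\alpha_4$ through the moment conditions at infinity.

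\textbf{Jump relations (property 4).} Write $g(z)=w(z)\,C_{\mathcal Z}[\mathcal H](z)$. On the arcs where $w$ changes sign across the contour, namely $\gamma_{(E_{01},E_{00})}$, $\gamma_{(\bar E_{00},\bar E_{01})}$ and the cuts of $\Gamma$, we have $w_+=-w_-$. The Plemelj formula $C_+-C_-=\mathcal H$ then gives
\[
g_++g_-=w_+(C_+-C_-)=w_+\mathcal H .
\]
By the definition of $\mathcal H$ this is exactly $2\alpha_t+g_t$. On $\gamma_{(\bar E_{01},E_{01})}$, which by construction is no longer a cut, $w$ is continuous, so
\[
g_+-g_-=w\,\mathcal H=2\alpha_4+g_4 .
\]
The lower half-plane formulas follow from the prescription $\mathcal H(z)=\overline{\mathcal H(\bar z)}$.

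\textbf{Schwarz symmetry (property 1).} For this step:
\begin{itemize}
\item I will check that $\overline{w(\bar z)}=w(z)$.
\item I will check that the contour $\mathcal Z$ is symmetric under conjugation with orientation reversed.
\item I will check that the data $g_1,\dots,g_4$ and $\delta$ obey the conjugation rules inherited from $\delta(z)=\overline{\delta(\bar z)}^{-1}$ and from the scattering symmetries.
\end{itemize}
Together these give $g(z)=\overline{g(\bar z)}$, provided the $\alpha_t$ are real. So I must show that the $\alpha_t$ determined next are real.

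\textbf{Asymptotics at infinity (property 2).} Expand $1/(s-z)$ as in \eqref{eq:cauchy}. This gives
\[
g(z)=-\frac{1}{2\pi i}\sum_{k\ge0}\frac{w(z)}{z^{k+1}}\widetilde M_k,\qquad \widetilde M_k=\int_{\mathcal Z}s^k\mathcal H(s)\,ds .
\]
Boundedness at infinity requires the first moments to vanish, $\widetilde M_0=\dots=\widetilde M_{N-1}=0$, where $N$ is the genus of the enlarged surface. Once these hold, $g(\infty)$ and $I_2$ are read off from $\widetilde M_N$ and $\widetilde M_{N+1}$, exactly as in the odd-genus lemma.

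The vanishing conditions are affine in $(\alpha_1,\dots,\alpha_4)$. Their coefficient matrix has entries $\int_{\mathcal Z_t}s^k\,ds/w_+(s)$, which are periods of holomorphic differentials $s^k\,ds/w$ over the $a$-cycles of the canonical homology basis in Figure~\ref{figur-cb}. After the cuts are deformed, the number of $\alpha$'s has to be matched with the number of independent normalizing conditions. The remaining moments can then be absorbed by the normalization \eqref{eq:f-g-contour}, or equivalently by replacing $w(z)$ with a polynomial multiple chosen so that only the $\alpha$-dependent conditions remain.

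\textbf{Main obstacle: invertibility and reality of the linear system.} Unique solvability follows from the nondegeneracy of the period matrix of normalized holomorphic differentials: the Riemann bilinear relations show that $\operatorname{Im}$ of the $B$-period matrix is positive definite, so the $a$-period matrix is invertible. Reality of the $\alpha_t$ should follow from conjugation symmetry of the periods. Taking conjugates of the system maps it to itself, and uniqueness then forces $\alpha_t\in\mathbb R$.

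\textbf{Boundedness (property 3).} I will check two things:
\begin{itemize}
\item At the endpoints, $w(z)C_{\mathcal Z}[\mathcal H]$ has at most $|z-E|^{1/2}\log$-type behaviour, so it stays bounded.
\item $\operatorname{Im} g$ is bounded on $\mathbb C\setminus\mathcal Z$, using that $g$ is real on $\mathbb R$ and has a finite limit at infinity.
\end{itemize}
These give boundedness of $e^{ig\sigma_3}$.
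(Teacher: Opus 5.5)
Your jump computation (Plemelj on the arcs where the enlarged $w$ changes sign, and $g_+-g_-=w\mathcal H$ on the non-cut arc $\gamma_{(\bar E_{01},E_{01})}$) and your expansion at infinity follow the paper's approach. The paper states this lemma without proof, and its odd-genus analogue only expands the Cauchy kernel while assuming the moments vanish. The genuine gap is the step you flag as the main obstacle, and the fix you propose for it does not work.

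\textbf{The moment count does not close.} Boundedness at infinity forces $\widetilde M_0=\dots=\widetilde M_{n+1}=0$. This uses the $w$ of the genus-$(n+2)$ curve, which is the $w$ you need for $g_++g_-$ to appear on $\gamma_{(E_{01},E_{00})}$. By the Schwarz symmetry of $\mathcal H$, and because conjugation maps $\mathcal Z$ to itself with orientation reversed, each $\widetilde M_k$ is automatically purely imaginary. So these are $n+2$ real linear conditions. As the lemma is written, however, there are only four real unknowns: one $\alpha_2$ serves every cut of $\tilde\Gamma$, and one $\alpha_3$ serves every cut of $\Gamma\setminus(\tilde\Gamma\cup\Gamma_0)$. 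As soon as $n\ge4$ the system is overdetermined, and for generic data $r,\delta,a_\pm$ it has no solution. Invertibility of the square $a$-period matrix says nothing about an $(n+2)\times 4$ system whose columns are sums of periods.

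Neither of your remedies helps. \eqref{eq:f-g-contour} normalizes the Abelian integrals $f,g$ that build the phase $\theta$; it does not constrain moments of $\mathcal H$. Replacing $w(z)$ by $w(z)/p(z)$ lowers the growth at infinity only by creating poles at the zeros of $p$, and removing those costs exactly as many conditions. The count closes only with one real constant per independent $a$-cycle:
\begin{itemize}
\item a separate constant on each $\Gamma_k$, $k\ge1$, with $\Gamma_0$ normalized as in the definition of $\mathcal H$;
\item one constant for the conjugate pair $\gamma_{(E_{01},E_{00})}$, $\gamma_{(\bar E_{00},\bar E_{01})}$;
\item $\alpha_4$.
\end{itemize}
That gives $n+2$ constants in total, so the lemma must first be restated in this form.

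\textbf{The nondegeneracy you cite is not the one needed.} Even with the corrected count, the $\alpha_4$-column consists of $\int_{\gamma_{(\bar E_{01},E_{01})}}s^k\,ds/w(s)$. These are integrals of holomorphic differentials across a gap between branch points not joined by a cut, i.e.\ $b$-type quantities rather than $a$-periods. In addition, the $\alpha_1$-column couples two conjugate $a$-cycles. So the relevant real square matrix is a mixed $a/b$ period matrix. You must show that a differential $p(s)\,ds/w(s)$ with real $p$ of degree at most $n+1$ vanishes identically when the following all vanish:
\begin{itemize}
\item its $a$-periods around the $\Gamma_k$;
\item the appropriate real part of its period around the conjugate pair;
\item its gap integral.
\end{itemize}
That requires a Riemann bilinear argument exploiting the real structure; $\operatorname{Im}\tau>0$ alone does not give it.

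\textbf{Two smaller points.}
\begin{itemize}
\item Your expansion gives $g(\infty)=-\widetilde M_{n+2}/(2\pi i)$. This is not the displayed $\exp\{\cdot\}$ formula involving $\mathcal H/w$, which is inconsistent with \eqref{eq:hhh(z)}. You should flag that discrepancy rather than claim to ``recover'' the formula.
\item For property 3, ``$g$ real on $\mathbb R$ and finite at infinity'' does not bound $\operatorname{Im}g$. At interior points where $\mathcal H$ jumps, for example $\kappa^{\mathrm R}_1$, where the two halves of $\gamma_{(\bar E_{01},E_{01})}$ carry different data, $g$ acquires logarithmic terms. You must verify that their coefficients are real.
\end{itemize}
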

It follows that $N^{(2)}(z)$ satisfies the following RH problem.
\begin{problem}
	Construct a meromorphic function $N^{(2)}(z) :\mathbb{C}\setminus \Sigma^{(2)}\to SL_2(\mathbb{C})$  such that:
	\begin{enumerate}
		\item $N^{(2)}(z)=I+\mathcal{O}(z^{-1}),$\quad $z\to\infty$.
		\item 	For each $z\in\Sigma^{(2)}$, the boundary values $N_\pm^{(2)}(z)$ satisfy the jump relation
		\begin{equation}\label{eq:J(z)-aasy2}
			N^{(2)}_+(z)=N_-^{(2)}(z)J^{(2)}(z),
		\end{equation}
		where
\begin{align*}
	&J^{(2)}(z) =\begin{cases}
		I+	e^{-ig(z)\hat\sigma_3}V_4^{-1}, z \in L_1\cup L_3,\quad ie^{-2i\left(t\beta^-+\alpha_1\right)}\sigma_++	ie^{2i\left(t\beta^-+\alpha_1\right)}\sigma_-,\quad\,\, z\in L_2,\\
I	-e^{-ig(z)\hat\sigma_3}\tilde V_1,z \in L_4,\qquad\quad\,\,\,\,\,\,\, 	I+e^{-ig(z)\hat\sigma_3}V_2\tilde V_1,\, z \in L_5,\quad
	I-e^{-ig(z)\hat\sigma_3}V_1,\, z \in L_6,\\
	I+	V_2\delta^2(z)\sigma_-, \,z \in \Sigma^1_{\kappa_1^{\mathrm{R}}}\cup\Sigma^1_{\kappa_2^{\mathrm{R}}},\,\,\,\,\,
I+	V_4\delta^{-2}(z)\sigma_+ \, z \in\Sigma^2_{\kappa_1^{\mathrm{R}}}\cup\Sigma^2_{\kappa_2^{\mathrm{R}}}, \\
ie^{-2\pi ic_k-2\alpha_2}\sigma_++
			ie^{2\pi ic_k+2\alpha_2}\sigma_-,z\in\tilde\Gamma,\qquad\, 	e^{2i\alpha_2\hat\sigma_3},\qquad\,\,\,\,\,\,\,\,\,\, z \in L_7, \\
	ie^{-2\pi ic_k-2\alpha_3}\sigma_++
			ie^{2\pi ic_k+2\alpha_3}\sigma_-,z\in\Gamma\setminus\tilde\Gamma,\quad 	
			-e^{-ig(z)\hat\sigma_3}V_3,\,\,\, z \in L_{8},\\
			\sigma_2	\overline{ J^{(2)}(\bar z)}\sigma_2,\quad z\in\bar \Sigma^{(2)},
	\end{cases}
\end{align*}
		\item The function $N^{(2)}(z)$ has singularities at the endpoints of $\,\,\Gamma_k$ of order at most $|z-E_k|^{-1/4}$ or $|z-\bar{E}_k|^{-1/4}$.
	\end{enumerate}
\end{problem}
Let $\mathcal{D}\subset\mathbb{C}$ denote the union of six open disks,
$$
\mathcal{D}=D_\epsilon(E_{00})\cup D_\epsilon(E_{01})\cup D_\epsilon(\kappa^{\mathrm{R}}_{1})\cup D_\epsilon(\bar E_{00})\cup D_\epsilon(\bar E_{01})\cup\mathcal{U},
$$
where $D_\epsilon(\cdot)$ denotes the disk with radius $\epsilon\equiv\epsilon(\xi)$ chosen sufficiently small so that these disks are pairwise disjoint. On $\Sigma^{(2)}\setminus\mathcal{D}$, the jump matrix $J^{(2)}(z)$ approaches the identity matrix as $t\to\infty$. We will construct the solution $N^{(2)}(z)$ by seeking a solution of the form
\begin{equation}\label{eq:N-2-dis}
	N^{(2)}(z)=\begin{cases}
		\mathcal{E}(z)N^{alg}(z), & z\in\mathbb{C}\setminus\mathcal{D},\\[6pt]
		\mathcal{E}(z)N^{alg}(z)N^{\tilde{\kappa}}(z), & z\in\mathcal{D},
	\end{cases}
\end{equation}
where $\tilde{\kappa}=\left\{E_{00},E_{01},\bar E_{00},\bar E_{01},\kappa^{\mathrm{R}}_{1},\kappa^{\mathrm{R}}_{2}\right\}$, with $N^{alg}(z)$ and $N^{\tilde{\kappa}}(z)$ being the models constructed in RH problem \ref{RH-mod} below, and where the error function $\mathcal{E}(z)$, which solves the small-norm RH problem \ref{RH-EE}, is proven to exist and to be asymptotically bounded.
\subsection{Solution of the Model Problem}
The matrix $N^{(2)}(z)$ is meromorphic away from the contour $\Sigma^{(2)}$, on which its boundary values satisfy the jump relation \eqref{eq:J(z)-aasy2}. Away from the stationary phase points $\kappa^{\mathrm{R}}_1,\kappa^{\mathrm{R}}_2$, however, the jump is uniformly close to the identity. Recalling the definition \eqref{eq:theta(z)} of $\theta(z)$, we have
\begin{equation}\label{eq:mo-pro}
	\left\|V^{(2)}-I\right\|_{L^{\infty}\left(\Sigma^{(2)}\setminus\mathcal{D}\right)}=\mathcal{O}\left(\mathrm{e}^{-ct}\right),
\end{equation}
where $c=c(\xi)>0$, implying exponential decay on $\Sigma^{(2)}\setminus\mathcal{D}$. This justifies constructing a model solution outside $\mathcal{D}$ that ignores the jumps completely.

Define
\begin{equation}
	\Sigma^{alg}=\Gamma\cup L_2\cup L_7\cup\bar{L}_2\cup\bar{L}_7.
\end{equation}
For the function $N^{alg}(z)$ defined by \eqref{eq:N-2-dis}, we have the following RH problem:
\begin{problem}\label{RH-mod}
	Construct a meromorphic function $N^{alg}(z) :\mathbb{C}\setminus \Sigma^{alg}\to SL_2(\mathbb{C})$  such that:
	\begin{enumerate}
		\item $N^{alg}(z)=I+\mathcal{O}(z^{-1}),$\quad $|z|\to\infty$.
		\item 	For each $z\in\Sigma^{alg}$, the boundary values $N^{alg}_\pm(z)$ satisfy the jump relation
		$$N^{alg}_+(z)=N^{alg}_-(z)J^{alg}(z).$$
		where
\begin{equation}\label{eq:J-alg}
J^{alg}(z)=J^{(2)}(z)|_{z\in\Sigma^{alg}.}
\end{equation}
		\item
The function $N^{alg}(z)$ has singularities at the endpoints of $\,\,\Gamma_k$ of order at most $|z-E_k|^{-1/4}$ or $|z-\bar{E}_k|^{-1/4}$.
	\end{enumerate}
\end{problem}
Let $\{a_k,b_k\}$ denote the appropriate homology basis. Then the Abelian integrals $\varphi(z)$, when considered on the upper sheet of $\mathcal{X}$, satisfy the following jump conditions:
\begin{equation}
	\varphi(z_{+})+\varphi(z_{-})=\left(0,\tau e_1,\dots,\tau e_{\frac{n}{2}}, \tau_{\frac{n}{2}}+\sum_{k=1}^{\frac{n}{2}-1}e_{k}, \tau e_{\frac{n}{2}+1}, \dots, \tau e_{n+2}\right),
\end{equation}
where the components correspond to $z\in\Gamma^\dagger$, and
\begin{equation}
	\varphi(z_{+})-\varphi(z_{-})=\sum^{\frac{n}{2}}_{k=1} e_k+\tau_{\frac{n}{2}}-\tau_{\frac{n}{2}+1}, \quad z \in \gamma_{(\bar E_{01},E_{01})}.
\end{equation}
For each choice of the real constants $\alpha_j$ $(j=1,2,3,4)$, $\beta^\pm$, and for each $t\geq0$, the RH problem~\ref{RH-mod} admits a unique solution $N^{alg}(z)$ given explicitly by
\begin{equation}
	N^{alg}(z)=\bigl(\tilde{N}^{alg}(\infty)\bigr)^{-1}\tilde{N}^{alg}(z),
\end{equation}
where $\tilde{N}^{alg}(z)=\bigl(\tilde{N}_{ij}^{alg}(z)\bigr)_{i,j=1,2}$ with components
	\begin{align}
		\tilde{N}_{s1}^{alg}(z)&=\frac{1}{2}\bigl(\hat{\nu}_0(z)+\hat{\nu}_0(z)^{-1}\bigr)\frac{\Theta\bigl(\varphi(z)+\hat{c}+d_s\bigr)}{\Theta\bigl(\varphi(z)+d_s\bigr)},\\
		\tilde{N}_{s2}^{alg}(z)&=\frac{1}{2}\bigl(\hat{\nu}_0(z)-\hat{\nu}_0(z)^{-1}\bigr)\frac{\Theta\bigl(\varphi(z)-\hat{c}-d_s\bigr)}{\Theta\bigl(\varphi(z)-d_s\bigr)},	\qquad s=1,2,
	\end{align}
and $\Theta(z)$ denotes the Riemann theta function defined in \eqref{eq:THeze}.
Here, the function $\hat{\nu}_0(z)$ is defined as
\begin{equation}
	\hat{\nu}_0(z)=\nu(z)\sqrt[4]{\left(\frac{z-E_{00}}{z-\bar{E}_{00}}\right)\left(\frac{z-E_{01}}{z-\bar{E}_{01}}\right)},
\end{equation}
and the vector-valued function $\hat{c}=\hat{c}(\xi)$ is defined by
\begin{equation}
	\hat{c}=\left(-\frac{\alpha_3}{\pi},\dots,-c_{\frac{n}{2}-1}-\frac{\alpha_3}{\pi},-\frac{t\beta^-+\alpha_1}{\pi},\left(-\frac{\alpha_2}{\pi}-\frac{t\beta^-+\alpha_1}{\pi}\right),-c_{\frac{n}{2}}-\frac{\alpha_2}{\pi},\dots,-c_{n}-\frac{\alpha_2}{\pi}\right).
\end{equation}
Let $\varphi\colon\mathcal{X}\to\mathbb{C}^{n+2}$ denote the Abel map defined in \eqref{eq:abel}.
Define vectors $d_1, d_2 \in \mathbb{C}^{n+2}$ by
\begin{equation}
	d_1 = \varphi(\mathcal{D}) + K, \qquad d_2 = -\varphi(\mathcal{D}) - K,
\end{equation}
where $\mathcal{D}$ denotes the pole divisor on $\mathcal{X}$, and $K \in \mathbb{C}^{n+2}$ denotes the vector of Riemann constants given by
\begin{equation}
	K = \frac{1}{2}\left(\sum_{k=1}^{n+2}\tau_{kk} + \sum_{k=1}^{n+2}e_k - e_{\frac{n}{2}}\right).
\end{equation}
Moreover, the solution $N^{alg}(z)$ admits the asymptotic expansion
\begin{equation}\label{eq:Nglo-solution}
	N^{alg}(z) = I + \frac{N_1^{alg}}{z} + \mathcal{O}\bigl(z^{-2}\bigr), \qquad z\to\infty,
\end{equation}
and the quantity $\hat{Q}(x,t)$ is given explicitly by
\begin{equation}\label{eq:Nglo-solution1}
	\hat{Q}(x,t) = \bigl(N_1^{alg}\bigr)_{12} = -\frac{i}{2}\operatorname{Im}\left(\sum_k E_k + E_{00} + E_{01}\right) \cdot \frac{\Theta\bigl(\varphi(\infty^+) + d_1\bigr)\Theta\bigl(\varphi(\infty^+) - \hat{c} - d_1\bigr)}{\Theta\bigl(\varphi(\infty^+) - d_1\bigr)\Theta\bigl(\varphi(\infty^+) + \hat{c} + d_1\bigr)}.
\end{equation}
\subsection{Solution of the Local Problem}
Recall that the matrix $J^{(2)}(z)-I$ decays to zero for $z\in\Sigma^{(2)}\setminus\Sigma^{alg}$ as $t\to\infty$. However, this decay is not uniform in $z$ as $z$ approaches $\Sigma^{alg}$. Thus, in neighborhoods of $\Sigma^{alg}$ within the contour $\Sigma^{(2)}$, we introduce local solutions that approximate $N^{(2)}(z)$ more accurately than $N^{alg}(z)$. These local approximations enable us to derive sharp error estimates and higher-order asymptotic corrections beyond the leading-order $\mathcal{O}(1)$ term.

In the following three subsections, we define three local solutions, denoted by $N^{E_{00}}(z)$, $N^{E_{01}}(z)$, and $N^{\kappa^{\mathrm{R}}}(z)$, which provide good approximations of $N^{(2)}(z)$ for $z$ in the three disks $D_\epsilon(E_{00})$, $D_\epsilon(E_{01})$, and $D_\epsilon(\kappa^{\mathrm{R}})$, respectively.
\subsection*{Local Parameter near $E_{00}$}
We define the function $N_0^{E_{00}}(z)$ for $z\to E_{00}$ by
\begin{equation}
	N^{E_{00}}(z)=N^{(2)}(z) e^{-i\left(\frac{i}{2} \ln \left(- \frac{\overline{r(\bar z)}\delta^2(z)}{1+|r(z)|^2}\right)+t\theta(E_{00})+g(z)\right) \sigma_3}, \quad z \in D_\epsilon(E_{00}) \cap \Sigma^{(2)} .
\end{equation}
\begin{figure}[htp]
	{
		\begin{minipage}{16cm}\centering
			\includegraphics[scale=0.55]{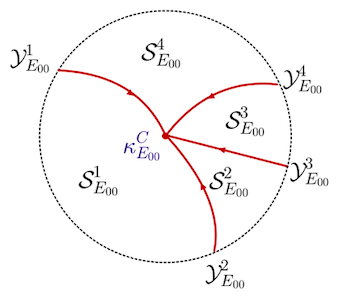}
	\end{minipage}}
	\caption{The contour  $\mathcal{Y}^j_{\kappa^C_1}$  in the  $D_\epsilon(E_{00})$ of $\kappa^C_{j}$ and the region $\{S^1_{\kappa_{s_0-1}}\}$ for $j=1,2,3,4$}
	\label{F4.11}
\end{figure}
It follows from the expression \eqref{eq:J(z)-aasy2} for the jump matrix $J^{(2)}(z)$ that $N^{E_{00}}(z)$ satisfies the following jump condition on $D_\epsilon(E_{00})$ (see Figure \ref{F4.11}):
\begin{equation}
	N_{+}^{E_{00}}(z)=N_{-}^{E_{00}}(z)J^{E_{00}}(z),\quad z\in\Sigma^{(2)}\cap D_\epsilon(E_{00}).
\end{equation}
where
\begin{equation}
	J^{E_{00}}(z)=
	\begin{cases}
		-\mathrm{e}^{-2it\theta_{E_{00}}(z)}\sigma_+, & z\in\mathcal{Y}^1_{E_{00}},\\[6pt]
		\mathrm{e}^{2it\theta_{E_{00}}(z)}\sigma_-, & z\in\mathcal{Y}^2_{E_{00}}\cup\mathcal{Y}^4_{E_{00}},\\[6pt]
	i\sigma_2, & z\in\mathcal{Y}^3_{E_{00}},
	\end{cases}
\end{equation}
where $\theta_{E_{00}}(z)=\theta(z)-\theta(E_{00})=\int_{E_{00}}^{z}\sqrt{s-E_{00}}f(s)\,ds$, and both $f(s)$ and $\int_{E_{00}}^{z}f(s)\,ds$ are analytic and nonzero at $s=E_{00}$.

In order to relate $N^{E_{00}}(z)$ to the Airy solution $N^{\mathrm{Ai}}(z)$ of Appendix \ref{App-C}, we make a local change of variables for $z\to E_{00}$ and introduce the new variable
\begin{equation}\label{eq:zeta-2}
\zeta=	\zeta(z)=\left(\frac{3it}{2}\left(\theta(z)-\theta(E_{00})\right)\right)^{2/3}=\left(\frac{3t}{2}\int_{E_{00}}^{z}f(s)\,ds\right)^{2/3}(z-E_{00}).
\end{equation}
We see that $N^{E_{00}}(\zeta)$ has the same jumps as $N^{\mathrm{Ai}}(z)$ near $E_{00}$, with the jump contours $\mathcal{Y}^j_{E_{00}}\mapsto\mathcal{Y}_j$ and $\mathcal{S}^s_{E_{00}}\mapsto\mathcal{S}_j$ for $j=1,2,3,4$. Hence we seek a parametrix $N^{E_{00}}(z)$ for $N^{(2)}(z)$ near $E_{00}$ of the form
\begin{equation}
	N^{E_{00}}(z)=N^{\mathrm{Ai}}(\zeta(z))\,\mathrm{e}^{i\left(\frac{i}{2}\ln\left(-\frac{\overline{r(\bar{z})}\,\delta^2(z)}{1+|r(z)|^2}\right)+t\theta(E_{00})+h(z)\right)\sigma_3},\quad z\in D_\epsilon(E_{00})\cap\Sigma^{(2)},
\end{equation}
where $N^{\mathrm{Ai}}(\zeta(z))$ is given in \eqref{App-as-11} with $\zeta$ given by \eqref{eq:zeta-2}.
\subsection*{Local model near $E_{01}$}
For $z$ near $E_{01}$, we first introduce the local phase function $\theta_{E_{01}}(z)$ defined by
$$
\theta_{E_{01}}(z):=\int_{E_{01}}^z d\theta=
\begin{cases}
	\theta(z)-\theta_{-}(E_{01}), & z\in S^1_{E_{01}}\cup S^2_{E_{01}}\cup S^5_{E_{01}},\\[6pt]
	\theta(z)-\theta_{+}(E_{01}), & z\in S^3_{E_{01}}\cup S^4_{E_{01}},
\end{cases}
$$
where the regions $S^j_{E_{01}}$ for $j=1,\dots,5$ are depicted in Figure \ref{F4.1}.

Next, we define the sectionally holomorphic function $\mathcal{A}(z)\equiv\mathcal{A}(z,\xi)$ on these regions:
\begin{equation}
	\mathcal{A}(z)=
	\begin{cases}
		\left(1+|r(z)|^2\right)^{\sigma_3/2}\mathrm{e}^{-it\theta_{-}(E_{01})\sigma_3}, & z\in S^1_{E_{01}}\cup S^2_{E_{01}}\cup S^5_{E_{01}},\\[8pt]
		\left(1+|r(z)|^2\right)^{-\sigma_3/2}\mathrm{e}^{-it\theta_{+}(E_{01})\sigma_3}, & z\in S^3_{E_{01}}\cup S^4_{E_{01}}.
	\end{cases}
\end{equation}
Finally, we define the transformed function $N_0^{E_{01}}(z)$ for $z\to E_{01}$ by
\begin{equation}\label{N-j-kappa-2}
	N^{E_{01}}(z):=N^{(2)}(z)\mathrm{e}^{-i\left(\frac{i}{2}\ln\left(-\frac{\overline{r(\bar{z})}\,\delta^2(z)}{1+|r(z)|^2}\right)+g(z)\right)\sigma_3}\mathcal{A}(z),\quad z\in D_\epsilon(E_{01})\cap\Sigma^{(2)}.
\end{equation}
Under this transformation, the jump condition $J^{(2)}(z)$  in equation  \eqref{eq:J(z)-aasy2} on $D_\epsilon(E_{01})\cap\Sigma^{(2)}$ reduces to
$$
J^{E_{01}}(z)=\begin{cases}
	-\mathrm{e}^{-2it\theta_{E_{01}}(z)}\sigma_+, & z\in\mathcal{Y}^1_{E_{01}},\\[6pt]
	\mathrm{e}^{2it\theta_{E_{01}}(z)}\sigma_-, & z\in\mathcal{Y}^2_{E_{01}}\cup\mathcal{Y}^4_{E_{01}},\\[6pt]
	i\sigma_2, & z\in\mathcal{Y}^3_{E_{01}}.
\end{cases}
$$
where $\theta_{E_{01}}(z):=\theta(z)-\theta(E_{01})$, and
\begin{figure}[htp]
	{
		\begin{minipage}{16.3cm}\centering			\includegraphics[scale=0.5]{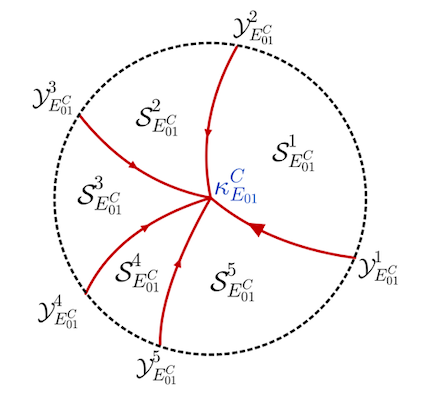}
	\end{minipage}}
	\caption{The contour in the $\epsilon$ neighborhood $D_\epsilon(E_{00})$ of $E_{00}$ and the sets $\{S^1_{\kappa_{2}}\}$}
	\label{F4.1}
\end{figure}
To construct the local solution near $E_{01}$, we first introduce the local phase function and the coordinate transformation.Define the local coordinate $\zeta(z)$ which maps the neighborhood of $z=E_{01}$ to $\zeta=0$ by
\begin{equation}
	\zeta(z)=\left(\frac{3it}{2}\left(\theta(z)-\theta(E_{01})\right)\right)^{2/3}=\left(\frac{3t}{2}\right)^{2/3}(z-E_{01})\psi_{E_{01}}(z),
\end{equation}
 $\psi_{E_{01}}(z)\equiv\psi_{E_{01}}(z,\xi)$ is analytic in $D_\epsilon(E_{01})$ with $\psi_{E_{01}}(E_{01})\neq 0$.

Observe that these jump conditions coincide with the standard jumps of the Airy function $N^{\mathrm{Ai}}(\zeta)$ (see Appendix \ref{App-C}). We therefore seek a parametrix of the form
\begin{equation}
	N^{E_{01}}(z)=N^{\mathrm{Ai}}(\zeta(z))Q^{-1}(z)\mathrm{e}^{i\left(\frac{i}{2}\ln\left(-\frac{\overline{r(\bar{z})}\,\delta^2(z)}{1+|r(z)|^2}\right)+g(z)\right)\sigma_3},\quad z\in D_\epsilon(E_{01})\setminus\Sigma^{(2)},
\end{equation}
where $N^{\mathrm{Ai}}$ is given in \eqref{App-as-11} and $\zeta(z)$ is defined by \eqref{eq:zeta-2}.
\subsection*{Local model near $\kappa^{\mathrm{R}}_{1}$ and $\kappa^{\mathrm{R}}_2$}
In this section, we consider the local parametrix near the real stationary phase point $\kappa^{\mathrm{R}}_1$ for $z\to\kappa^{\mathrm{R}}_1$.
Define the piecewise constant function $\mathcal{F}(z)\equiv\mathcal{F}(z,\xi)$ by
$$
\mathcal{F}(z)=\begin{cases}
	\mathrm{e}^{-it\theta_{-}(\kappa^{\mathrm{R}}_{1})\sigma_3}, & z\in S^1_{\kappa^{\mathrm{R}}_{1}}\cup S^2_{\kappa^{\mathrm{R}}_{1}}\cup S^6_{\kappa^{\mathrm{R}}_{1}},\\
	\mathrm{e}^{-it\theta_{+}(\kappa^{\mathrm{R}}_{1})\sigma_3}, & z\in S^3_{\kappa^{\mathrm{R}}_{1}}\cup S^4_{\kappa^{\mathrm{R}}_{1}}\cup S^5_{\kappa^{\mathrm{R}}_{1}},
\end{cases}
$$
and define the function $N_0^{\kappa^{\mathrm{R}}_{1}}(z)$ by
\begin{equation}
	N_0^{\kappa^{\mathrm{R}}_{1}}(z)=N^{(2)}(z)\mathrm{e}^{-ih(z)\sigma_3}\mathcal{F}(z),\quad z\in D_\epsilon(\kappa^{\mathrm{R}}_{1})\setminus\Sigma^{(2)}.
\end{equation}
Across the part of $\Sigma^{(2)}$ that lies in $D_\epsilon(\kappa^{\mathrm{R}}_{1})$, $N_0^{\kappa^{\mathrm{R}}_{1}}(z)$ satisfies the jump condition
$$
N_{0+}^{\kappa^{\mathrm{R}}_{1}}(z)=N_{0-}^{\kappa^{\mathrm{R}}_{1}}(z)J_0^{\kappa^{\mathrm{R}}_{1}}(z),
$$
where
\begin{align*}
	J_0^{\kappa^{\mathrm{R}}_{1}}(z)=I+\begin{cases}
		-r(z)\delta^{-2}(z)\mathrm{e}^{2it\theta_{\kappa^{\mathrm{R}}_{1}}(z)}\sigma_-, & z\in\mathcal{Y}^1_{\kappa^{\mathrm{R}}_{1}},\\
		-(1+|r(z)|^2)r(z)\delta^{2}(z)\mathrm{e}^{-2it\theta_{\kappa^{\mathrm{R}}_{1}}(z)}\sigma_+, & z\in\mathcal{Y}^5_{\kappa^{\mathrm{R}}_{1}},\\
		-\overline{r(\bar{z})}\delta^{2}(z)\mathrm{e}^{-2it\theta_{\kappa^{\mathrm{R}}_{1}}(z)}\sigma_+, & z\in\mathcal{Y}^2_{\kappa^{\mathrm{R}}_{1}},\\
		-(1+|r(z)|^2)r(z)\delta^{-2}(z)\mathrm{e}^{2it\theta_{\kappa^{\mathrm{R}}_{1}}(z)}\sigma_-, & z\in\mathcal{Y}^4_{\kappa^{\mathrm{R}}_{1}},\\
		V_r^{-1}(z)-I, & z\in\mathcal{Y}^6_{\kappa^{\mathrm{R}}_{1}},\\
		V_r(z)-I, & z\in\mathcal{Y}^3_{\kappa^{\mathrm{R}}_{1}},
	\end{cases}
\end{align*}
and all contours are oriented upward as in Figure \ref{Figure-3}.
\begin{figure}[htp]
	{
		\begin{minipage}{16cm}\centering
			\includegraphics[scale=0.4]{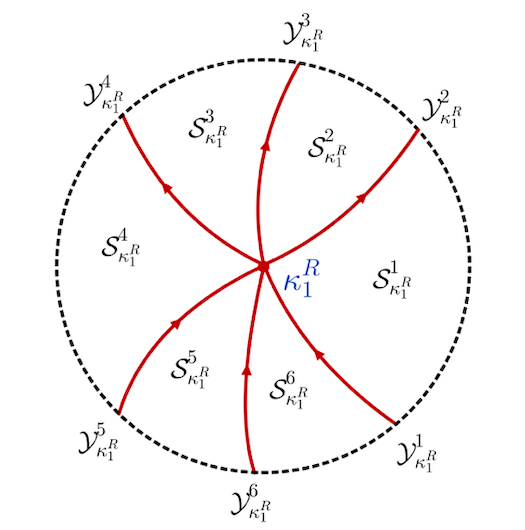}
	\end{minipage}}
	\caption{The contour and region  in the $\epsilon$ neighborhood $D_\epsilon(\kappa^{\mathrm{R}}_{1})$ of $\kappa^{\mathrm{R}}_{1}$}
	\label{Figure-3}
\end{figure}

We want to eliminate the jumps across $\mathcal{Y}^6_{\kappa^{\mathrm{R}}_{1}}\cup\mathcal{Y}^3_{\kappa^{\mathrm{R}}_{1}}$. Hence we define the complex-valued function $\tilde{\delta}(z)\equiv\tilde{\delta}(\xi,z)$ by
\begin{equation}\label{eq:defin-delta}
	\tilde{\delta}(z)=\exp\left\{\frac{1}{2\pi i}\int_{\mathcal{Y}^3_{\kappa^{\mathrm{R}}_{1}}}\frac{\ln\left(1+|r(s)|^2\right)}{s-z}\,ds-\frac{1}{2\pi i}\int_{\mathcal{Y}^6_{\kappa^{\mathrm{R}}_{1}}}\frac{\ln\left(1+|r(s)|^2\right)}{s-z}\,ds\right\},\quad z\in\mathbb{C}\setminus\left(\mathcal{Y}^3_{\kappa^{\mathrm{R}}_{1}}\cup\mathcal{Y}^6_{\kappa^{\mathrm{R}}_{1}}\right),
\end{equation}
where the branch of the logarithm is chosen such that $\ln\left(1+|r(s)|^2\right)$ is continuous on each contour and real at $s=\kappa^{\mathrm{R}}_{1}$.
\begin{lemma}
The function $\tilde\delta(z)$ has the following properties:
\begin{enumerate}
\item $\tilde{\delta}(z)$ and $\tilde{\delta}^{-1}(z)$ are bounded and analytic functions of $z \in D_\epsilon(\kappa^{\mathrm{R}}_{1}) \setminus \left(\mathcal{Y}^3_{\kappa^{\mathrm{R}}_{1}}\,\cup\,y\mathcal{Y}^6_{\kappa^{\mathrm{R}}_{1}}
\right)$.
    \item  $\tilde{\delta}(z)$ obeys the symmetry
$$
\tilde{\delta}=\left(\overline{\tilde{\delta}(\bar z}\right)^{-1}, \quad z \in \mathbb{C} \setminus\left(\mathcal{Y}^3_{\kappa^{\mathrm{R}}_{1}}\,\cup\,\mathcal{Y}^6_{\kappa^{\mathrm{R}}_{1}}
\right)
$$
\item $\tilde{\delta}(z)$ satisfies the jump condition
$$
\tilde{\delta}_{+}(z)=\tilde{\delta}_{-}(z) \times \begin{cases}1+|r(z)|^2, & z \in \mathcal{Y}^3_{\kappa^{\mathrm{R}}_{1}}
, \\ \left(1+|r(z)|^2\right)^{-1}, & z \in \mathcal{Y}^6_{\kappa^{\mathrm{R}}_{1}}.\end{cases}
$$\end{enumerate}
\end{lemma}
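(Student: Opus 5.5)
This is a Cauchy-integral computation, so I will argue directly from the definition \eqref{eq:defin-delta}. Write $\tilde\delta(z)=\exp\{\chi(z)\}$, where
\[
\chi(z)=\frac{1}{2\pi i}\int_{\mathcal{Y}^3_{\kappa^{\mathrm{R}}_{1}}}\frac{\ln(1+|r(s)|^2)}{s-z}\,ds-\frac{1}{2\pi i}\int_{\mathcal{Y}^6_{\kappa^{\mathrm{R}}_{1}}}\frac{\ln(1+|r(s)|^2)}{s-z}\,ds .
\]
The density $\ln(1+|r(s)|^2)$ is smooth on each contour, since $r$ is smooth near the real point $\kappa^{\mathrm{R}}_1$. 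It is also real-valued, because $1+|r|^2\ge 1$ there.

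\textbf{Jump relation (item 3).} By the Plemelj--Sokhotski formula, on each contour we have $\chi_+(z)-\chi_-(z)=\pm\ln(1+|r(z)|^2)$, with the sign fixed by the sign in front of each integral and the upward orientation of Figure~\ref{Figure-3}. Exponentiating gives the factor $1+|r(z)|^2$ on $\mathcal{Y}^3_{\kappa^{\mathrm{R}}_1}$ and $(1+|r(z)|^2)^{-1}$ on $\mathcal{Y}^6_{\kappa^{\mathrm{R}}_1}$. Analyticity of $\tilde\delta$ off the two contours is immediate, because $\chi$ is a Cauchy transform.

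\textbf{Symmetry (item 2).} Conjugate $\chi(\bar z)$. The density is real, and by the way the contours are drawn, complex conjugation maps $\mathcal{Y}^3_{\kappa^{\mathrm{R}}_1}$ onto $\mathcal{Y}^6_{\kappa^{\mathrm{R}}_1}$ with the orientation reversed relative to the upward orientation. Substituting $s\mapsto\bar s$ in each integral, the factor $\overline{1/(2\pi i)}=-1/(2\pi i)$ together with the swap of the two contours and the orientation reversal gives $\overline{\chi(\bar z)}=-\chi(z)$. Hence $\tilde\delta(z)\,\overline{\tilde\delta(\bar z)}=1$. The point to check carefully is the bookkeeping of signs and orientations. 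The mirror-image geometry of $\mathcal{Y}^3$ and $\mathcal{Y}^6$ about $\mathbb{R}$ is exactly what makes the two integrals exchange roles.

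\textbf{Boundedness (item 1).} This is the main obstacle. Away from the endpoints of the contours, boundedness of $\chi$ follows from the smoothness of the density. At an endpoint $p\neq\kappa^{\mathrm{R}}_1$ lying on $\partial D_\epsilon(\kappa^{\mathrm{R}}_1)$, the Cauchy transform has only a logarithmic singularity $\frac{\pm 1}{2\pi i}\ln(1+|r(p)|^2)\ln(z-p)$. The real part of this term is $\frac{\mp 1}{2\pi}\ln(1+|r(p)|^2)\arg(z-p)$, which is bounded, so $|\tilde\delta|$ stays bounded there. The same argument handles the common point $\kappa^{\mathrm{R}}_1$. There the two singular terms carry coefficients of the form $\pm\ln(1+|r(\kappa^{\mathrm{R}}_1)|^2)/(2\pi i)$ with real $\ln$, so they again contribute only bounded arguments to $\mathrm{Re}\,\chi$; in fact they essentially cancel into $\frac{1}{2\pi i}\ln(1+|r(\kappa^{\mathrm{R}}_1)|^2)\ln\frac{z-\kappa^{\mathrm{R}}_1}{z-\kappa^{\mathrm{R}}_1}$-type combinations. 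Since $|\mathrm{Re}\,\chi|$ is bounded, both $\tilde\delta=e^{\chi}$ and $\tilde\delta^{-1}=e^{-\chi}$ are bounded on $D_\epsilon(\kappa^{\mathrm{R}}_1)\setminus(\mathcal{Y}^3_{\kappa^{\mathrm{R}}_1}\cup\mathcal{Y}^6_{\kappa^{\mathrm{R}}_1})$. If the density were not real at an endpoint, the singularity would instead produce a power $|z-p|^{\pm c}$. The normalization in the statement, that the logarithm is real at $s=\kappa^{\mathrm{R}}_1$, is exactly what rules this out.
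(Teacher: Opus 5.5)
Your overall route (Plemelj for the jump, conjugation for the symmetry, endpoint behaviour of the Cauchy transform for boundedness) is the natural one. It is presumably what the paper has in mind, since the paper states this lemma without proof. Item 3 and analyticity are fine as you have them.

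\textbf{The false premise.} Items 1 and 2 rest on your claim that the density $f(s)=\ln(1+|r(s)|^2)$ is real on the contours because $1+|r|^2\ge 1$. That is not true here.
\begin{itemize}
\item The arcs $\mathcal{Y}^3_{\kappa^{\mathrm{R}}_1}$ and $\mathcal{Y}^6_{\kappa^{\mathrm{R}}_1}$ are the upper and lower pieces of $\gamma_{(\bar E_{01},E_{01})}$, which leaves $\mathbb{R}$ at $\kappa^{\mathrm{R}}_1$.
\item On these arcs, $|r(s)|^2$ means the analytic continuation $r(s)\overline{r(\bar s)}$. That is what the contour deformation producing $V_r$ on this arc yields, and it is complex in general. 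You also need $1+r(s)\overline{r(\bar s)}\neq 0$ along the arcs, which the paper tacitly assumes.
\item This is exactly why the paper has to fix the branch of the logarithm to be continuous on each arc and real at $s=\kappa^{\mathrm{R}}_1$. For a positive density that normalization would say nothing.
\end{itemize}
Your last sentence half-recognizes this. But your endpoint argument still uses reality of $f(p)$ at points $p\neq\kappa^{\mathrm{R}}_1$.

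\textbf{Consequences and fixes.}
\begin{itemize}
\item \emph{Item 2.} What makes your substitution $s\mapsto\bar s$ work is Schwarz symmetry, $\overline{f(\bar s)}=f(s)$. This holds because $1+r(s)\overline{r(\bar s)}$ is Schwarz symmetric and both branches are pinned to the same real value at $\kappa^{\mathrm{R}}_1$. A merely real density would instead require $f(\bar s)=f(s)$, which is not available.
\item \emph{Item 1.} Suppose the far endpoints $p$ lay on $\partial D_\epsilon(\kappa^{\mathrm{R}}_1)$, as you assume. Then a non-real $f(p)$ would make one of $\tilde\delta^{\pm1}$ behave like $|z-p|^{-|\operatorname{Im}f(p)|/(2\pi)}$ as $z\to p$ inside the disk, and the claim would be false. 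What saves the lemma is that the arcs run out to $E_{01}$ and to a point of the lower half-plane. These are where the boundary terms sit in the integration-by-parts formulas right after the lemma, and both are at positive distance from $\overline{D_\epsilon(\kappa^{\mathrm{R}}_1)}$. So $\kappa^{\mathrm{R}}_1$ is the only endpoint that matters, and there reality of $f$ is precisely the branch normalization.
\end{itemize}

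\textbf{The logarithmic terms add, they do not cancel.} $\mathcal{Y}^3_{\kappa^{\mathrm{R}}_1}$ starts at $\kappa^{\mathrm{R}}_1$ and $\mathcal{Y}^6_{\kappa^{\mathrm{R}}_1}$ ends there. Combined with the opposite signs in front of the two integrals, their contributions add, giving $\tilde\delta(z)=\exp\{i\rho(\ln_+(z-\kappa^{\mathrm{R}}_1)+\ln_-(z-\kappa^{\mathrm{R}}_1))+\tilde\chi(z)\}$ with $\rho=\frac{1}{2\pi}\ln(1+|r(\kappa^{\mathrm{R}}_1)|^2)\in\mathbb{R}$. Boundedness comes from $\rho$ being real, not from a cancellation. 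If the terms cancelled, $\tilde\delta$ would be continuous at $\kappa^{\mathrm{R}}_1$ and no parabolic-cylinder model would be needed.
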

Define
\begin{equation}\label{eq:N1-kappaR}
	N_1^{\kappa^{\mathrm{R}}_{1}}(z)=N_0^{\kappa^{\mathrm{R}}_{1}}(z)\tilde{\delta}(z)^{-\sigma_3},\quad z\in D_\epsilon(\kappa^{\mathrm{R}}_{1}),
\end{equation}
which satisfies the jump condition $N_{1+}^{\kappa^{\mathrm{R}}_{1}}(z)=N_{1-}^{\kappa^{\mathrm{R}}_{1}}(z)J_1^{\kappa^{\mathrm{R}}_{1}}(z)$ with
\begin{equation}\label{PC_jj}
	J_1^{\kappa^{\mathrm{R}}_{1}}(z)=I+
	\begin{cases}
		-r(z)\hat{\delta}^{-2}(z)\mathrm{e}^{2it\theta_{\kappa^{\mathrm{R}}_{1}}(z)}\sigma_-, & z\in\mathcal{Y}^1_{\kappa^{\mathrm{R}}_{1}},\\
		-\overline{r(\bar{z})}\hat{\delta}^{2}(z)\mathrm{e}^{-2it\theta_{\kappa^{\mathrm{R}}_{1}}(z)}\sigma_+, & z\in\mathcal{Y}^2_{\kappa^{\mathrm{R}}_{1}},\\
		-(1+|r(z)|^2)r(z)\hat{\delta}^{-2}(z)\mathrm{e}^{2it\theta_{\kappa^{\mathrm{R}}_{1}}(z)}\sigma_-, & z\in\mathcal{Y}^4_{\kappa^{\mathrm{R}}_{1}},\\
		(1+|r(z)|^2)r(z)\hat{\delta}^{2}(z)\mathrm{e}^{-2it\theta_{\kappa^{\mathrm{R}}_{1}}(z)}\sigma_+, & z\in\mathcal{Y}^5_{\kappa^{\mathrm{R}}_{1}},
	\end{cases}
\end{equation}
where $\hat{\delta}(z)=\tilde{\delta}(z)\delta(z)$.

To relate $N_1^{\kappa^{\mathrm{R}}_{1}}(z)$ to the parabolic cylinder solution $M^{(\mathrm{PC})}(\zeta,r)$ of Appendix~A in \cite{MR18}, we first observe that $\theta_{\kappa^{\mathrm{R}}_{1}}(z)$ has a double zero at $z=\kappa^{\mathrm{R}}_{1}$. This motivates the local change of variables near $\kappa^{\mathrm{R}}_1$:
\begin{equation}\label{eq:dsdss}
	\tilde{\zeta}=2\sqrt{t\theta_{\kappa^{\mathrm{R}}_{1}}(z)}=2\sqrt{t}(z-\kappa^{\mathrm{R}}_{1})\psi_{\kappa^{\mathrm{R}}_{1}}(z),
\end{equation}
where $\psi_{\kappa^{\mathrm{R}}_{1}}(z)$ is analytic in $D_\epsilon(\kappa^{\mathrm{R}}_{1})$. The branch of the square root is fixed by requiring $\psi_{\kappa^{\mathrm{R}}_{1}}(\kappa^{\mathrm{R}}_{1})>0$. Shrinking $\epsilon>0$ if necessary, we may assume $\operatorname{Re}\psi_{\kappa^{\mathrm{R}}_{1}}(z)>0$ for $z\in\overline{D_\epsilon(\kappa^{\mathrm{R}}_{1})}$.

Next we consider the behavior of $\tilde{\delta}(z)$ as $z\to\kappa^{\mathrm{R}}_{1}$. Integrating by parts, we find
\begin{align}
	\int_{\mathcal{Y}^3_{\kappa^{\mathrm{R}}_{1}}}\frac{\ln\left(1+|r(s)|^2\right)}{s-z}\,ds&=\ln\left(z-E_{01}\right)\ln\left(1+|r(E_{01})|^2\right)-\ln_+\left(z-\kappa^{\mathrm{R}}_{1}\right)\ln\left(1+|r(\kappa^{\mathrm{R}}_{1})|^2\right)\nonumber\\
	&\quad-\int_{\mathcal{Y}^3_{\kappa^{\mathrm{R}}_{1}}}\ln(s-z)\,d\ln\left(1+|r(s)|^2\right),\\
	\int_{\mathcal{Y}^6_{\kappa^{\mathrm{R}}_{1}}}\frac{\ln\left(1+|r(s)|^2\right)}{s-z}\,ds&=-\ln\left(z-\bar{\kappa}^{\mathrm{C}}_2\right)\ln\left(1+|r(\bar{\kappa}^{\mathrm{C}}_2)|^2\right)+\ln_-\left(z-\kappa^{\mathrm{R}}_{1}\right)\ln\left(1+|r(\kappa^{\mathrm{R}}_{1})|^2\right)\nonumber\\
	&\quad-\int_{\mathcal{Y}^6_{\kappa^{\mathrm{R}}_{1}}}\ln(s-z)\,d\ln\left(1+|r(s)|^2\right),
\end{align}
where the branches are fixed by requiring that $\ln(z-\kappa^{\mathrm{R}}_{1})>0$ for $z>\kappa^{\mathrm{R}}_{1}$ with $s\in\mathcal{Y}^3_{\kappa^{\mathrm{R}}_{1}}\cup\mathcal{Y}^6_{\kappa^{\mathrm{R}}_{1}}$, and that $\ln(z-s)$ is continuous for $s\in\mathcal{Y}^3_{\kappa^{\mathrm{R}}_{1}}\cup\mathcal{Y}^6_{\kappa^{\mathrm{R}}_{1}}$ and $z\in D_\epsilon(\kappa^{\mathrm{R}}_{1})\setminus\left(\mathcal{Y}^3_{\kappa^{\mathrm{R}}_{1}}\cup\mathcal{Y}^6_{\kappa^{\mathrm{R}}_{1}}\right)$.
Hence we can write
$$
\tilde{\delta}(z)=\exp\left\{i\rho\left(\ln_+(z-\kappa^{\mathrm{R}}_{1})+\ln_-(z-\kappa^{\mathrm{R}}_{1})\right)+\tilde{\chi}(z)\right\},\quad z\in\mathbb{C}\setminus\left(\mathcal{Y}^3_{\kappa^{\mathrm{R}}_{1}}\cup\mathcal{Y}^6_{\kappa^{\mathrm{R}}_{1}}\right),
$$
where the constant $\rho\equiv\rho(\xi)>0$ is defined by
$$
\rho=\frac{1}{2\pi}\ln\left(1+|r(\kappa^{\mathrm{R}}_{1})|^2\right),
$$
and the function $\tilde{\chi}(z)\equiv\tilde{\chi}(z,\xi)$ is defined by
\begin{align}
	\tilde{\chi}(z):=\frac{1}{2\pi i}&\Bigl(\ln\left(z-E_{01}\right)\ln\left(1+|r(E_{01})|^2\right)+\ln\left(z-\bar{\kappa}^{\mathrm{C}}_2\right)\ln\left(1+|r(\bar{\kappa}^{\mathrm{C}}_2)|^2\right)\nonumber\\
	&-\int_{\mathcal{Y}^3_{\kappa^{\mathrm{R}}_{1}}}\ln(s-z)\,d\ln\left(1+|r(s)|^2\right)+\int_{\mathcal{Y}^6_{\kappa^{\mathrm{R}}_{1}}}\ln(s-z)\,d\ln\left(1+|r(s)|^2\right)\Bigr).
\end{align}
We also consider the behavior of $\delta(z)$ as $z$ approaches $\kappa^C_1$.
Hence we can write $\delta(z)$ as
$$
\delta(z)=e^{-i \rho \ln (z-\kappa^R_1)+\chi(z)}, \quad k \in \mathbb{C} \backslash I,
$$
where the function $\chi(z) \equiv \chi(z,\xi)$ is defined by
\begin{align}
	\chi(z):=&  \frac{1}{2 \pi i} \sum_{k<\frac{n}{2}}\ln (z-B_k) \ln \frac{1+\left|r_{+}(B_k)\right|^2}{1+\left|r_{-}(B_k)\right|^2} -\frac{w(z)}{2 \pi i}\int_{I} \frac{\ln (z-s)}{w(s)} d\ln \left(1+|r(s)|^2\right) \\
	&+\frac{w(z)}{2\pi i}\int_{B_{k_0}}^{\kappa^R_2}\ln\left(1+|r(s)|^2\right)d \frac{\ln (z-s)}{w(s)} +\ln\tilde\nu(z),\quad z \in \mathbb{C} \backslash I,
\end{align}
Define
\begin{equation}
	\hat{\delta}(z)=\tilde{\delta}(z)\delta(z)=\hat{\delta}_0(\tilde{\zeta})\hat{\delta}_1(t)\hat{\delta}_2(z),\quad z\in\mathbb{C}\setminus\left(\mathcal{Y}^3_{\kappa^{\mathrm{R}}_{1}}\cup\mathcal{Y}^6_{\kappa^{\mathrm{R}}_{1}}\cup I\right),
\end{equation}
where
\begin{align}
	\hat{\delta}_0(\tilde{\zeta})&=\mathrm{e}^{-i\nu\left(\ln_{-\pi}(\tilde{\zeta})-2\ln_0(\tilde{\zeta})\right)},\quad\tilde{\zeta}\in\mathbb{C}\setminus\left(\mathbb{R}\cup i\mathbb{R}_-\right),\\
	\hat{\delta}_1(t)&=\mathrm{e}^{\frac{\pi\nu}{2}}2^{-i\nu}t^{-\frac{i\nu}{2}}\mathrm{e}^{-i\nu\ln\psi_{\kappa^{\mathrm{R}}_{1}}(\kappa^{\mathrm{R}}_{1})}\mathrm{e}^{\chi(\kappa^{\mathrm{R}}_{1})+\tilde{\chi}(\kappa^{\mathrm{R}}_{1})},\quad t>0,\\
	\hat{\delta}_2(z)&=\mathrm{e}^{-i\nu\ln\tilde{\psi}_{\kappa^{\mathrm{R}}_{1}}(z)}\mathrm{e}^{\chi(z)-\chi(\kappa^{\mathrm{R}}_{1})+\tilde{\chi}(z)-\tilde{\chi}(\kappa^{\mathrm{R}}_{1})},\quad z\in D_\epsilon(\kappa^{\mathrm{R}}_{1}),
\end{align}
with $\tilde{\psi}_{\kappa^{\mathrm{R}}_{1}}(z)=\psi_{\kappa^{\mathrm{R}}_{1}}(z)\psi^{-1}_{\kappa^{\mathrm{R}}_{1}}(\kappa^{\mathrm{R}}_{1})$, and
\begin{equation}
	\ln(z-\kappa^{\mathrm{R}}_{1})=\ln\frac{-i\tilde{\zeta}}{2\sqrt{t}\psi_{\kappa^{\mathrm{R}}_{1}}(z)}=-\frac{\pi i}{2}-\ln(2\sqrt{t})+\ln_{-\pi/2}(\tilde{\zeta})-\ln\psi_{\kappa^{\mathrm{R}}_{1}}(z),\quad z\in D_\epsilon(\kappa^{\mathrm{R}}_{1})\setminus(-\infty,\kappa^{\mathrm{R}}_{1}].
\end{equation}
Define $N_2^{\kappa^{\mathrm{R}}_{1}}(z)$ by
\begin{equation}
	N_2^{\kappa^{\mathrm{R}}_{1}}(z)=N_1^{\kappa^{\mathrm{R}}_{1}}(z)\hat{\delta}_1(t)^{\sigma_3},\quad z\in D_\epsilon(\kappa^{\mathrm{R}}_{1})\setminus\Sigma^{(2)}.
\end{equation}
Then the jumps of \eqref{PC_jj} in $D_\epsilon(\kappa^{\mathrm{R}}_{1})$ can be expressed as
\begin{equation}
	J_2^{\kappa^{\mathrm{R}}_{1}}(z)=I+\begin{cases}
		-r(z)\left(\hat{\delta}_0\hat{\delta}_2\right)^{-2}\mathrm{e}^{i\frac{\tilde{\zeta}^2}{2}}\sigma_-, & z\in\mathcal{Y}^1_{\kappa^{\mathrm{R}}_{1}},\\
		-(1+|r(z)|^2)r(z)\left(\hat{\delta}_0\hat{\delta}_2\right)^{-2}\mathrm{e}^{i\frac{\tilde{\zeta}^2}{2}}\sigma_-, & z\in\mathcal{Y}^4_{\kappa^{\mathrm{R}}_{1}},\\
		-\overline{r(\bar{z})}\left(\hat{\delta}_0\hat{\delta}_2\right)^2\mathrm{e}^{-i\frac{\tilde{\zeta}^2}{2}}\sigma_+, & z\in\mathcal{Y}^2_{\kappa^{\mathrm{R}}_{1}},\\
		(1+|r(z)|^2)r(z)\left(\hat{\delta}_0\hat{\delta}_2\right)^2\mathrm{e}^{-i\frac{\tilde{\zeta}^2}{2}}\sigma_+, & z\in\mathcal{Y}^5_{\kappa^{\mathrm{R}}_{1}},
	\end{cases}
\end{equation}
where all contours are oriented toward the origin as in Figure \ref{Figure-3}.

Define the function $\hat{\delta}_3(z)\equiv\hat{\delta}_3(z,\xi)$ by
\begin{equation}
	\hat{\delta}_3(\tilde{\zeta})=\mathrm{e}^{i\rho\ln_{-\pi}(\tilde{\zeta})},
\end{equation}
that is, $\hat{\delta}_3(z)$ equals the function $z^{i\rho}$ with branch cut along the negative imaginary axis. Since
\begin{equation}
	\hat{\delta}_0=\hat{\delta}_3\times\begin{cases}
		1, & \arg z\in(0,\pi),\\[4pt]
		\mathrm{e}^{2\pi\rho}, & \arg z\in(-\pi,-2\pi),
	\end{cases}
\end{equation}
and $\mathrm{e}^{2\pi\rho}=1+|r(\kappa^{\mathrm{R}}_{1})|^2$, we can write
\begin{equation}
	J_2^{\kappa^{\mathrm{R}}_{1}}(z)=\begin{cases}
		r_{\kappa^{\mathrm{R}}_{1}}\tilde{\zeta}^{-2i\nu}\mathrm{e}^{i\frac{\tilde{\zeta}^2}{2}}\sigma_-, & z\in\mathcal{Y}^1_{\kappa^{\mathrm{R}}_{1}},\\[6pt]
		\dfrac{r_{\kappa^{\mathrm{R}}_{1}}}{1+|r(\kappa^{\mathrm{R}}_{1})|^2}\tilde{\zeta}^{-2i\nu}\mathrm{e}^{i\frac{\tilde{\zeta}^2}{2}}\sigma_-, & z\in\mathcal{Y}^4_{\kappa^{\mathrm{R}}_{1}},\\[8pt]
		\bar{r}_{\kappa^{\mathrm{R}}_{1}}\tilde{\zeta}^{2i\nu}\mathrm{e}^{-i\frac{\tilde{\zeta}^2}{2}}\sigma_+, & z\in\mathcal{Y}^2_{\kappa^{\mathrm{R}}_{1}},\\[6pt]
		-\dfrac{\bar{r}_{\kappa^{\mathrm{R}}_{1}}}{1+|r(\kappa^{\mathrm{R}}_{1})|^2}\tilde{\zeta}^{2i\nu}\mathrm{e}^{-i\frac{\tilde{\zeta}^2}{2}}\sigma_+, & z\in\mathcal{Y}^5_{\kappa^{\mathrm{R}}_{1}},
	\end{cases}
\end{equation}
where $r_{\kappa^{\mathrm{R}}_{1}}=r(\kappa^{\mathrm{R}}_{1})\exp\left\{2i\rho\ln_{i\pi}\tilde{\zeta}-2i\rho\ln_0\tilde{\zeta}\right\}\hat{\delta}^{-2}_2(\kappa^{\mathrm{R}}_{1})$, which are exactly the jumps of the parabolic cylinder model problem in Appendix~A in \cite{MR18}.
Then we consider the case $z\to\kappa^{\mathrm{R}}_{2}$. Using a similar method, we obtain
\begin{equation}
	N^{\kappa^{\mathrm{R}}_{2}}(z)=I+\frac{1}{\hat\zeta}\begin{pmatrix}
		0&-i\beta_{12}(r_{\kappa^{\mathrm{R}}_{2})}\\[4pt]
			i\beta_{21}(r_{\kappa^{\mathrm{R}}_{2}})&0
		\end{pmatrix}+\mathcal{O}(\hat\zeta^{-2}),
	\end{equation}
	where
	\begin{equation}\label{k--222}
		r_{\kappa^{\mathrm{R}}_{2}}=r(\kappa^{\mathrm{R}}_{2})\exp\!\left\{4it\rho \psi^2_{\kappa^{\mathrm{R}}_{2}}(z)+2\hat\chi(z)\right\},
	\end{equation}
	and the local scaling parameter is defined by
	\begin{equation}\label{eq:dsds}
		\hat\zeta=2\sqrt{t\theta_{\kappa^{\mathrm{R}}_{2}}(z)}=2\sqrt{t}(z-\kappa^{\mathrm{R}}_{2})\psi_{\kappa^{\mathrm{R}}_{2}}(z),
	\end{equation}
	with $\psi_{\kappa^{\mathrm{R}}_{2}}(z)$ analytic in $D_\epsilon(\kappa^{\mathrm{R}}_{2})$. The function $\hat\chi(z)$ is given by
	\begin{align}
		\hat\chi(z):={}& - \frac{1}{2\pi i} \ln(z-B_{k_0}) \frac{\log(1+|r(B_{k_0})|^2)}{w(B_{k_0})} -\frac{w(z)}{2\pi i}\int_{B_{k_0}}^{\kappa^{\mathrm{R}}_{2}} \ln(z-s)\,\mathrm{d}\!\left(\frac{\ln(1+|r(s)|^2)}{w(s)}\right)\notag\\
		&+\frac{w(z)}{2\pi i}\int_{-\infty}^{\kappa^{\mathrm{R}}_{1}}\frac{\log(1+|r(s)|^2)}{w(s)(s-z)}\,\mathrm{d}s+\ln\tilde\nu(z), \quad z \in \mathbb{C} \setminus I.
	\end{align}
\subsection{Solution of the Small norm problem}
	Utilizing the function $N^{alg}(z)$ defined by the Riemann-Hilbert (RH) problem \ref{RH-mod}, together with the local model functions $N^{E_{00}}(z)$, $N^{\kappa^{\mathbb{R}}_{s_0}}(z)$ and $N^{\kappa^{\mathrm{R}}_{1}}(z)$, we implicitly define an unknown function $\mathcal{E}(z)$ that is analytic in the domain $\mathbb{C}\setminus\Sigma^{\mathcal{E}}$, where the contour $\Sigma^{\mathcal{E}}$ is given by
	\begin{equation}
		\Sigma ^{\mathcal{E}}=\left(\Sigma^{(2)} \setminus\left(\Sigma^{alg} \cup \overline{D}\right)\right) \cup \partial D_{\kappa^{\mathrm{R}}_{1}} \cup \partial D_{E_{00}} \cup \partial D_{\kappa^C_{s_0}} \cup \partial D_{\bar{\kappa}^C_{s_0-1}} \cup \partial D_{\bar{\kappa}^C_{s_0}}.
	\end{equation}
	We endow the boundary $\partial\Sigma^{\mathcal{E}}$ with the clockwise orientation. It is straightforward to verify that the function $\mathcal{E}(z)$ satisfies the following RH problem.
	
	\begin{problem}\label{RH-EE}
		Construct a meromorphic function $\mathcal{E}(z):\mathbb{C}\setminus \Sigma^{\mathcal{E}}\to \mathrm{SL}_2(\mathbb{C})$ such that:
		\begin{enumerate}
			\item As $|z|\to\infty$, $\mathcal{E}(z)=I+\mathcal{O}(z^{-1})$.
			\item For every $z\in\Sigma^{\mathcal{E}}$, the boundary values $\mathcal{E}_\pm(z)$ satisfy the jump condition
			$$\mathcal{E}_+(z)=\mathcal{E}_-(z)J_{\mathcal{E}}(z),$$
			where the jump matrix $J_{\mathcal{E}}(z)$ is defined as
			\begin{equation}\label{jump-Ee}
				J_{\mathcal{E}}(z)=
				\begin{cases}
					N^{alg}(z) J^{(2)}(z)\left(N^{alg}(z)\right)^{-1}, & z \in \Sigma^{(2)} \setminus \overline{\mathcal{D}}, \\
					N^{alg}(z)N^{(\tilde\kappa)}(z)\left(N^{alg}(z)\right)^{-1}, & z \in  \partial D_\epsilon(\tilde\kappa).
				\end{cases}
			\end{equation}
		\end{enumerate}
			Here, $\tilde\kappa$ is defined in equation \eqref{eq:N-2-dis}.
	\end{problem}
	
 Starting from the jump matrix \eqref{eq:J(z)-aasy2}, and applying identities \eqref{eq:mo-pro} (for $z\in\mathbb{C}\setminus\Sigma_{\kappa^{\mathrm{R}}_{1}}$), \eqref{eq:zeta-2}, combined with the boundedness of $N^{alg}(z)$ on $D_{\kappa^{\mathrm{R}}_{1}}$, we derive the asymptotic estimates for the jump matrix:
	\begin{align}
		\left|J_{\mathcal{E}}(z)-I\right|=
		\begin{cases}
			\mathcal{O}\left(e^{-ct}\right),& z\in\Sigma^{(2)}\setminus D,\\
			\mathcal{O}(t^{-N}), &z\in D\setminus D_\epsilon(\kappa^{\mathrm{R}}_{1}),\ N\ge1,\\
			\mathcal{O}(t^{-1/2}),& z\in D_\epsilon(\kappa^{\mathrm{R}}_{1}),
		\end{cases}
	\end{align}
	where $c=c(\xi)>0$ is a positive constant as $t\to\infty$.
	
	To reconstruct the solution $q(x, t)$ of equation \eqref{eq:fNLS}, we require the large-$z$ asymptotic behavior of the solution to the RH problem \ref{RH-EE}. We now present the large-$z$ expansion of $\mathcal{E}(z)$:
	\begin{equation}
		\mathcal{E}(z)=I+\frac{\mathcal{E}_1(z)}{z}+\mathcal{O}(z^{-2}),
	\end{equation}
	where the coefficient $\mathcal{E}_1(z)$ takes the form
	\begin{align}
		\mathcal{E}_1(z)=&\frac{1}{2i\sqrt{t}(z-\kappa^{\mathrm{R}}_{1}) \psi_{\kappa^{\mathrm{R}}_{1}}(z)} N^{alg}(\kappa^{\mathrm{R}}_{1})
		\begin{pmatrix}
			0 & \beta_{12}\left(r_{\kappa^{\mathrm{R}}_{1}}\right) \\
			-\beta_{21}\left(r_{\kappa^{\mathrm{R}}_{1}}\right) & 0
		\end{pmatrix}
		\left(N^{alg}(\kappa^{\mathrm{R}}_{1})\right)^{-1}\\
	&	+\frac{1}{2i\sqrt{t}(z-\kappa^{\mathrm{R}}_{2}) \psi_{\kappa^{\mathrm{R}}_{2}}(z)} N^{alg}(\kappa^{\mathrm{R}}_{2})
		\begin{pmatrix}
			0 & \beta_{12}\left(r_{\kappa^{\mathrm{R}}_{2}}\right) \\
			-\beta_{21}\left(r_{\kappa^{\mathrm{R}}_{2}}\right) & 0
		\end{pmatrix}
		\left(N^{alg}(\kappa^{\mathrm{R}}_{2})\right)^{-1}+
		\mathcal{O}\left(t^{-1}\right),
	\end{align}
	and the constants $=\beta_{12}(r_{\kappa^{\mathrm{R}}_{1}}),\beta_{21}(r_{\kappa^{\mathrm{R}}_{t}})$ are explicitly given by
	\begin{equation}\label{beta-12-21}
		\beta_{12}=\beta_{12}(r_{\kappa^{\mathrm{R}}_{t}})=\frac{\sqrt{2 \pi} e^{i \pi / 4} e^{-\pi \rho / 2}}{r_{\kappa^{\mathrm{R}}_{1}} \Gamma(-i \kappa^{\mathrm R}_t)},\quad
		\beta_{21}=\beta_{21}(r_{\kappa^{\mathrm{R}}_{t}})=\frac{-\sqrt{2 \pi} e^{-i \pi / 4} e^{-\pi \rho / 2}}{\overline{r}_{\kappa^{\mathrm{R}}_{t}} \Gamma(i \kappa_t^{\mathrm R})},\quad t=1,2.
	\end{equation}

Reversing the sequence of transformations \eqref{tran-1} and \eqref{eq:trans-2},
the solution to the RH problem \ref{RH2-2} admits the representation
\[
N(z)=e^{\sigma_3\left(\ln\delta(\infty)+ig(\infty)\right)}\mathcal{E}(z)N^{alg}(z)e^{-ig(z)\sigma_3}
\bigl(G^{(1)}(z)\bigr)^{-1}G^{-1}(z)\delta^{-\sigma_3}(z),\quad z\in\mathbb{C}\setminus U.
\]
Taking the asymptotic expansion as $z\to\infty$, we obtain
\begin{equation}
	N(z)=e^{\left(\ln\delta(\infty)+ig(\infty)\right)\hat\sigma_3}
	\left(I+\frac{\mathcal{E}_1}{z}+\mathcal{O}(z^{-2})\right)\left(I+\frac{N_1^{alg}}{z}+\mathcal{O}(z^{-2})\right)\left(I-\frac{I_2}{z}+\mathcal{O}(z^{-2})\right)
	\left(I-\frac{I_1}{z}+\mathcal{O}(z^{-2})\right),
\end{equation}
where $I_1$ and $I_2$ are given in \eqref{delt-ing} and \eqref{ginfty}. Recall the reconstruction formula for the solution of the fNLS equation:
\begin{align}
	q(x,t)
	&= 2i\lim_{z\to\infty}z\,N_{12}(z) \nonumber\\
	&= 2ie^{2\left(\ln\delta(\infty)+ig(\infty)\right)}
	\left(\left(\mathcal{E}_1\right)_{12}+\left(N_1^{alg}\right)_{12}\right).
\end{align}
This completes the proof of Theorem \ref{theorem-1}.
	\begin{appendices}

\section{Riemann manifolds}\label{APP-A}
In order to present an explicit solution of the RH problem to express $M^{alg}(z)$, we need several ingredients from the theory of the $n$ genus of Riemann manifolds $\mathcal{X}$. Let a set of points $\left\{E_k, \bar{E}_k\right\}_{k=0}^n$ in the complex plane be given,
\begin{equation}\label{genus-surface}
w^2=P(z)=\prod_{k=0}^n\left(z-E_k\right)\left(z-\bar{E}_k\right)
\end{equation}
The upper and lower sheets of $\mathcal{X}$ are denoted by $\mathcal{X}_{+}$and $\mathcal{X}_{-}$respectively; they are fixed by the relations
$$
\sqrt{P(z)}= \pm z^{n+1}\left(1+\mathrm{O}\left(z^{-1}\right)\right), \quad z=\pi(\mathcal{P}) \rightarrow \infty, \quad P \in \mathcal{X}_{ \pm}
$$
where $z=\pi(P)$ is the standard projection of $P=(w, z) \in \mathcal{X}$ on the Riemann sphere $\mathbb{C P}^1$. Thus each point on the $z$-plane has two preimages $P_{ \pm}=\mathcal{X}_{ \pm}$, except for the branch points. Denote the preimage of $z=\infty$ on $\mathcal{X}_{ \pm}$by, respectively, $\infty^{ \pm}$. With the inclusion of two points $\left(\infty^{+}, \infty^{-}\right), \mathcal{X}$ becomes a compact Riemann surface of genus $n$. The square root $\sqrt{P(z)}$ turns into a meromorphic function on its own compact Riemann surface $\mathcal{X}$, which have $2 n+2$ zeros at $E_k$ and $\bar E_k, k=0,1, \ldots, n$, and two poles at $\infty^{+}$and $\infty^{-}$, each of multiplicity $n+1$.

We next discuss the choice of branch cuts and the choice of a canonical homology basis $\{a_j,b_j\}_j$ for $\mathcal{X}$. One possibility is to let all branch cuts be vertical segments and adopt the standard choice of canonical homology basis $\{a_j,b_j\}_j$. Here $a_j$ is a closed curve on the upper sheet going conterclockwise around the interval $\bar E_k,E_k$, and $b_j$ start from $\bar E_0,E_0$, goes on the upper sheet to $(\bar E_j,E_j)$, and returns ont he lower sheet tot he starting point. the cycles are chosen so that their intersection matrix read
\begin{equation}
	a_j\circ b_j=\delta_{ij},\quad i,j=1\dots,n,
\end{equation}
Here $\delta_{ij}$ denotes the Kronecker delta.

We let $\zeta=(\zeta_1,\zeta_2,\dots,\zeta_{n})$ be the normalized basis of $\mathcal{H}^1(\mathcal{X})$ due to the canonical homology basis $\{a_j,b_j\}$ in the sense that $\zeta_j$ are holomorphic differentials such that
\begin{equation}
\int_{a_i} \zeta_j=\delta_{i j}
\end{equation}
Since $w$ satisfies $w(z^+)=-w(z^-)$, we have the same symmetry for $\zeta$:
\begin{equation}
\zeta\left(z^{+}\right)=-\zeta\left(z^{-}\right) .
\end{equation}
We let $\tau=\left(\tau_{jl}\right)$ denote the $n\times n$ period matrix, defined by
\begin{equation}
\tau_{j l}:=\int_{b_j} \zeta_l .
\end{equation}
It is standard result that $\tau$ then  is symmetric and has positive definite imaginary part,
\begin{equation}
\int_{a_j^{+}} \zeta=\frac{1}{2} \int_{a_j} \zeta=\frac{1}{2} \mathrm{e}_j \quad \text { and } \quad \int_{b_j^{+}} \zeta=\frac{1}{2} \int_{b_j} \zeta=\frac{1}{2} \tau_j,
\end{equation}
where $a_j^{+}, b_j^{+}$denote the restrictions of $a_j, b_j$ to the upper sheet, $e_j$ denotes the $j$ th column of the $n \times n$ identity matrix $I$, and $\tau^{(j)}$ denotes the $j$ th column of $\tau=\left(\tau_{j l}\right)$.

 The theta function $\Theta$ associated to $\tau$ is defined by
\begin{equation}\label{eq:THeze}
\Theta(z):=\sum_{N \in \mathbb{Z}^n} \mathrm{e}^{2 \pi i\left(\frac{1}{2} N^T \tau N+N^T z\right)}, \quad z \in \mathbb{C}^n
\end{equation}
It is holomorphic since the sum converges nicely due to \cite{AAA06} . We will now show some simple properties of $\Theta(z)$. First observe that
\begin{equation}
	\Theta(-z)=\Theta(z)
\end{equation}
and second let $n,n'\in\mathbb{Z}^n$, then
\begin{equation}
	\begin{aligned}
		\Theta\left({z}+{n}+{\tau}{n}^{\prime}\right)=\exp 2 \pi i\left(-\left\langle{n}^{\prime}, {z}\right\rangle-\frac{\left\langle{n}^{\prime}, {\tau} {n}^{\prime}\right\rangle}{2}\right) \Theta({z}) .
	\end{aligned}
\end{equation}
We let $\varphi: M \rightarrow \mathbb{C}^n$ denote the Abel-type map with base point $\bar E_0$, i.e.,
\begin{equation}\label{eq:abel}
\varphi(P)=\int_{\bar{E}_0}^z \zeta, \quad z \in M
\end{equation}
Besides, for any non-special integral divisor ${D}=P_1+\ldots+P_n$ of degree $n$, there exists a vector $w({D})$ such that the Riemann theta function $\Theta(\varphi(z)+w(D))$ defined on $\mathcal{X}$ with cuts along of the cycles $a_j$ and $b_j$ has precisely $n$ zeros at $P_j, j=1, \ldots, n$. The vector $w({D})$ is defined by
\begin{equation}
	w(D)=-\varphi(D)-K
\end{equation}
where the Riemann constant vector $K$ is defined, component-wise by
\begin{equation}\label{eq:const}
	K_j=\frac{1}{2}-\frac{\tau_{jj}}{2}+\sum_{i=1,i\ne j}\int_{a_i}\left(\varphi_j\right)\varphi_i.
\end{equation}
\section{The Painlevé II Model RH Problem }\label{App-B}
The Painlev\'{e} II equation takes the form
\begin{equation}\label{PE-II}
	u_{qq}(q)=2u^3(q)+qu(q), \quad s \in \mathbb{R}.
\end{equation}
\begin{problem}\label{RHBB}
	Construct a meromorphic function $N^P(\lambda):\mathbb{C} \setminus L$ such that:
with
$$
L_j:=\left\{z \in \mathbb{C} \left\lvert\, \arg z=\frac{(2k-1)\pi}{4} \pi\right.\right\}, \quad j=1, 2,3,4.
$$
as shown in Figure \ref{fig-a-a}.
\begin{figure}
	{
		\begin{minipage}{15cm}\centering
			\includegraphics[scale=0.35]{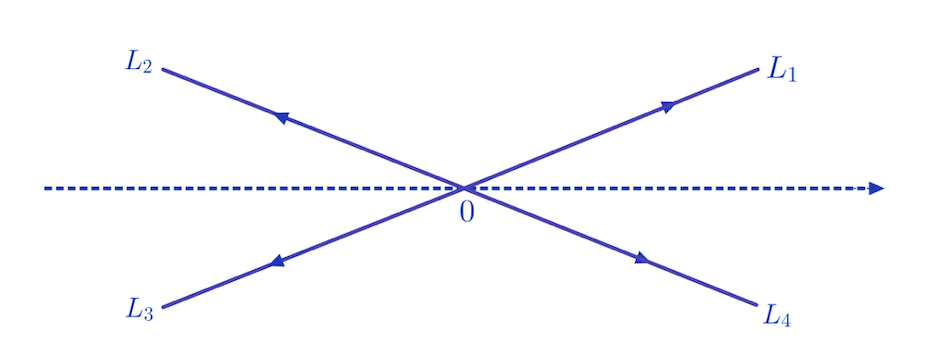}
	\end{minipage}}
	\caption{The contour $L_j, j=1,2,3,4$ in the for the RH problem \ref{RHBB}.}
	\label{fig-a-a}
\end{figure}
\begin{enumerate}
	\item As $\lambda \rightarrow \infty$, we have
	\begin{equation}\label{App-B-psi}
		N^P(\lambda)=I+\lambda^{-1} N_1^P+\mathcal{O}\left(\lambda^{-2}\right),
	\end{equation}
	where
	\begin{equation}\label{eq:Pe-e}
		N_1^P=\frac{1}{2}\begin{pmatrix}
			-i \int_p^{\infty} u(\lambda)^2 d \lambda & u(p) \\
			u(p) & i \int_p^{\infty} u(\lambda)^2 \mathrm{~d} \lambda
	\end{pmatrix}.
	\end{equation}
\item For each $z\in L$, the boundary values $N^P_\pm(\lambda)$ satisfy the jump relation
$$
N^P_{+}(\lambda)=N^P_{-}(\lambda) e^{-i\left(\frac{4}{3} \lambda^3+s \lambda\right) \hat{\sigma}_3}{P}(\lambda),
$$
where
$$
J_{P}(\lambda)= \begin{cases}\begin{pmatrix}
		1 & 0 \\
		\rho & 1
	\end{pmatrix}, & \lambda \in L_1, \qquad\,\,\,
\begin{pmatrix}
		1 & 0 \\
		-\rho & 1
\end{pmatrix},  \qquad \lambda \in L_{2}, \\
\begin{pmatrix}
		1 &- \rho \\
		0 & 1
\end{pmatrix} & \lambda \in L_{3}, \qquad\,\,
\begin{pmatrix}
		1 & \rho \\
		0 & 1
	\end{pmatrix},\qquad\quad  \lambda \in L_4.\end{cases}
$$
\item $N^P(\lambda)$ is bounded near the origin.
\end{enumerate}
\end{problem}
Then
$$
u(p)=2\left(N_1^P\right)_{12}=2\left(N_1^P\right)_{21}
$$
solves the Painlev\'{e} II equation \eqref{PE-II}. Also, a result due to Hastings and McLeod \cite{HAS} states that, for any $a=\operatorname{Im}\rho$, there exists a unique solution to the homogeneous Painlev\'{e} II equation \eqref{PE-II} that behaves as
\begin{equation}\label{app-b-as}
	u(p)=a \operatorname{Ai}(p)+\mathcal{O}\left(p^{-\frac{1}{4}} e^{-\frac{4}{3} p^{3 / 2}}\right), \quad s \rightarrow+\infty,
\end{equation}
where $\operatorname{Ai}(s)$ denotes the Airy function.

\section{Exact Solution in Terms of Airy Functions}\label{App-C}
Let $Y:=Y_1 \cup \cdots \cup Y_4 \subset \mathbb{C}$ denote the union of the four rays
$$
\begin{array}{ll}
	Y_1:=\{s \mid 0 \leq s<\infty\}, & Y_2:=\left\{\left.s \mathrm{e}^{\frac{2 i \pi}{3}} \right\rvert\, 0 \leq s<\infty\right\}, \\
	Y_3:=\{-s \mid 0 \leq s<\infty\}, & Y_4:=\left\{\left.s \mathrm{e}^{-\frac{2 i \pi}{3}} \right\rvert\, 0 \leq s<\infty\right\},
\end{array}
$$
oriented toward the origin as in Figure \ref{F4.111}.
Define the open sectors $\left\{S_j\right\}_1^4$ by
$$
\begin{array}{ll}
	S_1:=\left\{\arg k \in\left(0, \frac{2 \pi}{3}\right)\right\}, & S_2:=\left\{\arg k \in\left(\frac{2 \pi}{3}, \pi\right)\right\} \\
	S_3:=\left\{\arg k \in\left(\pi, \frac{4 \pi}{3}\right)\right\}, & S_4:=\left\{\arg k \in\left(\frac{4 \pi}{3}, 2 \pi\right)\right\}
\end{array}
$$
\begin{figure}[htp]
	{
		\begin{minipage}{15cm}\centering
			\includegraphics[scale=0.3]{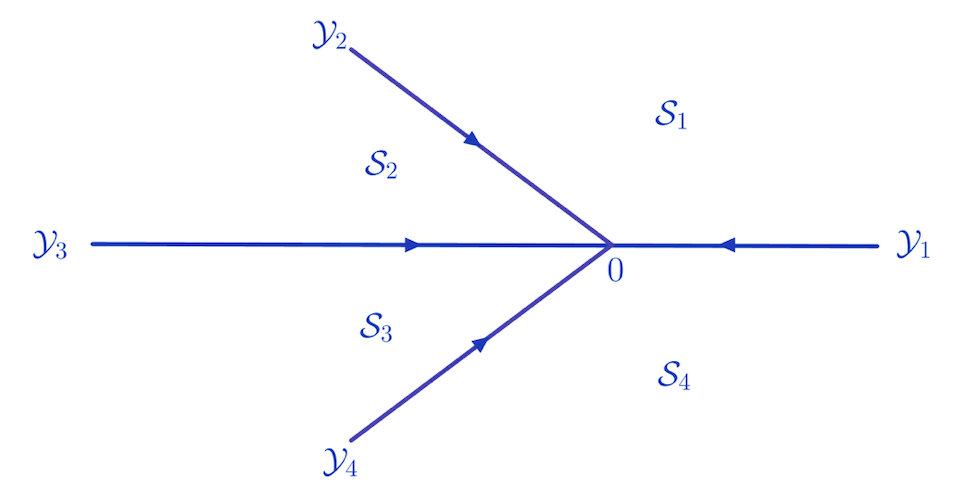}
	\end{minipage}}
	\caption{The jump contour $\mathcal{Y}_j, j=1,2,3,4$ for the RH problem \ref{RHCC12}.}
	\label{F4.111}
\end{figure}
\begin{problem}\label{RHCC12}
	Construct a meromorphic function $N^{Ai}(\zeta) :\mathbb{C}\setminus \Sigma^{Ai}\to SL_2(\mathbb{C})$  such that:
\begin{enumerate}
	\item $N^{Ai}(\zeta)=I+\mathcal{O}(z^{-1}),$\quad $|z|\to\infty$.
	\item 	For each $\zeta\in\Sigma^{Ai}$, the boundary values $N^{Ai}_\pm(\zeta)$ satisfy the jump relation
	$$N^{Ai}_+(\zeta)=N^{Ai}_-(\zeta)J^{Ai}(\zeta),\quad \zeta \in Y \backslash\{0\},$$
	where
$$
J^{Ai}(\zeta):= \begin{cases}\begin{pmatrix}
		1 & -\mathrm{e}^{-\frac{4}{3} \zeta^{3 / 2}} \\
		0 & 1
\end{pmatrix}, & \zeta \in Y_1, \\
\begin{pmatrix}
		1 & 0 \\
		\mathrm{e}^{\frac{4}{3} \zeta^{3 / 2}} & 1
\end{pmatrix}, & \zeta \in Y_2 \cup Y_4, \\
\begin{pmatrix}
		0 & 1 \\
		-1 & 0
\end{pmatrix}, & \zeta \in Y_3 .\end{cases}
$$
\item  For each integer $N \geq 0$, the functions $N_{as, N}^{Ai}(\zeta)$ and $N_{as, N}^{Ai, \text { inv }}(\zeta)$ are analytic for $\zeta \in \mathbb{C} \backslash(-\infty, 0]$ and satisfy the following jump relations on the negative real axis:
\begin{align}
&	N_{as, N+}^{Ai}(\zeta)=N_{as, N-}^{Ai}(\zeta)\begin{pmatrix}
		0 & 1 \\
		-1 & 0
\end{pmatrix}, & \zeta<0 ,\\\label{eq:App-jm}
&	N_{as, N+}^{Ai, \text { inv }}(\zeta)=\begin{pmatrix}
		0 & -1 \\
		1 & 0
\end{pmatrix} N_{as, N-}^{Ai, \text { inv }}(\zeta), & \zeta<0.
\end{align}
 For each integer $N \geq 0$, define the asymptotic approximations $N_{as, N}^{Ai}(\zeta)$ and $N_{as, N}^{Ai, \text { inv }}(\zeta)$ of $N^{Ai}(\zeta)$ and $N^{Ai}(\zeta)$, respectively, by
\begin{align}
&
N_{as, N}^{Ai}(\zeta)=\frac{\mathrm{e}^{\frac{i \pi}{12}}}{2 \sqrt{\pi}} \sum_{k=0}^N \frac{1}{\left(\frac{2}{3} \zeta^{3 / 2}\right)^k} \zeta^{-\frac{1}{4} \sigma_3}\begin{pmatrix}
		(-1)^k u_k & u_k \\
		-(-1)^k v_k & v_k
\end{pmatrix} e^{-\frac{\pi i}{4} \sigma_3}, & \zeta \in \mathbb{C} \backslash Y, \\\label{eq:app-inv}
&N_{as, N}^{Ai, \text { inv }}(\zeta)=\sqrt{\pi} \mathrm{e}^{-\frac{i \pi}{12}} \sum_{k=0}^N \frac{1}{\left(\frac{2}{3} \zeta^{3 / 2}\right)^k} \mathrm{e}^{\frac{\pi i}{4} \sigma_3}\begin{pmatrix}
		v_k & -u_k \\
		(-1)^k v_k(-1)^k u_k
\end{pmatrix} \zeta^{\frac{1}{4} \sigma_3}, & \zeta \in \mathbb{C} \backslash Y,
\end{align}
where the real constants $\left\{u_j, v_j\right\}_0^{\infty}$ are defined by $u_0:=v_0:=1$ and
$$
u_k:=\frac{(2 k+1)(2 k+3) \ldots(6 k-1)}{(216)^k k!}, \quad v_k:=\frac{6 k+1}{1-6 k} u_k, \quad k=1,2, \ldots
$$
\item The functions  $N_{as, N}^{Ai}(\zeta)$ and $N_{as, N}^{Ai, \text { inv }}(\zeta)$ approximate $N^{Ai}(z)$ and its inverse as $\zeta \rightarrow \infty$ in the sense that
\begin{align}\label{App-as-1}
&	\left(N^{Ai}(z)\right)^{-1} N_{as, N}^{Ai}(\zeta) =I+\mathrm{O}\left(\zeta^{-\frac{3(N+1)}{2}}\right), & \zeta \rightarrow \infty ,\\\label{App-as-11}
&	N^{Ai}(z) =N_{as, N}^{Ai}(\zeta) +\mathrm{O}\left(\zeta^{-\frac{3(N+1)}{2}}\right), & \zeta \rightarrow \infty,
\end{align}

where the error terms are uniform with respect to $\arg \zeta \in[0,2 \pi]$.
\end{enumerate}
\end{problem}
\end{appendices}
\section*{acknowledgments}
The authors express their sincere gratitude to the anonymous reviewers for their careful reading of the manuscript and their insightful comments and suggestions. The authors are also grateful to Dr. Yiling Yang and Taiyang Xu for their valuable suggestions on this work. This research was supported by the National Natural Science Foundation of China under Grant Nos. 12271104.
\subsection*{Ethical Statement}
This manuscript is original and has not been published previously, nor is it under consideration for publication elsewhere. The study has not been divided into several parts to increase the number of submissions, and no portion of this work has been submitted to multiple journals simultaneously or sequentially.
\section*{Data Availability Statements}
All data supporting this study are included in the article.
\medskip
\let\em=\it

\makeatletter
\def\@biblabel#1{#1.}
\def\doibase{http://dx.doi.org/}
\def\reftitle#1{``#1''}
\def\booktitle#1{\textit{#1}}
\makeatother

\end{document}